\newtheorem{rmk}{Remark}[section]
\newtheorem{lemma}{Lemma}[section]
\newtheorem{theorem}{Theorem}[section]
\newtheorem{corollary}{Corollary}[section]
\newtheorem{definition}{Definition}[section]
\newtheorem{ass}{Assumption}[section]
\newtheorem{pro}{Problem}[section]
\numberwithin{equation}{section}
\DeclareMathOperator*{\argmax}{arg\,max}
\newcommand\sig{\sigma}
\newcommand\lam{\lambda}
\newcommand\Del{\Delta}
\newcommand\Ac{\mathcal{A}}
\newcommand\Fc{\mathcal{F}}
\newcommand\Hc{\mathcal{H}}
\newcommand\Jc{\mathcal{J}}
\newcommand\Uc{\mathcal{U}}
\newcommand\Vc{\mathcal{V}}
\newcommand\Zc{\mathcal{Z}}
\newcommand{\Eb}{\mathbb{E}}
\newcommand{\Fb}{\mathbb{F}}
\newcommand{\Ib}{\mathbb{I}}
\newcommand{\Kb}{\mathbb{K}}
\newcommand{\Mb}{\mathbb{M}}
\newcommand{\Nb}{\mathbb{N}}
\newcommand{\Pb}{\mathbb{P}}
\newcommand{\Rb}{\mathbb{R}}
\newcommand{\dd}{\mathrm{d}}
\newcommand{\wn}{\widetilde{N}}
\newcommand{\wx}{\widehat{X}}
\newcommand{\wz}{\widehat{\Zc}}
\begin{document}

\title{Mean-Variance Portfolio Selection in Contagious Markets}

\author{Yang Shen\thanks{School of Risk and Actuarial Studies and CEPAR, University of New South Wales, Sydney, NSW 2052, Australia. Email:
\url{y.shen@unsw.edu.au}} \qquad
Bin Zou\thanks{Department of Mathematics, University of Connecticut, Storrs, CT, 06269-1069, USA.
Email: \url{bin.zou@uconn.edu}}}

\date{First Version: February 19, 2020\\This Version: \today \\ Accepted to \emph{SIAM Journal on Financial Mathematics}}
\maketitle

%maintext
\begin{abstract}
\noindent
We consider a mean-variance portfolio selection problem in a financial market with contagion risk.
The risky assets follow a jump-diffusion model, in which jumps are driven by a multivariate Hawkes process
with mutual-excitation effect.
The mutual-excitation feature of the Hawkes process captures the contagion risk in the sense that each price jump of an asset increases the likelihood of future jumps not only in the same asset but also in other assets.
We apply the stochastic maximum principle, backward stochastic differential equation theory, and linear-quadratic control technique to solve the problem and obtain the efficient strategy and efficient frontier in semi-closed form, subject to a non-local partial differential equation.
Numerical examples are provided to illustrate our results.
\end{abstract}

\noindent {\it Keywords:}
Efficient strategy; Hawkes process; Jump-diffusion; Linear-quadratic control; Optimal investment; Stochastic maximum principle

\noindent AMS subject classifications: 91G10, 91G80, 93E20

\section{Introduction}
\label{sec:intro}

Asset prices exhibit jumps, occasionally and persistently, in all financial markets across the world, which has been well documented and empirically tested in the literature.
% (see for instance \cite{lee2007jumps} and \cite{jiang2008testing} among an extensive list).
Large price movements are unlikely to be observed under standard financial models driven solely by Brownian motion(s), e.g., the Black-Scholes model.
% (see \cite{black1973pricing}).
The most popular models incorporating jumps are the jump-diffusion models, with the jump part predominantly driven by a Poisson process or a more general L\'evy process.
Those models have enjoyed great popularity in option pricing, term structure and credit risk modelling, and other applications.
One may refer to the survey article \cite{kou2007jump}, the monographs \cite{tankov2003financial} and \cite{oksendal2005applied}, and the references therein for detailed discussions of those models.

More strikingly, empirical studies confirm that the price jumps of an asset (or a class of assets) are likely to be accompanied by  more jumps, over a short time period, from not only the same (class of) asset  but also different (classes of) assets, creating a contagion or clustering effect in the market (see Figure \ref{fig:stock} for an example).
This contagion effect is observed across different markets globally, in particular, during financial crises.
The cascade of market declines experienced during the global financial crisis of 2007-2008 is a prominent example (see \cite{ait2015modeling} and \cite{azizpour2018exploring} for more examples).
As pointed out in \cite{ait2015modeling}, jumps under a standard L\'evy  jump-diffusion model are rare events, and, as a result, the clustering of large price movements cannot be properly explained using a standard jump-diffusion model.
The interplay between jumps from different assets or markets is complex, leaving the \emph{independent} increment assumption of L\'evy processes vulnerable.

\begin{figure}[h]
\begin{center}
\includegraphics[width = \textwidth]{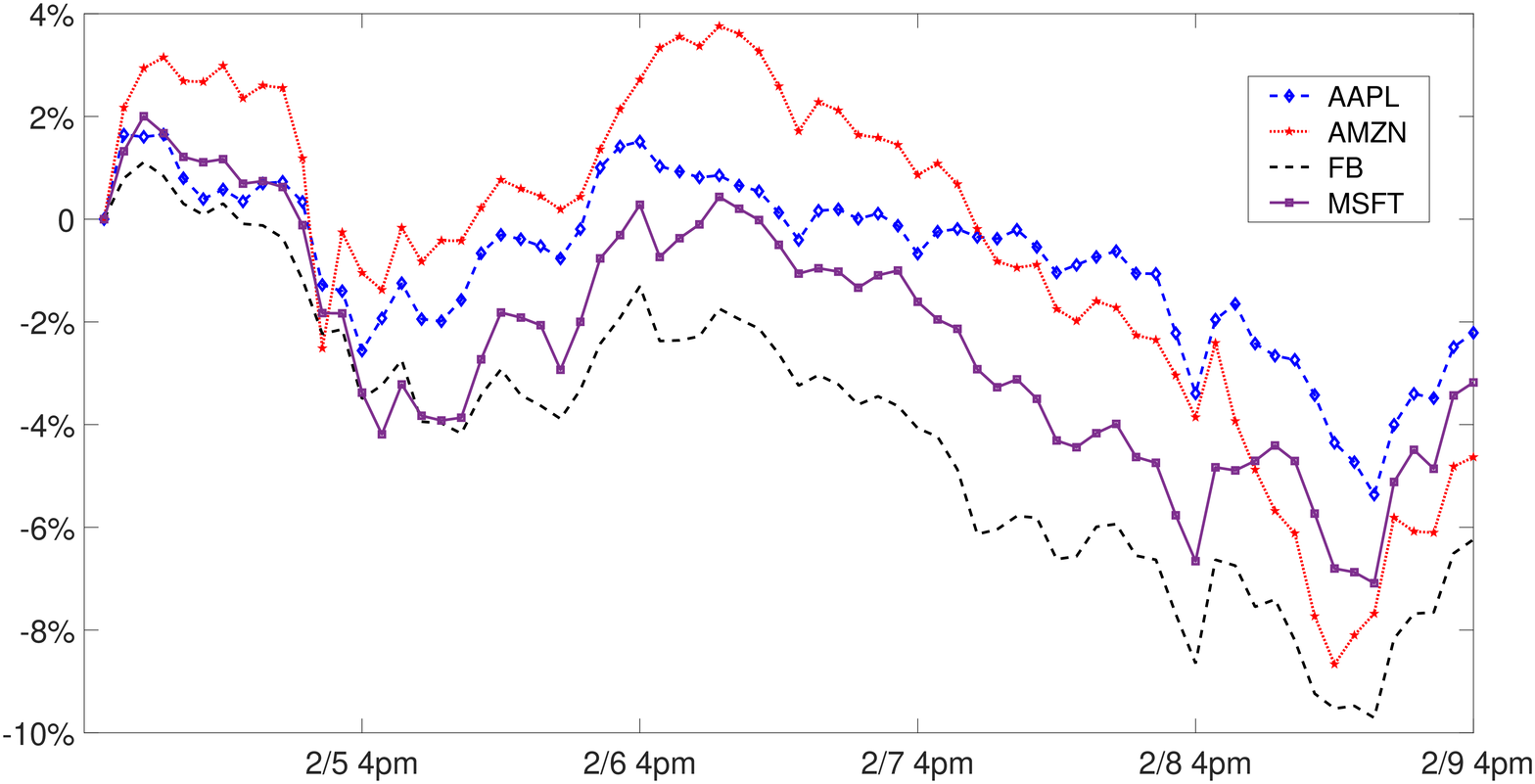}
\\[-2ex]
%\caption{Price Movement (in percentage) of Four Stocks from 2/5/2018 to 2/9/2018}
\label{fig:stock}
\end{center}
\begin{minipage}{\textwidth}
	\footnotesize
Figure 1. We plot the price changes in \emph{percentages} for AAPL (Apple), AMZN (Amazon), FB (Facebook) and MSFT (Microsoft) from 2/5/2018 9:30 am to 2/9/2018 4:00 pm during the 2018 February sell-off in the U.S. stock market.
The price data are in frequency 30m (minutes), and are taken to be the average of the highest and the lowest in each 30m window. Data are downloaded from Yahoo!Finance.
\end{minipage}
\end{figure}

Numerous approaches and models have been proposed to account for the clustering effect of jumps (extreme events) in finance and beyond.
Among them includes a successful candidate, \emph{Hawkes processes}, first introduced by \cite{hawkes1971spectra}.
In a multivariate Hawkes process $N=(N_1, N_2,\ldots,N_m)^\top$, the occurrence of a jump from one component (say $N_1$) raises not only the jump intensity of itself (self-excitation feature)
but also those of other components $N_2, \ldots, N_m$ (cross-excitation feature).
The joint effect of self-excitation and cross-excitation, called \emph{mutual-excitation}, increases the likelihood of seeing more jumps both over small time period and from different components, which provides a convincing explanation to the clustering of jumps observed in the financial markets.
An early paper using Hawkes processes to model financial data is \cite{bowsher2007modelling}. (Its working paper version first appeared online in 2002.)
More recent works that apply Hawkes processes to financial modeling include \cite{embrechts2011multivariate}, \cite{chavez2012high}, \cite{ait2015modeling}, and \cite{azizpour2018exploring}.
Hawkes processes have also been applied to risk analysis, credit risk, derivatives pricing, and other financial topics. See \cite{chavez2005estimating}, \cite{errais2010affine}, \cite{dassios2011dynamic}, \cite{ait2014mutual},
and \cite{zhang2018bond}, among many others.
In actuarial science, \cite{dassios2012ruin} and \cite{zhu2013ruin} analyze ruin probabilities when the arrivals of claims are modeled by a Hawkes process.
We refer to the survey papers of \cite{bacry2015hawkes} and \cite{hawkes2018hawkes} for the overview and progress of Hawkes processes in finance.

In this paper, we consider the classical mean-variance (MV) portfolio selection problem (pioneered by \cite{markowitz1952portfolio}) in a financial market with contagion risk.
The asset prices are modeled by a multi-dimensional jump-diffusion model, in which self-excited and cross-excited jumps
are driven by a multivariate Hawkes process.
The mutual-excitation feature of the Hawkes process captures the clustering and contagion effects in the financial market.
It is well known that the MV portfolio selection problem is a \emph{time inconsistent} control problem, in the sense that an optimal strategy found at the current time may cease to be optimal as time evolves. Two lines of research are developed to tackle such a problem.
The first method is to treat the problem as a \emph{precommitted} problem and solve for an optimal strategy; see \cite{bajeux1998dynamic}, \cite{li2000optimal}, \cite{zhou2000continuous},  \cite{lim2002mean}, \cite{zhou2003markowitz}, and \cite{yin2004markowitz}, among many others.
The second approach is to treat the problem under a game theoretical framework and seek an equilibrium strategy; see for instance \cite{basak2010dynamic} and \cite{bjork2014mean}.
\cite{shen2021mean} provide detailed comparisons between these two approaches.
We adopt the first precommitted framework in this paper, namely, we assume that once an optimal strategy is obtained, the agent will commit herself to this strategy throughout the entire investment horizon.

Although Hawkes processes have enjoyed great success in financial modeling, studies on control problems with systems driven by Hawkes processes are still rare, especially when comparing to those driven by Brownian motions and/or
Poisson jump processes.
To our knowledge, \cite{ait2016portfolio} is the earliest paper which considers the portfolio selection problem
in a contagious market driven by a Hawkes process.
%But their objective is to maximize the expected utility of consumption over an infinite time horizon; while we consider a mean-variance problem over a finite horizon.
Another related paper is \cite{bo2019credit}, which solves a portfolio optimization problem under a defaultable market
modeled by mean-reverting diffusion processes enhanced with self-excitation.
\cite{cao2019optimal} apply a Hawkes process to model the claim frequency for an insurer and seek the optimal strategy to the insurer's optimal reinsurance problem.
\cite{liu2021household} study an optimal investment, consumption, and life insurance problem in a self-contagious market (i.e., there is only one risky asset) for a utility-maximizing household with a bequest motive.

We summarize the key contributions and findings of this paper as follows:
\begin{itemize}
	\item This paper contributes to the stochastic control literature when the controlled system is driven by a Brownian motion and a Hawkes process. In particular, in the area of portfolio selection problems, ours shall follow the work of \cite{ait2016portfolio}
	%model contagious markets,
	by modeling the contagion risk via a Hawkes process,
	but with significant differences.
	The objective in \cite{ait2016portfolio} is to maximize the expected utility of consumption over an infinite time horizon, while we consider the MV problem over a finite horizon. The difference in methodology is further elaborated below.
	
	\item In all the related papers mentioned above, the standard dynamic programming principle is used and the optimal strategy is found by solving the associated Hamilton-Jacobi-Bellman (HJB) equation. In comparison, we apply the stochastic maximum principle,
	backward stochastic differential equation (BSDE) theory, and
	linear-quadratic (LQ) control technique to solve our MV portfolio selection problem. Given that the stochastic maximum principle can be applied to deal with non-Markovian\footnote{The control systems are (joint) Markovian in \cite{ait2016portfolio}, \cite{cao2019optimal}, and \cite{liu2021household}, while \cite{gao2018dp} study a self-exciting non-Markovian Hawkes process.} control systems with random coefficients, our approach has the potential to tackle portfolio selection problems under a more general setup with non-Markovian Hawkes processes (see \cite{gao2018dp}).
	
\item Numerical findings: In the numerical analysis of a univariate example, we find that as the initial
intensity value, mean-reversion level, or jump size of the intensity process increases, the efficient frontier deteriorates and the MV  investor is worse off. On the contrary, a larger mean-reversion speed improves the efficient frontier. Compared
with the Poisson-jump-diffusion model with the same initial intensity, the Hawkes model may lead to either an improved or deteriorated efficient frontier, depending on the relative size of the initial intensity and the mean-reversion level in the Hawkes model.
\end{itemize}

The rest of the article is organized as follows. In Section \ref{sec:prob}, we introduce a contagious financial market and present the MV portfolio selection problem.
In Section \ref{sec:deri}, we carry out heuristic derivations to solve an equivalent problem using the stochastic maximum principle and BSDE theory.
In Section \ref{sec:op}, we apply the completing the square technique in LQ theory and formally obtain the efficient (optimal) strategy and the efficient frontier of the MV problem.
We conduct numerical analysis in Section \ref{sec:exm} and summarize concluding remarks in Section \ref{sec:con}. Technical proofs are collected in Appendixes A-C. Particularly, Appendix C is devoted to
the discussion of the connection between the stochastic maximum principle and the HJB equation approach.

\section{Problem formulation}
\label{sec:prob}

In this section, we first introduce a multi-dimensional Hawkes process to model mutually exciting jumps, which capture the contagion risk in the financial market. Then, we
formulate the mean-variance portfolio selection problem
%in Problem \ref{pro:MV}.
in \eqref{pro:mv}.

\subsection{Probabilistic setup and Hawkes jump modeling}
\label{subsec:jump}

Let $[0, T]$ be a finite time horizon, where $T < \infty$ represents the terminal time of planning.
We consider a complete probability space $(\Omega, {\cal F}, {\mathbb P})$
equipped with a filtration ${\mathbb F} : = \{ {\cal F}_t \}_{t \in [0, T]}$, which is assumed to carry all the random objects considered in the sequel and satisfy the usual conditions of right-continuity and ${\mathbb P}$-completeness.
The operator $\Eb$ (resp. $\mathrm{Var}$) denotes taking expectation (resp. variance) under $\Pb$, and $\Eb_t := \Eb[\cdot | \Fc_t]$.
On this probability space, two stochastic processes are defined: an $n$-dimensional standard Brownian motion $W: = \{ (W_{1} (t), W_{2} (t), \ldots, W_{n} (t) )^\top \}_{t \in [0, T]}$ and an $m$-dimensional c\`adl\`ag (right continuous with left limits) point process $N := \{ (N_{1} (t), N_{2} (t), \ldots, N_{m} (t))^\top \}_{t \in [0, T]}$. Here, $n$ and $m$ are two positive integers, and ${}^\top$ denotes the usual transpose operator on a vector or a matrix.
Let $k$ be another positive integer, we define three index sets by\footnote{Set $\Nb$ is the index set of $n$ components of the Brownian motion $W$, set $\Mb$ is the index set of $m$ components of the Hawkes process $N$, and $\Kb$ is the index set of $k$ risky assets (introduced in the next subsection).}
\begin{align}
\label{eq:set}
\Nb := \{1,2,\ldots,n\}, \qquad \Mb:=\{1,2,\ldots,m\}, \qquad \text{and} \qquad \Kb:=\{1,2,\ldots,k\}.
\end{align}

Let us denote the intensity process of $N$ by $\lam : = \{ (\lambda_{1} (t), \lambda_{2} (t), \ldots,
\lambda_{m} (t))^\top \}_{t \in [0, T]}$, where
$\lam_l(t)$ is the corresponding instantaneous intensity of $N_l$ at time $t$ for each $l \in \Mb$ and $t \in [0,T]$.
Hereinafter, we work with the c\`agl\`ad version of $\lam$, i.e., $\lam$ is left continuous with right limits.
Heuristically, we have (for rigorous definitions see \cite{daley2007introduction})
\begin{align*}
\lam(t) = \lim_{\Del t \to 0} \frac{\Eb_t \left[ N(t+\Del t) - N(t) \right]}{\Del t}, \qquad t \in [0,T].
\end{align*}
In addition, we define a new process $\wn: = \{ ({\widetilde N}_{1} (t),{\widetilde N}_{2} (t), \ldots, {\widetilde N}_{m} (t))^\top \}_{t \in [0, T]}$,  commonly referred to as the compensated process of $N$, by
\begin{align*}
\wn(t) := N(t) - \int_0^t \lam(u) \, \dd u, \qquad t \in [0,T].
\end{align*}
It is well known that $\wn$ is an $m$-dimensional $(\Fb, \Pb)$-local martingale.

In this paper, we use a Hawkes process, introduced in \cite{hawkes1971spectra}, to model the point process $N$.
In particular, we assume that the intensity process $\lam_l$ of $N_l$ is governed by
\begin{align}
\dd  \lambda_l (t) &= \alpha_l ( \lambda_{l \infty} - \lambda_l (t) ) \, \dd t
+ \sum^m_{j = 1} \beta_{lj} \, \dd N_j (t)  \notag \\
&= \bigg [ \alpha_l ( \lambda_{l \infty} - \lambda_l (t) ) + \sum^m_{j = 1} \beta_{lj} \lambda_j (t) \bigg ] \dd t
+ \sum^m_{j = 1} \beta_{lj} \, \dd {\widetilde N}_j (t), \label{eq:dlam_l}
\end{align}
where $\lambda_l (0) = \lambda_{l0}$ and $l \in \Mb$.
We assume that all the coefficients in \eqref{eq:dlam_l} are constants, and $\lambda_{l0} > 0$,
$\alpha_l > 0$, $\lambda_{l \infty} \geq 0$ and $\beta_{lj} \geq 0$, for all $l, j \in \Mb$.
The dynamics equation given by \eqref{eq:dlam_l} implies that the intensity process $\lam_l$ mean reverts to the long-term rate $\lam_{l \infty}$ at speed $\alpha_l$, and jumps up by size $\beta_{lj}$ whenever process $N_j$ jumps, where $j \in \Mb$.
Alternatively, we can solve \eqref{eq:dlam_l} and represent $\lam_l$ by
\begin{align}
\lam_l(t) = e^{-\alpha_l t} \, \lam_{l0} + \left(1 - e^{-\alpha_l t}\right) \, \lam_{l \infty} + \sum_{j=1}^m \int_0^{t-} \, \beta_{lj} e^{-\alpha_l (t-s)} \dd N_j(s), \qquad t\in[0,T], \, l\in \Mb.
\end{align}
Obviously, $N$ is an $m$-dimensional Hawkes process with \emph{exponential} decay.

With $\lam_l$ satisfying \eqref{eq:dlam_l} for all $l \in \Mb$, we write the dynamics of $\lam$ in the following vector form
\begin{align}\label{eq:intensity}
\dd \lambda (t) = \alpha (\lam_\infty - \lam(t)) \, \dd t + \beta \, \dd N(t)
=\big ( \alpha \lambda_\infty + (\beta - \alpha) \lambda (t) \big ) \, \dd t + \beta \, \dd {\widetilde N} (t) ,
\end{align}
where we denote
\begin{align}
\label{eq:vec}
\begin{split}
\lambda (0) = \lambda_0 : = (\lambda_{10}, \lambda_{20}, \ldots, \lambda_{m0})^\top,& \qquad
\lambda_{\infty} : = (\lambda_{1 \infty}, \lambda_{2 \infty}, \dots, \lambda_{m \infty})^\top, \\
\alpha := \mbox{Diag} [(\alpha_1, \alpha_2, \ldots, \alpha_m)^\top],& \qquad
\beta := [\beta_{lj}]_{l, j \in \Mb}.
\end{split}
\end{align}
Here, for a vector $v$, $\mathrm{Diag}[v]$ denotes the square matrix, whose diagonal vector is equal to $v$ and other entries are zero.

\begin{rmk}
As can be easily seen from \eqref{eq:intensity}, a jump of $N_l$ at time $t$ not only increases its own instantaneous intensity $\lam_l(t)$ by $\beta_{ll}$ (\emph{self-excitation}) but also increases the instantaneous intensities $\lam_j(t)$ of other processes $N_j$ by $\beta_{jl}$ (\emph{cross-excitation}), where $j \neq l, j\in \Mb$. Furthermore, a jump occurring at time $t$ has a permanent impact on the intensity process $\lam_l(s)$ for all $s \ge t$ and $l \in \Mb$, although such an impact decays exponentially at rate $\alpha_l$.
As a result, the intensity process $\lam$ depends on the past history of $N$, making $N$ a path-dependent process.
The dynamics of $\lam$ in \eqref{eq:intensity} indicate that $(N, \lam)$ is a joint Markov process.
The above Hawkes jump model is also used in \cite{ait2016portfolio}, \cite{cao2019optimal}, and \cite{liu2021household}.
\end{rmk}

\subsection{Financial market with contagion risk}

We consider a financial market that consists of one risk-free asset (e.g., savings account) and $k$ risky assets (e.g., stocks). The risk-free asset earns interest continuously at a constant rate $r>0$, and its price evolves according to
\begin{align}
\dd S_{0} (t) = r S_{0} (t) \, \dd t , \qquad S_0 (0) = 1.
\end{align}
The price processes of the risky assets are given by
\begin{align}
\label{eq:dS}
\dd S_{i} (t) = S_{i} (t-) \bigg [ \mu_i \, \dd t + \sum^n_{j = 1} \sigma_{ij} \, \dd W_{j} (t)
+ \sum^m_{l = 1} J_{il} \Big ( Z_{l} (t) \, \dd N_{l} (t) - \Eb [Z_l(t)] \lambda_l (t) \, \dd t \Big ) \bigg ], \qquad i \in \Kb,
\end{align}
where $S_i (0) = s_{i0} > 0$ and $N_l$ is the $l^{th}$ entry of the $m$-dimensional Hawkes process $N$
introduced in Section \ref{subsec:jump}, for $l \in \Mb$. Recall set $\Kb=\{1,2,\ldots,k\}$ (see \eqref{eq:set}).
The assumptions of model \eqref{eq:dS} will be presented shortly after a brief introduction of notations.
Our model \eqref{eq:dS} is similar to that of \cite{ait2016portfolio} (see Eq.(2.2) in their paper), and both consider exponential decay for the intensity process (see Eq.\eqref{eq:dlam_l} or \eqref{eq:intensity} in ours and Eq.(2.3) in theirs).

For notational simplicity, we introduce the expected return rate vector $\mu$,
% $\mu : = ( \mu_1 , \mu_2 , \ldots, \mu_k )^\top$,
the risk premium vector $B$,
% $B : = ( \mu_1 - r, \mu_2 - r, \ldots, \mu_k - r )^\top$,
the volatility matrix (of dimension $k \times n$) $\sigma$, where
%$\sigma : = [ \sigma_{ij} ]_{i \in \Kb, \, j \in \Nb}$
\begin{align}
\label{eq:notation_vec}
\mu : = ( \mu_1 , \mu_2 , \ldots, \mu_k )^\top, \quad B : = ( \mu_1 - r, \mu_2 - r, \ldots, \mu_k - r )^\top, \quad \sigma : = [ \sigma_{ij} ]_{i \in \Kb, \, j \in \Nb},
\end{align}
 and the jump size matrix (of dimension $k \times m$) $\eta$, where
\begin{align}
\label{eq:eta}
\eta(Z (t)) : = [ J_{il} Z_{l} (t) ]_{i \in \Kb, \, l \in \Mb}, \qquad \forall \, t \in [0,T].
\end{align}
We summarize below the model assumptions that will be imposed throughout the paper.

\begin{ass}
\label{ass:market}
In the market model \eqref{eq:dS}, the drift $\mu_i>0$, the volatility rate $\sig_{ij} >0$, and the scaling factor $J_{il} \in [0,1]$ are all constants, for all $i \in \Kb$, $j \in \Nb$, and $l \in \Mb$. The jump size $Z_l = \{Z_l(t)\}_{t \in [0,T]}$ is a series of independent and identically distributed (i.i.d.) random variables, with common probability measure $\nu_l$ supported on $(-1, \infty)$ and finite second moment, for all $l \in \Mb$.
Let us denote $\nu:=(\nu_1,\nu_2,\ldots,\nu_l)^\top$.
We assume the following non-degenerate condition holds true for both the variance-covariance matrix $\sigma \sigma^\top$
and the precision matrix $(\sigma \sigma^\top)^{-1}$ (see for instance \cite{zhou2003markowitz} for similar assumptions):
 \begin{align}
 \zeta^\top \, \sigma \sigma^\top \zeta\geq \epsilon |\zeta|^2 , \quad \zeta^\top \, (\sigma \sigma^\top)^{-1} \zeta \geq \epsilon |\zeta|^2 , \qquad \forall \, \zeta \in \mathbb{R}^k ,
 \end{align}
 where $\epsilon$ is a positive constant.
 The jumps in asset prices are modeled by the $m$-dimensional Hawkes process $N$, whose intensity process $\lam$ is given by \eqref{eq:intensity}.
 Furthermore, we suppose that the Brownian motion $W$, the Hawkes process $N$, and the jump size random variables $Z = \{Z_l\}_{l \in \Mb}$ are stochastically independent of each other.
 The filtration $\Fb$ is the augmented filtration generated by $W$, $N$, and $Z$.
\end{ass}

Let $u$ and $v$ be two arbitrary vectors with the same dimension (say $n$), we define operator $\bullet$ by
\begin{align}
\label{eq:bullet}
u \bullet v := (u_1 v_1, u_2v_2, \ldots, u_nv_n)^\top.
\end{align}

\begin{lemma}
	\label{lem:Sigma}
The generalized variance-covariance matrix, defined by
 \begin{align}
 \label{eq:Sigma}
 \Sigma (t) : = \sigma \sigma^\top
 + \int_{(-1, \infty)^m} \eta (z) \, \mathrm{Diag}[\lam (t) \bullet \nu (\dd z)] \, \eta (z)^\top
 \end{align}
is non-degenerate and positive definite. Moreover, $\Sigma (t)^{-1}$ is positive semi-definite.
\end{lemma}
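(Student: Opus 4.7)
The plan is to decompose $\Sigma(t)$ into a positive-definite part (coming from the Brownian component) plus a positive-semi-definite part (coming from the Hawkes jumps), and then invoke the standard fact that the sum is positive definite, hence invertible with positive-definite inverse.

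First I would handle the diffusion part. By Assumption~\ref{ass:market}, the precision-matrix condition gives $\zeta^\top \sigma\sigma^\top \zeta \geq \epsilon\,|\zeta|^2$ for every $\zeta\in\mathbb{R}^k$, so $\sigma\sigma^\top$ is (uniformly) positive definite. Next I would analyze the integral term. Using the definitions of $\eta(z)$ in \eqref{eq:eta} and of the $\bullet$ operation in \eqref{eq:bullet}, I would unpack the quadratic form $\zeta^\top \eta(z)\,\mathrm{Diag}[\lam(t)\bullet\nu(\dd z)]\,\eta(z)^\top\zeta$ componentwise. Writing $(\eta(z)^\top\zeta)_l = z_l \sum_{i\in\Kb} J_{il}\zeta_i$ and integrating the $l$-th diagonal measure $\lam_l(t)\,\nu_l(\dd z_l)$ against $z_l^2$ yields
\begin{align}
\zeta^\top \bigg(\int_{(-1,\infty)^m} \eta(z)\,\mathrm{Diag}[\lam(t)\bullet\nu(\dd z)]\,\eta(z)^\top\bigg)\zeta
= \sum_{l\in\Mb} \lam_l(t)\,\Eb[Z_l^2]\,\bigg(\sum_{i\in\Kb} J_{il}\zeta_i\bigg)^2 \geq 0,
\end{align}
where I use that $Z_l$ has finite second moment and that $\lam_l(t)>0$ almost surely (since $\lam_{l0}>0$ and the dynamics in \eqref{eq:dlam_l} keep $\lam_l$ positive).

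Combining these two facts, for every $\zeta\in\mathbb{R}^k$,
\begin{align}
\zeta^\top \Sigma(t)\,\zeta \;\geq\; \epsilon\,|\zeta|^2 + 0 \;=\; \epsilon\,|\zeta|^2,
\end{align}
so $\Sigma(t)$ is positive definite and non-degenerate; in particular it is invertible. Finally, any symmetric matrix $M \succeq \epsilon I$ has inverse satisfying $M^{-1} \succeq 0$ (indeed $M^{-1} \preceq \epsilon^{-1} I$), giving that $\Sigma(t)^{-1}$ is positive semi-definite as claimed.

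There is no real obstacle here beyond being careful with the notation $\mathrm{Diag}[\lam(t)\bullet\nu(\dd z)]$: the key observation is that this is a diagonal matrix of \emph{measures}, so the integral must be interpreted entry-by-entry, with the $l$-th diagonal integrating against $\nu_l(\dd z_l)$ only. Once that interpretation is fixed, the rest is a one-line quadratic-form computation.
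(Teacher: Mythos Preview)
Your proposal is correct and follows essentially the same route as the paper: both show the jump integral is positive semi-definite by expanding the quadratic form as a sum of nonnegative terms (the paper writes it as $\sum_l \int [(\zeta^\top \eta(z))_l]^2 \lam_l(t)\,\nu_l(\dd z_l)$, while you carry out the $z_l$-integration explicitly to get $\sum_l \lam_l(t)\,\Eb[Z_l^2]\,(\sum_i J_{il}\zeta_i)^2$), then add the $\epsilon|\zeta|^2$ lower bound from $\sigma\sigma^\top$. For the inverse, the paper writes the one-line identity $\zeta^\top \Sigma(t)^{-1}\zeta = (\Sigma(t)^{-1}\zeta)^\top \Sigma(t)(\Sigma(t)^{-1}\zeta) \ge \epsilon\,|\Sigma(t)^{-1}\zeta|^2 \ge 0$, which is exactly the standard fact you invoke.
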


\begin{proof}
We first note that,	in the definition of \eqref{eq:Sigma}, $\sigma$ and $\eta$ are given by \eqref{eq:notation_vec} and \eqref{eq:eta}, $\nu$ is the distribution function of jump random variables $Z$, and $\lam$ is the intensity process, which is positive and governed by \eqref{eq:intensity}.
$\mathrm{Diag}[\lam (t) \bullet \nu (\dd z)]$ denotes the square ($m \times m$) matrix with diagonal entries equal to  $\lam (t) \bullet \nu (\dd z)$, where
\[ \lam (t) \bullet \nu (\dd z) = \big ( \lam_{1} (t) \nu_1 (\dd z_1), \lam_{2} (t) \nu_2 (\dd z_2),
\ldots, \lam_{m} (t) \nu_m (\dd z_m) \big )^\top , \]
and all other entries are equal to 0.	

By simple algebra, we obtain, for any $\zeta \in \mathbb{R}^k$, that
\begin{align}
\int_{(-1, \infty)^m} \zeta^\top \, \eta (z) \, \mathrm{Diag}[\lam (t) \bullet \nu (\dd z)] \, \eta (z)^\top \, \zeta
= \sum^m_{l = 1} \int_{(-1, \infty)} \big[(\zeta^\top \, \eta (z) )_l\big]^2 \, \lam_l (t) \, \nu_l (\dd z_l) \geq 0 ,
\end{align}
where $(\zeta^\top \eta (z))_{l}$ denotes the $l^{th}$ entry of the
$m$-dimensional vector $\zeta^\top \eta (z)$, for all $l \in \Mb$.
Thus, by the non-degenerate condition in Assumption \ref{ass:market}, we have
\begin{align}
\zeta^\top \Sigma (t) \zeta\geq \zeta^\top \sigma \sigma^\top \zeta
\geq \epsilon |\zeta|^2, \qquad \forall \, \zeta\in \mathbb{R}^k ,
\end{align}
which implies $\Sigma (t)$ is non-degenerate and positive definite.
The positive semi-definiteness of $\Sigma (t)^{-1}$ also holds true because
\begin{align}
\zeta^\top \Sigma (t)^{-1} \zeta= (\zeta^\top \Sigma (t)^{-1}) \, \Sigma (t) \, (\Sigma (t)^{-1} \zeta)
\geq \epsilon |\Sigma (t)^{-1} \zeta|^2 \geq 0 , \qquad \forall \, \zeta \in \mathbb{R}^k .
\end{align}
The proof is then completed.
\end{proof}

\begin{rmk}
The market model \eqref{eq:dS} naturally inherits both self-excitation and cross-excitation features from the Hawkes process $N$ (together called mutual-excitation). Such a financial market is contagious in the sense that, a jump from one source will increase the jump intensity of all the sources and then likely lead to more jumps from different sources in the near future, creating contagion effects and clustering phenomena. We also point out that, by imposing $J_{il} \in [0, 1]$ and $Z_l \in (-1, \infty)$, the asset price $S_i$ stays strictly positive for all $i \in \Kb$. \cite{ait2016portfolio} focus on negative jumps only and further assume $J_{il} Z_l \in (-1,0)$ for all $i \in \Kb$ and $l \in \Mb$.
\end{rmk}

Moreover, we follow the standard notation of random measures (see, e.g., Eq.(1.1.2) in \cite{oksendal2005applied}) and define (with slight abuse of notations)
\begin{align}
N_l(t, \Ib) := \sum_{0< s \le t} \chi_{\Ib}(\Del N_l(s)), \qquad \forall \, l \in \Mb, \, t \in [0,T],
\end{align}
where $\Ib \subset (-1, \infty)$, $0 \not\in \bar{\Ib}$ (the closure of $\Ib$), and $\chi$ is an indicator function. Let us define
\begin{align}
{\widetilde N}_l (\dd t, \dd z_l) &: = N_l (\dd t, \dd z_l) - \lambda_l (t) \, \nu_l (\dd z_l)\, \dd t , \qquad l \in \Mb,\\
\text{and} \qquad {\widetilde N} (\dd t, \dd z) &: = \left( {\widetilde N}_1 (\dd t, \dd z_1), {\widetilde N}_2 (\dd t, \dd z_2),  \label{eq:wt_N}
\ldots, {\widetilde N}_m (\dd t, \dd z_m) \right)^\top .
\end{align}
By using the random measure notation, we naturally have
\begin{align}
\int^t_0 J_{il} \big ( Z_{l} (t) \, \dd N_{l} (t) - \Eb [Z_l(t)] \lambda_l (t) \, \dd t \big )
=  \int^t_0 \int_{(-1,\infty)} J_{il} z_l  \, {\widetilde N}_l (\dd t, \dd z_l), \qquad \forall \, t\in[0,T], \, l \in \Mb .
\end{align}
We can also rewrite the market model \eqref{eq:dS} in the following vector form
\begin{align}
\label{eq:dS_vec}
\dd S(t) = \mathrm{Diag}[S(t-)] \left( \mu \, \dd t + \sig \, \dd W(t) +  \int_{(-1,\infty)^m} \; \eta(z) \, \wn(\dd t, \dd z) \right).
\end{align}

We consider a representative agent (investor), who is a price-taker in the economy (i.e., her trading does not move the prices of the assets).
In the financial market with contagion risk, as described above, the agent chooses her investment strategy $\pi : = \{ (\pi_1 (t), \pi_2 (t), \ldots, \pi_k (t))^\top \}_{t \in [0, T]}$, where $\pi_i (t)$ represents the \emph{dollar amount} invested in the $i^{th}$ risky asset at time $t$.
Let us denote by  $X^\pi := \{ X^\pi (t) \}_{t \in [0,T]}$ the wealth process associated with strategy $\pi$. We consider self-financing strategies only, that means the remaining amount of $X^\pi(t) - \sum_{i=1}^k \pi_i(t)$ is fully invested in the risk-free asset at time $t$.
We shall write $X (t) := X^\pi (t)$ for simplicity, whenever there is no risk of confusion.
Using \eqref{eq:dS_vec}, we obtain the dynamics of $X$ by
\begin{align}\label{eq:wealth}
\dd X (t) = \big ( r X (t) + \pi (t)^\top B \big ) \, \dd t + \pi (t)^\top \sigma \, \dd W (t)
+ \int_{(-1, \infty)^m} \pi (t)^\top \eta (z) {\widetilde N} (\dd t, \dd z) ,
\end{align}
where $X (0) = x_0 > 0 $ is the agent's initial wealth, the risk premium $B$ and volatility matrix $\sig$ are given by \eqref{eq:notation_vec}, and $\eta$ is the jump size matrix defined by \eqref{eq:eta}.

Before introducing the MV problem, we define three spaces of processes that will be used in the subsequent analysis.

\begin{definition}
\label{def:space}
Denote by ${\cal P}$ the ${\mathbb F}$-predictable $\sigma$-field on $\Omega \times [0, T]$.
We define
\begin{itemize}
	\item ${\cal L}^2_{\cal F} (0, T; {\mathbb R}^k)$ as the space of all ${\mathbb R}^k$-valued, ${\mathbb F}$-predictable processes $\varphi = \{ \varphi (t) \}_{t \in [0, T]}$ such that
	\begin{align*}
	{\mathbb E} \bigg [ \int^T_0 |\varphi (t)|^2 \, \dd t \bigg ] < \infty;
	\end{align*}
	
	\item ${\cal L}^{2, N}_{\cal F} (0, T; {\mathbb R}^m)$ as the space of all ${\mathbb R}^m$-valued,
	${\cal P} \times {\cal B} ((-1, \infty)^m)$-measurable processes $\varphi = \{ \varphi (t, z) \}$
	%_{(t, z) \in [0, T] 	\times (-1, \infty)^m}$
	such that
	\begin{align*}
	{\mathbb E} \bigg [ \int^T_0 |\varphi (t, \cdot)|^2_N \, \dd t \bigg ] < \infty ,
	\end{align*}
	where $\cal B$ denotes the Borel set, $\varphi (t, z) : = (\varphi_1 (t, z_1), \varphi_2 (t, z_1), \ldots, \varphi_m (t, z_m))^\top$, with $(t, z) \in [0, T] 	 \times (-1, \infty)^m$,  and
	\begin{align*}
	|\varphi (t, \cdot)|^2_N : = \sum^m_{l = 1} \int_{(-1,\infty)} |\varphi_l (t, z_l)|^2
	\lambda_l (t) \nu_l (\dd z_l);
	\end{align*}
%	where $\nu_l$ is some probability measure of jump sizes associated with $N_l$;
	
	\item ${\cal S}^2_{\cal F} (0, T; {\mathbb R})$ as the space of all real-valued, ${\mathbb F}$-adapted,
	c\`adl\`ag processes $\varphi = \{ \varphi (t) \}_{t \in [0, T]}$ such that
	\begin{align*}
	{\mathbb E} \bigg [ \sup\limits_{t \in [0, T]} |\varphi (t)|^2 \bigg ] < \infty .
	\end{align*}
\end{itemize}
\end{definition}

As is standard in the literature (see Definition 2.1 in \cite{zhou2003markowitz}), we need to impose certain integrability conditions on the investment strategy $\pi$, which are listed in the definition below.

\begin{definition}
	\label{def:adm}
An investment strategy $\pi$ is said to be admissible if $\pi$ is self-financing,
$\pi \in {\cal L}^2_{\cal F} (0, T; {\mathbb R}^k)$
and $\eta^\top \pi  \in {\cal L}^{2, N}_{\cal F} (0, T; {\mathbb R}^m)$, where spaces ${\cal L}^2_{\cal F} (0, T; {\mathbb R}^k)$ and ${\cal L}^{2, N}_{\cal F} (0, T; {\mathbb R}^m)$ are defined in Definition \ref{def:space}.
Denote by ${\cal A}$ the set of all admissible strategies.
\end{definition}

\begin{rmk}
For any $\pi \in {\cal A}$, the stochastic differential equation (SDE) of $X$ given by   \eqref{eq:wealth} admits
a unique strong solution $X$, which belongs to the space ${\cal S}^2_{\cal F} (0, T; {\mathbb R})$ (see Definition \ref{def:space}).
In consequence, we have $\Eb[X(T)] < \infty$ and
${\mathbb E} [ X^2 (T) ] < \infty$, implying that %the mean-variance portfolio selection problem is well defined.
the variance $\mathrm{Var}[X(T)]$ is well defined.
However, for utility maximization problems, an additional requirement of $X$ being bounded from below (e.g., $X >0$ a.s.) is often imposed to exclude certain strategies (e.g., doubling strategies) and make the problems well posed (see \cite{karatzas1998methods}[Definition 3.2]).
\end{rmk}

We now proceed to formulate the MV portfolio selection problem that is of pivotal interest to this paper.

\begin{pro}
	\label{pro:MV}
The agent seeks to solve the following mean-variance portfolio selection problem:
\begin{align}\label{pro:mv}
\left\{
\begin{aligned}
& \min_{\pi  \in {\cal A}} \ J (x_0, \lambda_0; \pi)
:= \min_{\pi  \in {\cal A}} \ {\mathbb E} [(X(T)-\xi)^2] \\
& \mbox{subject to} \
\left\{
\begin{aligned}
& {\mathbb E} [ X (T) ] = \xi , \\
& X \text{ is the solution to } \eqref{eq:wealth},
\end{aligned}
\right.
\end{aligned}
\right.
\end{align}
where $x_0$ is the agent's initial wealth and $\lam_0$ is the initial value of the intensity process $\lam$ given by \eqref{eq:intensity}.
We call a solution $\pi^*$ to the above problem an optimal (investment) strategy or an efficient strategy.
To avoid trivial cases, we set $\xi \geq x_0 e^{rT}$.
\end{pro}

\begin{rmk}
The formulation of problem \eqref{pro:mv} is the same as that in \cite{lim2002mean} and \cite{zhou2003markowitz}.
In \cite{zhou2000continuous}, an alternative formulation of the MV portfolio selection problem is considered:
\begin{align}
\min\limits_{\pi\in{\cal A}} \, \left\{ - {\mathbb E} [ X (T) ] + \gamma \mathrm{Var} (X (T)) \right\}, \qquad \gamma >0 .
\end{align}
Due to the variance term, this alternative one is not a standard control problem, although it does not involve expectation constraint. A standard approach to overcoming the difficulty caused by the variance term is the embedding technique developed in \cite{li2000optimal}[Theorem 2] and \cite{zhou2000continuous}[Theorem 3.1].
\end{rmk}

Notice that problem \eqref{pro:mv} is a constrained stochastic optimization problem, with an equality constraint. To solve this problem, we consider a modified objective $J_1$, defined by
\begin{align}
J_1 (x_0, \lambda_0; \pi, \theta)
:&= {\mathbb E} \big [ (X (T) - \xi)^2 \big ] + 2 \theta \big ( {\mathbb E} [ X (T) ] - \xi \big )
= {\mathbb E} \big [ (X (T) - (\xi - \theta))^2 \big ] - \theta^2,
\end{align}
where $\theta \in \Rb$ is the Lagrange multiplier. Please refer to \cite{lim2002mean} and \cite{zhou2003markowitz} for similar ideas.
As discussed in these two papers, it follows from the Lagrangian duality theorem that solving the MV portfolio selection problem \eqref{pro:mv} is equivalent to solving the following max-min problem:
\begin{align}\label{pro:min-max}
\left\{
\begin{aligned}
& \max_{\theta \in {\mathbb R}} \min_{\pi \in {\cal A}} \ J_1 (x_0, \lambda_0; \pi, \theta)
= \max_{\theta \in {\mathbb R}} \min_{\pi \in {\cal A}} \left\{ {\mathbb E} \big [ (X (T) - (\xi-\theta))^2 \big ] - \theta^2 \right\} \\
& \text{subject to }
X  \text{ given by } \eqref{eq:wealth}.
\end{aligned}
\right.
\end{align}
To solve problem \eqref{pro:min-max}, we first solve the quadratic-loss minimization problem:
\begin{align}\label{pro:qlm}
\left\{
\begin{aligned}
& \min_{\pi \in {\cal A}} \ J_2 (x_0, \lambda_0; \pi , c )
:= \min_{\pi \in {\cal A}} \ {\mathbb E}
\big [ (X (T) - c)^2 \big ]  \\
& \text{subject to } X  \text{ given by } \eqref{eq:wealth},
\end{aligned}
\right.
\end{align}
where $c \in \Rb$ is a free parameter in problem \eqref{pro:qlm}.
Let us denote by $ \Jc_2(x_0, \lam_0; c)$ the value function of problem \eqref{pro:qlm}, i.e.,
$\Jc_2(x_0, \lam_0; c):= J_2(x_0, \lam_0; \pi^*_c, c)$,
where $\pi^*_c$ is an optimizer to problem \eqref{pro:qlm} associated with the free parameter $c$.
Once problem \eqref{pro:qlm} is solved, we are able to solve problem \eqref{pro:min-max} and the primal problem \eqref{pro:mv} by considering
\begin{align}
\label{pro:theta}
\max_{\theta \in \Rb} \; J_3(x_0, \lam_0; \theta) :=
\max_{\theta \in \Rb} \left\{ \Jc_2(x_0, \lam_0; \xi - \theta) - \theta^2 \right\} ,
\end{align}
where $\xi$ is the target expected terminal wealth in problem \eqref{pro:mv}. To be precise, given the solution $\theta^*$ of \eqref{pro:theta}, $\pi^* : = \pi^*_c|_{c = \xi - \theta^*}$ is the efficient strategy to problem \eqref{pro:mv}, and $({\mathbb E} [(X^*(T)-\xi)^2], \xi)$ is the corresponding efficient frontier,
where $X^*(T)$ is the terminal wealth associated with $\pi^*$ and $\xi \geq x_0 e^{rT}$.

\section{Heuristic derivations}
\label{sec:deri}

As explained above, the key to solving the (constrained) mean-variance problem \eqref{pro:mv} is to analyze the (unconstrained) quadratic-loss minimization problem \eqref{pro:qlm}.
This section is devoted to the investigations of problem \eqref{pro:qlm}.
We obtain a candidate solution of $\pi^*_c$ to problem \eqref{pro:qlm} in \eqref{eq:pi_new}.

Notice that the objective functional $J_2$ of problem \eqref{pro:qlm} is the expected value of a quadratic function and no longer has the ``annoying'' equality constraint, which is different from the objective $J$ of the original MV problem \eqref{pro:mv}.
Namely, problem \eqref{pro:qlm} is a standard stochastic control problem.
As such, several approaches are available to solve such a problem, including the well-known Hamilton-Jacobi-Bellman (HJB) method (via the application of the dynamic programming principle).
Here we apply the stochastic maximum principle (see, e.g., \cite{yong1999stochastic}[Chapter 3] for standard theory) to tackle problem \eqref{pro:qlm}. One may refer to Appendix C for the connection between the stochastic maximum principle approach and the HJB equation approach.
One of the advantages of using the stochastic maximum principle is that it is naturally related to BDSE that is pivotal to discuss
the solvability of \eqref{eq:PDE-g} and design the algorithm for the numerical analysis in Section \ref{sec:exm}.

Since the standard maximum principle only serves as a necessary condition for optimality, the analysis in this section is heuristic and $\pi^*_c$ in \eqref{eq:pi_new} is a \emph{candidate} for the optimal solution to problem \eqref{pro:qlm}.
We do not attempt to verify that $\pi^*_c$ given by \eqref{eq:pi_new} solves problem \eqref{pro:qlm}, since our ultimate goal is to solve the MV problem \eqref{pro:mv}.
(Under additional assumptions, the stochastic maximum principle becomes sufficient and indeed leads to an optimal strategy; see for instance \cite{oksendal2005applied}[Theorem 5.4] and \cite{yong1999stochastic}[Chapter 3, Section 5].)
Instead, the main results of this section will pave the way to Theorem \ref{thm:op} in the next section, which eventually provides a complete solution to the MV problem \eqref{pro:mv}.
The candidate solution $\pi^*_c$ in \eqref{eq:pi_new} to problem \eqref{pro:qlm} has a similar representation form as the efficient strategy $\pi^*$ in \eqref{eq:efficient-strategy} to problem \eqref{pro:mv}.
The relation between $\pi^*_c$ and $\pi^*$ will be established in Section \ref{sec:op}
by using the Lagrangian technique as in \cite{lim2002mean} and \cite{zhou2003markowitz}.
%due to the embedding technique (see \cite{zhou2000continuous}[Theorem 3.1]).

In what follows, we carry out heuristic derivations in \emph{four} steps to solve the quadratic-loss minimization problem \eqref{pro:qlm}.
The main arguments are similar to those in \cite{shen2014optimal}[Lemma 3.1], where the authors apply the stochastic maximum principle to solve a mean-variance investment and reinsurance problem with delay.
Please refer to \cite{oksendal2005applied}[Section 5.2.5] for the application of the stochastic maximum principle to a standard MV portfolio selection problem under the Black-Scholes model \emph{without} contagion risk.
Note that the results in \cite{oksendal2005applied}[Section 5.2.5] are also heuristic or at least informal, as they do not verify the optimality and admissibility of the obtained candidate strategy.
Finally, we remark that, since the Hawkes intensity process $\lam$ is unbounded, the existing results of the stochastic maximum principle for jump-diffusion control problems cannot be directly applied to verify the optimality, which will be deferred to the next section by using the completing the square technique in LQ control theory.

\vspace{2ex}
\noindent
\textbf{Step 1.} Write down the associated Hamiltonian system.

We define a new process $\wx:=\{\wx(t)\}_{t \in [0,T]}$ by
\begin{align}
\label{eq:X_hat}
\wx(t) : = X (t) - c e^{- r (T - t)},
%\qquad \text{where } c := \xi - \theta.
\end{align}
where $c \in \Rb$ is the same free parameter in problem \eqref{pro:qlm}.
Notice that $\wx$ has the same flavor as the auxiliary process $y$ in \cite{zhou2000continuous}.
%Recall that $\xi$ is the targeted expected value of the terminal wealth ($\xi = \Eb[X(T)]$) and $\theta$ is the Lagrange multiplier. Note that
It is clear that $\wx$ is defined for any $\pi \in \Ac$, and here we suppress such dependency for
notational brevity. By applying It\^o's formula to $\wx$, we obtain
\begin{align}
\label{eq:dX_hat}
\dd {\widehat X} (t) = \big ( r {\widehat X} (t) + \pi (t)^\top B \big ) \dd t + \pi (t)^\top \sigma \, \dd W (t)
+ \int_{(-1, \infty)^m} \pi (t)^\top \eta (z) {\widetilde N} (\dd t, \dd z),
\end{align}
with the initial value ${\widehat X} (0) = x_0 - c e^{- r T}$. The objective functional $J_2$ of problem \eqref{pro:qlm} can be rewritten as
\begin{align}
J_2 (x_0, \lambda_0; \pi, c )
= {\mathbb E} \big [ (X (T) - c)^2  \big ] = \Eb \big[ \wx^2(T) \big] .
\end{align}

We treat process $\wx$ as the state process, and its dynamics equation \eqref{eq:dX_hat} suggests that the Hamiltonian of problem \eqref{pro:qlm} reads as
\begin{align}
\label{eq:Hami}
{\cal H} (t, x, \pi, p, q, u) := \big ( r x + \pi^\top B \big ) p + \pi^\top \sigma q
+ \int_{(-1, \infty)^m} \pi^\top \eta (z) \, \mathrm{Diag} [\lambda (t) \bullet \nu (\dd z)] \, u (t, z),
\end{align}
where operator $\bullet$ is defined by \eqref{eq:bullet}.
The triplet $(p,q,u)$ is the so-called \emph{adjoint process}.
Notice that $p:=\{p(t)\}_{t \in [0,T]} \in \Rb$, $q:=\{q(t)\}_{t \in [0,T]} \in \Rb^n$, and $u:= \{u(t,z)\}_{(t, z) \in [0, T] \times (- 1, \infty)^m} \in \Rb^m$.

\vspace{2ex}
\noindent
\textbf{Step 2.} Find the adjoint process $(p,q,u)$.
%solve the adjoint equation \eqref{eq:adjoint}.

According to the standard control theory (see, e.g., \cite{yong1999stochastic} and \cite{oksendal2005applied}), the adjoint process $(p,q,u)$ satisfies the following adjoint equation:
\begin{align}\label{eq:adjoint}
	\left\{
	\begin{aligned}
		\dd p (t) &= - r p (t) \, \dd t + q (t)^\top \, \dd W (t) + \int_{(-1, \infty)^m} u (t, z)^\top {\widetilde N} (\dd t, \dd z) , \\
		p (T) &= 2 {\widehat X} (T) = 2 ( X (T) - c )  ,
	\end{aligned}
	\right.
\end{align}
where $r$ is the risk-free interest rate, $W$ is the $n$-dimensional Brownian motion, and $\widetilde{N}$ is the compensated random measure defined in \eqref{eq:wt_N}. In fact, the drift term in \eqref{eq:adjoint} is found by
\begin{align}\label{eq:adjoint-drift}
- {\cal H}_x (t, {\widehat X} (t), \pi (t), p (t), q (t), u (t)) = - r p (t) .
\end{align}

To solve the adjoint equation \eqref{eq:adjoint}, we try an {\it ansatz} for $p$ in the form of
\begin{align}\label{eq:p-ansatz}
p (t) = Y (t) {\widehat X} (t) ,
\end{align}
where $Y$ is the first component of the solution $(Y, V)$ to the following
backward stochastic differential equation (BSDE):
\begin{align}\label{eq:BSDE-(Y,V)}
\left\{
\begin{aligned}
\dd Y (t) &= - f (t, Y (t), V (t)) \, \dd t + V (t)^\top \, \dd {\widetilde N} (t) , \\
Y (T) &= 2 .
\end{aligned}
\right.
\end{align}
In \eqref{eq:BSDE-(Y,V)}, the process $V$ is not a priori known. Instead, $Y$ and $V$ need to be solved at the same time. The driver $f$ of the BSDE \eqref{eq:BSDE-(Y,V)} is unknown at the moment, but will be determined later. Notice that the BSDE \eqref{eq:BSDE-(Y,V)} is driven by the jump process $\wn$ only.

Using \eqref{eq:dX_hat} and \eqref{eq:BSDE-(Y,V)}, and the stochastic product rule on $p = Y \wx$, we get
\begin{align}
\dd p (t) &= {\widehat X} (t-)\, \dd Y (t) + Y (t-) \, \dd {\widehat X} (t) + \dd [Y (t), {\widehat X} (t)] \\
&= \bigg [ - f (t, Y (t), V (t)) {\widehat X} (t) + Y (t) ( r {\widehat X} (t) + \pi (t)^\top B )  \\
&\quad + \sum^k_{i = 1} \sum^m_{l = 1} \int_{(- 1, \infty)} \pi_i (t) \eta_{il} (z_l) V_{l} (t) \lam_l(t) \nu_l (\dd z_l) \bigg ] \dd t + Y (t) \pi (t)^\top \sigma \, \dd W (t) \label{eq:p-ansatz-BSDE} \\
&\quad + \sum^m_{l = 1} \int_{(- 1, \infty)} \bigg [ {\widehat X} (t-) V_{l} (t)
+ \sum^k_{i = 1}  \pi_{i} (t) \eta_{il} (z_l) \, \big ( Y (t-)
+ V_{l} (t) \big ) \bigg ] {\widetilde N}_l (\dd t, \dd z_l) .
\end{align}
By comparing the dynamics of $p$ in \eqref{eq:adjoint} and \eqref{eq:p-ansatz-BSDE}, we obtain
\begin{align}
- r Y (t) {\widehat X} (t) &= - f (t, Y (t), V (t)) {\widehat X} (t)
+ Y (t) ( r {\widehat X} (t) + \pi (t)^\top B ) \\
&\quad + \sum^k_{i = 1} \sum^m_{l = 1} \int_{(- 1, \infty)} \pi_{i} (t) \eta_{il} (z_l) V_{l} (t) \lam_l (t) \nu_l (\dd z_l),  \label{eq:find-driver} \\
q (t) &= \sigma^\top \pi (t) Y (t) ,  \label{eq:q-ansatz} \\
 u_{l} (t, z_l) &= {\widehat X} (t-) V_{l} (t)
+ \sum^k_{i = 1}  \pi_{i} (t) \eta_{il} (z_l) \, \big ( Y (t-)
+ V_{l} (t) \big ), \qquad \forall \, l \in \Mb. \label{eq:u-ansatz}
\end{align}

In the above system of equations \eqref{eq:find-driver}, \eqref{eq:q-ansatz} and \eqref{eq:u-ansatz}, the unknowns are $Y$, $V=(V_1,\ldots, V_m)^\top$, $f$, $q$, and $u=(u_1,\cdots,u_m)^\top$.
The three equations are not enough to fully solve all the five unknowns.
At this stage, the adjoint equation \eqref{eq:adjoint} is partially solved subject to $Y$ and $V$ satisfying the BSDE \eqref{eq:BSDE-(Y,V)}.
\\[2ex]
\textbf{Step 3.} Find a candidate for the optimal solution to problem \eqref{pro:min-max}.

Under the maximum principle, the candidate for the optimal solution is given by
\begin{align}
	\pi_c^*(t) = \argmax_{\pi \in \Rb^k} \; \Hc \left(t, {\widehat X}^*(t), \pi, p^*(t), q^*(t), u^*(t,\cdot) \right),
\end{align}
where $\Hc$ is defined in \eqref{eq:Hami} and ${\widehat X}^*$ and $(p^*, q^*, u^*)$ are the wealth process
and the adjoint process associated with strategy $\pi^*_c$. The subscript in $\pi^*_c$ indicates that problem \eqref{pro:qlm} is parameterized by $c \in \Rb$, and note that the same parameter $c$ also enters the definition of $\wx$ in \eqref{eq:X_hat}.
As a result,
%via the first-order condition to the Hamiltonian with respect to $\pi$,
we obtain  $\pi_c^*$ via the first-order condition
\begin{align}\label{eq:Hamiltonian-FOC}
p(t)  B + \sigma q (t) + \int_{(-1, \infty)^m} \eta (z) \, \mathrm{Diag} [ \lam (t) \bullet \nu (\dd z) ] \, u (t, z) = 0,
\end{align}
where $B$ is the risk premium vector of dimension $k \times 1$, $\sig$ is the volatility matrix of dimension $k \times n$, and $\eta$ is the jump size matrix of dimension $k \times m$ (see their definitions in \eqref{eq:notation_vec} and \eqref{eq:eta}).

Substituting \eqref{eq:p-ansatz}, \eqref{eq:q-ansatz} and \eqref{eq:u-ansatz} into the above equation  \eqref{eq:Hamiltonian-FOC}, we get
\begin{align}
\label{eq:pi_can}
\pi_c^* (t) = - \Theta (t, Y (t-), V (t))^{-1} \Zc(t, Y(t-), V(t)) \, {\widehat X}^* (t-), \qquad
\end{align}
where the $k \times k$ matrix $\Theta$ and the $k$-dimensional vector $\Zc$ are defined by
\begin{align}
\Theta (t, Y (t-), V (t)) &:= Y (t-) \bigg ( \sigma \sigma^\top
+ \int_{(-1, \infty)^m} \eta (z)\,  \mathrm{Diag} [ \lam (t) \bullet \nu (\dd z) ] \eta (z)^\top \bigg ) \\
&\qquad + \int_{(-1, \infty)^m} \eta (z) \, \mathrm{Diag} [ V (t) \bullet \lam (t) \bullet \nu (\dd z) ] \, \eta (z)^\top, \label{eq:Theta} \\
\text{ and} \quad \Zc(t, Y(t-), V(t)) &:= B Y (t-) + \int_{(-1,\infty)^m} \eta (z) \, \mathrm{Diag} [ \lam (t) \bullet \nu (\dd z) ] \, V(t). \label{eq:Zc}
\end{align}

We end this step by noting that $\pi_c^*$ given by \eqref{eq:pi_can} is in closed form, once $Y$ and $V$ are identified, which will be achieved in the next step.
%First, $\pi^*$ is in closed forms, once $Y$ and $V$ are identified, which will be achieved in the next step.
%Secondly, to make sure that $\pi^*$ given by \eqref{eq:pi_can} is well defined, we need to check the invertibility of the matrix $\Theta$. This is not clear at this stage, but can be verified formally in the next section.
\\[2ex]
\textbf{Step 4.} Solve the BSDE \eqref{eq:BSDE-(Y,V)} to obtain $Y$ and $V$.

We first plug $\pi_c^*$, given by \eqref{eq:pi_can}, back to \eqref{eq:find-driver}, and identify the driver $f$ of the BSDE \eqref{eq:BSDE-(Y,V)} as follows:
\begin{align}\label{eq:driver}
f (t, Y (t), V (t))
& = \left( 2 r - B^\top \Theta^{-1} \Zc  \right) Y (t)
 - \int_{(-1,\infty)^m} V (t)^\top \, \mathrm{Diag}[ \lam (t) \bullet \nu (\dd z) ] \eta (z)^\top \Theta^{-1} \Zc \nonumber \\
& = 2 r Y (t) - \Zc (t, Y (t), V (t))^\top \Theta (t, Y (t), V (t))^{-1} \Zc (t, Y (t), V (t)) ,
\end{align}
where we have suppressed the arguments of both $\Theta$ and $\Zc$ in the first equality above.

Now with the driver $f$ given by \eqref{eq:driver}, the BSDE \eqref{eq:BSDE-(Y,V)} is fully specified.
We try the following ansatz  for the solution $(Y,V)$ to \eqref{eq:BSDE-(Y,V)}:
\begin{align}
Y (t) &= 2 e^{2 r (T - t) + g (t, \, \lambda (t)) }, \label{eq:Y-ansatz} \\
V_{l} (t) &= 2 e^{2 r (T - t) + g (t, \, \lambda (t))} \big ( e^{\, g (t, \, \lambda (t)+ \beta_{(l)}) - g (t, \, \lambda (t))} - 1 \big ) ,  \qquad l \in \Mb, \label{eq:V-ansatz}
\end{align}
where $g (t, \lambda)$ is yet to be determined and $\beta_{(l)}$ is the $l^{th}$ column of the matrix $\beta = [\beta_{jl}]_{j, l \in \Mb}$, i.e.,
\begin{align}
\beta_{(l)} : = (\beta_{1l}, \beta_{2l}, \ldots, \beta_{ml})^\top.
\end{align}
Recall that $\beta$ captures the impact of mutual-excitation on the intensity process $\lam$; see \eqref{eq:intensity} and \eqref{eq:vec}.

Applying It\^o's formula to $Y$ in \eqref{eq:Y-ansatz}, we have
\begin{align}\label{eq:BSDE-(Y,V)-ansatz}
\dd Y (t) = Y (t) \bigg [ &- 2 r + g_t + g_\lambda^\top \big (  \alpha \lambda_{\infty} + (\beta - \alpha) \lambda (t) \big )
+ \sum^m_{l = 1} \bigg ( e^{g (t, \, \lambda (t)+\beta_{(l)}) - g (t, \, \lambda (t))} - 1\\
& - \sum^m_{j = 1}  \beta_{jl} g_{\lambda_l} \bigg ) \lambda_l (t) \bigg ]  \, \dd t
 + Y (t-) \sum^m_{l = 1} \bigg ( e^{g (t, \, \lambda (t)+\beta_{(l)}) - g (t, \, \lambda (t))} - 1 \bigg ) \dd {\widetilde N}_{l} (t),
\end{align}
where $g_\cdot$ denotes the partial derivative of $g$ with respect to the corresponding argument (e.g., $g_t = \frac{\partial g}{\partial t}(t, \lam) $), and the arguments of $g$ are suppressed in its partial derivatives.

To simplify notations, let us introduce
\begin{align}
U (t, \lambda ) :=& \ \bigg ( e^{g (t, \, \lambda +\beta_{(1)}) - g (t, \, \lambda )} - 1,
\ldots, e^{g (t, \, \lambda +\beta_{(m)}) - g (t, \, \lambda )} - 1 \bigg )^\top, \label{eq:U(b)} \\
 \Gamma (t, \lambda ) :=& \
\frac{1}{Y(t-)} \, \Theta(t, Y (t-), V (t)) \\
=& \ \sig \sig^\top + \int_{(-1, \infty)^m} \; \eta(z) \, \mathrm{Diag}[(U(t, \lambda) + \bm{1}_m) \bullet \lam  \bullet \nu(\dd z)] \, \eta(z)^\top,
\label{eq:Gamma(t,b)} \\
\wz(t, \lambda) :=& \ \frac{1}{Y(t-)} \, \Zc(t, Y(t-), V(t))  \\
=& \ B + \int_{(-1, \infty)^m} \; \eta(z) \, \mathrm{Diag}[\lam  \bullet \nu(\dd z)] \, U (t, \lambda ),
\label{eq:Z_hat}
\end{align}
where  $\Theta$ and $\Zc$ are defined respectively by \eqref{eq:Theta} and \eqref{eq:Zc}, and $\bm{1}_m$ is the $m$-dimensional vector with all elements equal to 1.
It is important to notice that, with the  ansatz
of $Y$ and $V$ in \eqref{eq:Y-ansatz} and \eqref{eq:V-ansatz}, $\Gamma$ and $\wz$ defined above are indeed functions of $t$ and $\lambda$ only, and are independent of $Y$ and $V$.
By the above definitions, we easily see
\begin{align}
V (t) = Y (t) \, U (t, \lambda (t)), \quad
\Theta (t, Y (t-), V (t)) = Y (t-) \, \Gamma (t, \lambda (t)), \quad
\Zc(t, Y(t-), V(t)) = Y(t-) \wz(t, \lambda (t)).
\end{align}
In addition, we reorganize $\pi_c^*$ in \eqref{eq:pi_can} and obtain
\begin{align}
\label{eq:pi_new}
\pi_c^*(t) = - \Gamma(t, \lambda (t))^{-1} \, \wz(t, \lambda (t)) \, \wx^*(t-).
\end{align}

Using $U$ and $\Gamma$ in \eqref{eq:U(b)} and \eqref{eq:Gamma(t,b)}, we rewrite the BSDE \eqref{eq:BSDE-(Y,V)} in the differential form
\begin{align}
\dd Y (t) = &- Y (t) \bigg [ 2 r  - \wz(t, \lambda(t))^\top \Gamma (t, \lambda (t))^{-1} \wz(t, \lambda (t)) \bigg ] \dd t
+ Y (t-) U (t, \lambda (t))^\top \, \dd {\widetilde N} (t) . \label{eq:BSDE-(Y,V)-comp}
\end{align}

By matching the drift terms in \eqref{eq:BSDE-(Y,V)-ansatz} and \eqref{eq:BSDE-(Y,V)-comp}, we obtain
the following non-local partial differential equation of $g: [0,T] \times \Rb_+^m \to \Rb$
\begin{align}\label{eq:PDE-g}
\begin{cases}
g_t(t, \lam) + g_\lambda(t,\lam)^\top \, \alpha (\lambda_{\infty} - \lambda )
	+ U (t, \lambda)^\top \lambda = \wz(t, \lambda)^\top \Gamma (t, \lambda)^{-1} \wz(t, \lambda), & t < T, \\
g (t, \cdot) = 0, & t=T .	
\end{cases}
\end{align}
Recall matrix $\alpha$ and vector $\lam_\infty$ are defined by \eqref{eq:vec}, while $U$, $\Gamma$ and $\wz$ are defined respectively by \eqref{eq:U(b)}, \eqref{eq:Gamma(t,b)} and \eqref{eq:Z_hat}.
In \eqref{eq:PDE-g}, $U$ has a non-local dependence on $g$ (see \eqref{eq:U(b)}), which originates 
from the mutual-excitation feature of the Hawkes process. This completes the 4-step procedure.

\vspace{2ex}

We next present a technical result regrading the bounds of $e^{g(t, \lam)}$, where $g$ solves \eqref{eq:PDE-g}.

\begin{lemma}
\label{lem:g}
Let $g (\cdot, \cdot)$ be the solution to \eqref{eq:PDE-g}. Then there exists $\epsilon > 0$ such that
\begin{align}
	\label{eq:g-bound}
	\epsilon \leq e^{g (t, \lam (t))} \leq 1 , \qquad \forall \, t \in [0, T] , \qquad \text{and} \qquad
	\epsilon \leq e^{g (0, \lam_0)} < 1 .
\end{align}
where $\lam(\cdot)$ is the intensity process satisfying the Hawkes model \eqref{eq:intensity}.
\end{lemma}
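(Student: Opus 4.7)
My plan is to recast the non-local PDE \eqref{eq:PDE-g} as a linear equation for $h := e^g$ and extract both bounds from a resulting Feynman--Kac representation.

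\textbf{Linearization via $h = e^g$.} The key observation is that $U_l(t, \lam)\,h(t, \lam) = h(t, \lam + \beta_{(l)}) - h(t, \lam)$, so multiplying \eqref{eq:PDE-g} through by $h$ eliminates the exponential non-linearity and yields
\begin{equation}
h_t + h_\lam^\top \alpha(\lam_\infty - \lam) + \sum_{l=1}^m [h(t, \lam + \beta_{(l)}) - h(t, \lam)]\, \lam_l = \Phi(t, \lam)\, h(t, \lam),
\end{equation}
with $h(T, \cdot) \equiv 1$ and $\Phi(t,\lam) := \wz(t,\lam)^\top \Gamma(t,\lam)^{-1} \wz(t,\lam)$. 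The left-hand side is precisely the action of the (space-time) infinitesimal generator of the Hawkes intensity \eqref{eq:intensity} on $h$. Applying Lemma \ref{lem:Sigma} to $\Gamma$ (which has the same form as $\Sigma$ with $\lam$ replaced by the non-negative vector $(U + \bm{1}_m) \bullet \lam$) yields $\Phi \geq 0$.

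\textbf{Feynman--Kac representation and upper bound.} By the standard Feynman--Kac formula for the piecewise-deterministic Markov process $\lam$,
\begin{equation}\label{eq:FK-plan}
h(t, \lam(t)) = \Eb_t\!\left[\exp\!\left(-\int_t^T \Phi(s, \lam(s))\, ds\right)\right].
\end{equation}
Since $\Phi \geq 0$, the integrand is bounded above by $1$, immediately delivering $h(t, \lam(t)) \leq 1$ pathwise for all $t \in [0, T]$.

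\textbf{Strict inequality and lower bound.} For $h(0, \lam_0) < 1$, I would observe that $\wz(s, 0) = B \neq 0$ under the standard non-triviality assumption on the MV market, so by continuity $\Phi > 0$ on an open neighborhood of $(0, \lam_0)$ visited with positive probability on some interval $[0, \delta]$; this makes the exponent in \eqref{eq:FK-plan} strictly negative on a set of positive $\Pb$-measure, and strict inequality follows. For the lower bound, Jensen's inequality applied to \eqref{eq:FK-plan} gives
\begin{equation}
h(t, \lam(t)) \geq \exp\!\left(-\Eb_t\!\left[\int_t^T \Phi(s, \lam(s))\, ds\right]\right),
\end{equation}
so it suffices to control the conditional expectation on the right. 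Combining Assumption \ref{ass:market} (which gives $\Gamma \geq \sigma \sigma^\top \geq \epsilon I$, hence $\Phi \leq |\wz|^2/\epsilon$), the already-obtained bound $h \leq 1$ (which controls $|U|$ in terms of $h$), and finite second moments of the Hawkes intensity on $[0, T]$, a bootstrap argument closes the estimate $\Eb_t[\int_t^T \Phi(s, \lam(s))\, ds] < \infty$, yielding a pathwise $\epsilon > 0$.

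\textbf{Main obstacle.} The non-local $U$ makes $\Phi$ depend on $h$ itself, so \eqref{eq:FK-plan} is really a fixed-point identity rather than a genuinely linear Feynman--Kac representation. This is harmless for the upper bound (since $\Phi \geq 0$ regardless of $h$), but for the lower bound I first need to exploit $h \leq 1$ to tame $|U|$ and then carefully propagate moment estimates of $\lam$ through the fixed point; closing this bootstrap is the main technical hurdle.
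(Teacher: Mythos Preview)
Your linearization $h = e^g$ and the resulting Feynman--Kac identity are exactly how the paper proceeds (see their It\^o expansion of $e^{g(t,\lam(t))}$ and equation \eqref{eq:e^g}), and your upper bound $h \le 1$ via $\Phi \ge 0$ is equivalent to their submartingale argument for $e^{2rt}Y(t)$. So that part is fine and matches the paper.

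There are, however, two genuine gaps in the remaining steps.

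\textbf{Strict inequality.} Your argument reads ``$\wz(s,0)=B\neq 0$, so by continuity $\Phi>0$ on an open neighborhood of $(0,\lam_0)$''. But $\wz(s,0)$ is evaluated at $\lam=0$, which tells you nothing about $\lam=\lam_0>0$; continuity in $\lam$ does not transport positivity from $0$ to $\lam_0$. The natural fix is to work near $t=T$ instead: since $g(T,\cdot)=0$, one has $U(t,\cdot)\to 0$ and hence $\wz(t,\cdot)\to B$ as $t\uparrow T$, giving $\Phi>0$ on $[T-\delta,T]$. The paper instead argues by contradiction: if $h(0,\lam_0)=1$ then $h\equiv 1$ along the path by monotonicity, which forces $\wz(t,\lam(t))=0$ for all $t$; but $\wz$ depends on the \emph{stochastic} process $\lam(t)$, so this identity cannot hold $\Pb$-a.s.\ for all $t$.

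\textbf{Lower bound.} Your Jensen bound
\[
h(t,\lam(t)) \ge \exp\!\Big(-\Eb_t\Big[\int_t^T \Phi(s,\lam(s))\,ds\Big]\Big)
\]
requires controlling $\Phi$, and hence $|U|$. But $U_l = h(t,\lam+\beta_{(l)})/h(t,\lam) - 1$, so bounding $|U_l|$ from above needs a \emph{lower} bound on $h$---precisely what you are trying to prove. The upper bound $h\le 1$ alone does not tame $|U|$, so the bootstrap you describe is genuinely circular and, as you suspect, does not close without an additional idea. The paper avoids this entirely: from \eqref{eq:e^g} it extracts the submartingale property $\Eb_{t_1}[h(t_2,\lam(t_2))] \ge h(t_1,\lam(t_1))$, which gives $h(t,\lam(t)) \ge h(0,\lam_0)$ in the appropriate sense, and then argues $h(0,\lam_0)>0$ by a financial observation (if it vanished, the efficient frontier \eqref{eq:efficient-frontier} would have zero variance for every target $\xi$, contradicting the fact that $\pi^*$ is not risk-free). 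This monotonicity-plus-economic argument sidesteps your moment bootstrap completely, and you should adopt it rather than trying to close the Jensen estimate.
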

%\vspace{-2ex}
\begin{proof}
	See Appendix \ref{app:lem}.
\end{proof}

We provide the solvability results of \eqref{eq:PDE-g} in the following lemma.

\begin{lemma}
\label{lem:g-solvability}
The non-local partial differential equation \eqref{eq:PDE-g} admits a unique solution.
\end{lemma}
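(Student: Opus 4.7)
The plan is to exploit the probabilistic representation already set up in Steps 1--4: by construction, the non-local PDE \eqref{eq:PDE-g} is the Markovian counterpart of the BSDE \eqref{eq:BSDE-(Y,V)-comp}, and the ansatz \eqref{eq:Y-ansatz}--\eqref{eq:V-ansatz} provides a bijection between classical solutions $g$ of \eqref{eq:PDE-g} (living on $[0,T] \times \mathbb{R}_+^m$) and Markovian solutions $(Y,V)$ of that BSDE. I would therefore solve the BSDE first and then pull the result back through the ansatz, writing the equation in the standard integral form
\begin{align*}
Y(t) = 2 + \int_t^T f(s, \lambda(s), Y(s-), V(s)) \, \mathrm{d}s - \int_t^T V(s)^\top \, \mathrm{d}\widetilde{N}(s),
\end{align*}
with driver $f$ given by \eqref{eq:driver}, viewed as an explicit functional of $(Y, V, \lambda)$ through \eqref{eq:Theta}--\eqref{eq:Zc}.

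The first obstacle is that $f$ is not globally Lipschitz in $(y, v)$: it depends on $v$ through the inverse matrix $\Theta^{-1}$, and both $\Theta$ and $\widehat{\mathcal{Z}}$ contain the unbounded intensity $\lambda(t)$. To control this, I would use Lemma \ref{lem:g} to obtain the two-sided \emph{a priori} bounds $2\epsilon \leq Y(t) \leq 2 e^{2r(T-t)}$ on any Markovian solution, together with the bound $V_l(t)/Y(t) = e^{g(t,\lambda(t)+\beta_{(l)}) - g(t,\lambda(t))} - 1 \in [\epsilon - 1, \epsilon^{-1} - 1]$ for each $l \in \mathbb{M}$. These estimates justify a truncation: replace $f$ by a bounded and globally Lipschitz driver $\bar{f}$ which coincides with $f$ on the admissible region
\begin{align*}
\mathcal{R} := \bigl\{ (y, v) : 2\epsilon \leq y \leq 2 e^{2rT}, \; |v_l| \leq (\epsilon^{-1} - 1)\, y \text{ for all } l \in \mathbb{M} \bigr\}.
\end{align*}
On $\mathcal{R}$ the matrix $\Theta$ remains uniformly positive definite by Lemma \ref{lem:Sigma}, so $\Theta^{-1}$ and $\widehat{\mathcal{Z}}$ depend smoothly and Lipschitz-continuously on $(y, v)$.

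For the truncated BSDE, I would apply the classical existence and uniqueness theorem for Lipschitz BSDEs driven by a compensated random measure (adapted to the Hawkes setting, where $\widetilde{N}$ is a martingale of finite $L^2$ variation on $[0,T]$) to obtain a unique solution $(\bar{Y}, \bar{V}) \in \mathcal{S}^2_{\mathcal{F}}(0,T;\mathbb{R}) \times \mathcal{L}^{2,N}_{\mathcal{F}}(0,T;\mathbb{R}^m)$. The heart of the proof is then to verify that $(\bar{Y}, \bar{V})$ takes values in $\mathcal{R}$ almost surely, in which case $\bar{f} = f$ along the trajectory and the untruncated BSDE is solved. This would be done by an Itô computation on $\log \bar{Y}$ which reproduces the argument of Lemma \ref{lem:g} at the BSDE level: matching the finite-variation part of $\log \bar{Y}$ against the PDE characteristics $\dot{\lambda} = \alpha(\lambda_\infty - \lambda)$ and the jump kernel gives precisely the bounds yielding $(\bar Y, \bar V) \in \mathcal{R}$.

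Finally, to recover $g$, I would invoke the joint Markov property of $(X, \lambda)$ to write $\bar{Y}(t) = \varphi(t, \lambda(t))$ for a deterministic function $\varphi$, then set $g(t, \lambda) := \log\bigl(\varphi(t, \lambda)/2\bigr) - 2r(T - t)$; formula \eqref{eq:V-ansatz} automatically recovers $\bar V$ from $\varphi$, so uniqueness of the PDE solution follows from uniqueness in the BSDE. The main technical difficulty will be the admissibility verification $(\bar Y, \bar V) \in \mathcal{R}$: it is this step that couples the truncation (purely analytic) to the Hawkes-specific a priori bounds, and it must be carried out carefully because $\lambda(t)$ is unbounded and the jump part of $\bar Y$ is driven by a non-trivially cross-excited intensity.
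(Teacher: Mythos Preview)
Your route is genuinely different from the paper's. The paper does not invoke BSDE well-posedness theory at all for existence: it passes to $\widetilde g := e^{g}$, writes the Feynman--Kac representation \eqref{eq:wg}, and applies Schauder's fixed point theorem to the nonlinear integral operator $\mathcal T$ acting on the compact convex set $\mathcal S_{\mathcal F,L}(0,T;[\epsilon,1])$. Uniqueness is then obtained indirectly, from the uniqueness of the optimal control in the LQ problem \eqref{pro:qlm}: two distinct solutions $g_1,g_2$ would produce two distinct first adjoint components $p_i^* = Y_i\widehat X^*$, contradicting uniqueness of the adjoint BSDE \eqref{eq:adjoint}. Your plan --- truncate the driver, solve a Lipschitz BSDE, verify the solution lands in the truncation region, then read off $g$ via the Markov property --- is a perfectly reasonable alternative in spirit, and would give uniqueness more directly.

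However, there is a concrete gap. You assert that the truncated driver $\bar f$ can be taken ``bounded and globally Lipschitz'', but the region $\mathcal R$ you define constrains only $(y,v)$, not the intensity. Even for $(y,v)\in\mathcal R$, the driver
\[
f(t,y,v)=2ry-\mathcal Z(t,y,v)^\top\Theta(t,y,v)^{-1}\mathcal Z(t,y,v)
\]
still depends on $\lambda(t)$ through $\Theta$ and $\mathcal Z$ (see \eqref{eq:Theta}--\eqref{eq:Zc}), and a short computation shows that on $\mathcal R$ the quadratic form $\widehat{\mathcal Z}^\top\Gamma^{-1}\widehat{\mathcal Z}$ grows linearly in $|\lambda(t)|$. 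Hence no $\bar f$ that agrees with $f$ on $\mathcal R$ can be bounded, and its Lipschitz constant in $(y,v)$ is necessarily random, of order $|\lambda(t)|$. The ``classical existence and uniqueness theorem for Lipschitz BSDEs'' you invoke therefore does not apply as stated; you would need a version with stochastic Lipschitz coefficients and must check the relevant integrability of $\lambda$ (this is where the Hawkes structure bites, exactly as the paper warns). The same unboundedness also complicates your verification step: the submartingale argument behind Lemma~\ref{lem:g} uses the sign of $\widehat{\mathcal Z}^\top\Gamma^{-1}\widehat{\mathcal Z}$, which survives truncation, but the lower bound $Y\ge 2\epsilon$ in the paper is obtained \emph{a posteriori} from the efficient-frontier argument in Theorem~\ref{thm:op}, not from a direct BSDE comparison, so you would have to supply an independent proof of it for the truncated equation.
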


\begin{proof}
	See Appendix \ref{app:pde}.
\end{proof}

We end this section with several important remarks on the above heuristic analysis of problem \eqref{pro:qlm} and explain how the analysis here will help solve the MV problem \eqref{pro:mv} in Section \ref{sec:op}.
\\[-5ex]
\begin{itemize}
	\item Notice that problem \eqref{pro:qlm} aims to minimize the expectation of a convex function of $\wx(T)$ and the admissible set $\Ac$ is also convex, as easily seen from Definition \ref{def:adm}. In consequence, we can also apply the HJB equation approach to solve problem \eqref{pro:qlm}; see Appendix \ref{app:hjb}. Indeed, the solution obtained by the HJB equation approach matches that obtained by the stochastic maximum principle approach,
which is due to the inherent relationship between the two approaches. One can refer to \cite{oksendal2005applied} for such relationship under the Poisson jump case.
	If the jump process is not a Hawkes process but a standard Poisson process, one can follow the standard procedure to verify that
	$\pi^*_c$ given by \eqref{eq:pi_new} is indeed the optimal solution to problem \eqref{pro:qlm}. We refer interested readers to \cite{yong1999stochastic}[Chapter 3, Section 5] for the sufficient conditions of optimality on general stochastic control problems without jumps. However, jumps in our model are derived from a multivariate Hawkes process and, due to the unboundedness of its intensity process $\lam$, existing results cannot be readily applied to confirm the optimality of $\pi^*_c$.
	
	 \item Given that the wealth SDE \eqref{eq:wealth} is linear and the objective $J_2$ is quadratic, problem \eqref{pro:qlm} is a linear-quadratic (LQ) control problem. By LQ theory, we naturally conjecture that $Y \wx^2$ is a submartingale for any $\pi \in \Ac$ and a martingale under $\pi^*_c$.
	 	Assume for now that such a conjecture is correct, which will be verified in the next section, then we immediately have
	 	\begin{align}
	 		\frac{1}{2} \Eb \Big[ Y(T)  \big( \wx^*(T) \big)^2 \Big] = 	\frac{1}{2}  Y(0)  \big( \wx^*(0) \big)^2  \le \frac{1}{2} \Eb \Big[ Y(T)  \big( \wx(T) \big)^2 \Big],
	 	\end{align}
 	where $\wx^*$ denotes the wealth process under the ``optimal'' strategy $\pi^*_c$. As a result, we observe after noticing $Y(T) = 2$ that
	 	\begin{align}
	 		\label{eq:V2}
	 	\Jc_2(x_0, \lam_0; c) = \min_{\pi \in \Ac} \, \frac{1}{2} \Eb \Big[ Y(T)  \big( \wx(T) \big)^2 \Big]
	 	= e^{\, 2rT + g(0, \lam_0)} \big(x_0 - c \, e^{-rT} \big)^2 ,
	 	\end{align}
 where we have used \eqref{eq:X_hat} and \eqref{eq:Y-ansatz}.

    \item Once the value function to problem \eqref{pro:qlm} is found via \eqref{eq:V2}, we next solve problem \eqref{pro:theta}, a simple quadratic problem of  $\theta$ (i.e., the Lagrangian multiplier), and obtain its solution $\theta^*$. Now applying the arguments at the end of Section \ref{sec:prob} (see also \cite{lim2002mean}
        and \cite{zhou2003markowitz}), we conclude that $\pi^*_c$ with $c$ replaced by $\xi - \theta^*$ is the efficient strategy to the MV problem \eqref{pro:mv}, upon a successful verification of $\pi^*_{\xi - \theta^*} \in \Ac$.

\end{itemize}

\section{Efficient strategy and efficient frontier}
\label{sec:op}

In this section, we provide a complete solution to the mean-variance portfolio selection problem \eqref{pro:mv}. Theorem \ref{thm:op} obtains the efficient strategy and the efficient frontier to problem \eqref{pro:mv}, both in semi-closed form. To prove this theorem, we follow the key steps outlined in the above remarks in the previous section.
In particular, to show $Y \wx^2$ is a martingale under the optimal strategy $\pi^*_c$, we apply the idea of ``completing the square'' in LQ control theory (see \cite{yong1999stochastic} and \cite{zhou2000continuous}). A technical difficulty in the proof of Theorem \ref{thm:op} is to verify that the candidate efficient strategy is admissible.

\begin{theorem}
\label{thm:op}
Let $g (\cdot, \cdot)$ be the solution to  \eqref{eq:PDE-g}, and
$\Gamma (\cdot, \cdot)$ and $\wz (\cdot, \cdot)$ be defined by \eqref{eq:Gamma(t,b)} and \eqref{eq:Z_hat}, respectively.
The efficient strategy $\pi^*=\{\pi^*(t)\}_{t \in [0,T]}$ of the mean-variance portfolio selection problem \eqref{pro:mv} is
given by
\begin{align}\label{eq:efficient-strategy}
\pi^* (t) =
- \Gamma (t, \lambda (t))^{-1} \, \wz(t, \lambda(t)) \, \left( X^*(t-) - (\xi - \theta^*) e^{- r (T - t)} \right),
\end{align}
and the efficient frontier is given by
\begin{align}\label{eq:efficient-frontier}
\mathrm{Var} [X^* (T)] = \dfrac{e^{g (0, \lambda_0)}}{1 - e^{g (0, \lambda_0)}} \left(x_0 e^{r T} - \xi \right)^2,
\qquad \xi \geq x_0  e^{rT} ,
\end{align}
where the corresponding Lagrange multiplier $\theta^*$ is obtained by
\begin{align}
\label{eq:theta_op}
\theta^* = \dfrac{e^{g (0, \lambda_0)}}{1 - e^{g (0, \lambda_0)}} \left(x_0 e^{r T} - \xi \right) .
\end{align}
Here, $X^*$ is the solution to the SDE \eqref{eq:wealth} under strategy $\pi^*$, and is called the optimal wealth process.
\end{theorem}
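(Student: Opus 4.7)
The plan is to execute the roadmap sketched in the remarks closing Section \ref{sec:deri}. First I would prove that $\pi^*_c$ given by \eqref{eq:pi_new} solves the auxiliary quadratic-loss problem \eqref{pro:qlm} and identify the value function $\Jc_2(x_0,\lam_0;c)$; next, maximize $J_3$ in the Lagrange multiplier $\theta$ to obtain $\theta^*$; finally, set $c = \xi - \theta^*$ to read off the efficient strategy and frontier.

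The heart of the argument is a completing-the-square calculation. With $Y$ given by the ansatz \eqref{eq:Y-ansatz} (well defined since $g$ solving \eqref{eq:PDE-g} exists by Lemma \ref{lem:g-solvability}), I would apply It\^o's formula to $Y(t)\wx(t)^2$ for an arbitrary admissible $\pi \in \Ac$. Substituting the driver $f$ from \eqref{eq:driver} and using the identifications $V = YU$, $\Theta = Y\Gamma$, $\Zc = Y\wz$, the cross terms should collapse into a perfect square, yielding
\begin{align*}
\dd\big(Y(t)\wx(t)^2\big) &= Y(t-)\, \Delta(t)^\top \Gamma(t,\lam(t)) \Delta(t) \, \dd t + \dd M(t), \\
\Delta(t) &:= \pi(t) + \Gamma(t,\lam(t))^{-1}\wz(t,\lam(t))\,\wx(t-),
\end{align*}
where $M$ is a local martingale. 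Positivity of $Y$ (via Lemma \ref{lem:g}) and of $\Gamma$ (via Lemma \ref{lem:Sigma} together with $U_l + 1 = e^{g(t,\lam+\beta_{(l)})-g(t,\lam)} > 0$) render the drift nonnegative, vanishing precisely when $\pi$ follows the pointwise feedback rule; adopting this feedback gives $\pi^*_c$ in closed loop. After a standard localization and taking expectations, I obtain $\Eb[Y(T)\wx(T)^2] \geq Y(0)\wx(0)^2$ with equality under $\pi^*_c$. Since $Y(T) = 2$, this gives
\begin{align*}
\Jc_2(x_0,\lam_0;c) = e^{\,2rT + g(0,\lam_0)}\,\big(x_0 - c\,e^{-rT}\big)^2.
\end{align*}

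The remaining steps are algebraic. Substituting $c = \xi - \theta$ turns \eqref{pro:theta} into the concave quadratic $J_3(\theta) = e^{g(0,\lam_0)}(x_0 e^{rT} - \xi + \theta)^2 - \theta^2$, whose concavity comes from $e^{g(0,\lam_0)} < 1$ (Lemma \ref{lem:g}); the first-order condition then produces $\theta^*$ as in \eqref{eq:theta_op}. Substituting $\theta^*$ back into $J_3$ and using $\Eb[X^*(T)] = \xi$ (so that $J(\pi^*) = \mathrm{Var}[X^*(T)]$) produces the frontier \eqref{eq:efficient-frontier}, and plugging $c = \xi - \theta^*$ into \eqref{eq:pi_new} yields \eqref{eq:efficient-strategy}.

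I expect the main technical obstacle to be verifying admissibility of $\pi^*_c$. Because $\Gamma(t,\lam(t))^{-1}$, $\wz(t,\lam(t))$, and the closed-loop wealth $\wx^*$ all depend on the unbounded Hawkes intensity $\lam$, establishing $\pi^*_c \in \Lc^2_\Fc(0,T;\Rb^k)$ and $\eta^\top \pi^*_c \in \Lc^{2,N}_\Fc(0,T;\Rb^m)$ will require moment estimates on $\lam$ (exploiting the $[\epsilon,1]$ bound on $e^g$ from Lemma \ref{lem:g} to control $U$ and hence $\Gamma^{-1}$, $\wz$) combined with a Gr\"onwall-type argument on the closed-loop SDE for $\wx^*$. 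This same moment control is also what legitimizes the interchange of limits in the completing-the-square step, so it is both the principal technical hurdle and the piece that ties the argument together.
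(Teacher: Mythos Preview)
Your outline for Steps~1 and~2 matches the paper's proof almost exactly: the completing-the-square computation on $Y\wx^2$ with localization, followed by the quadratic optimization in $\theta$ using $e^{g(0,\lam_0)}<1$, is precisely how the paper proceeds.

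The genuine divergence is in the admissibility verification (Step~3). You propose to control $\pi^*$ by combining the $[\epsilon,1]$ bound on $e^g$, moment estimates on $\lam$, and a Gr\"onwall argument on the closed-loop SDE for $\wx^*$. The paper instead exploits the adjoint structure: from the explicit value function one reads $\Eb[(\wx^*(T))^2]<\infty$, so the adjoint BSDE \eqref{eq:adjoint} has a unique solution $(p^*,q^*,u^*)\in \Sc^2_\Fc\times\Lc^2_\Fc\times\Lc^{2,N}_\Fc$; then, using the lower bound $Y\ge\delta>0$ together with the identifications $p^*=Y\wx^*$, $q^*=\sig^\top\pi^* Y$, and $Y\eta^\top\pi^*=u^*-\wx^* V$, the required integrability of $\wx^*$ and $\pi^*$ is read off directly from that of $(p^*,q^*,u^*)$.

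Your route is in principle viable but faces a real obstacle the paper's approach sidesteps: the closed-loop SDE for $\wx^*$ is linear in $\wx^*$ with coefficients that grow (through $\wz$) linearly in the \emph{unbounded} Hawkes intensity $\lam$. A Gr\"onwall bound on $\Eb[\sup_t|\wx^*(t)|^2]$ would then typically require exponential-type moments of $\int_0^T|\lam(s)|\,\dd s$, which are delicate for Hawkes processes and not supplied anywhere in the paper. The adjoint-equation trick avoids this entirely by bootstrapping from the finiteness of the terminal second moment, so it is both cleaner and more robust here.
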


\begin{proof}
We divide the proof into three steps.	

\vspace{1ex}	
\noindent
\textbf{Step 1:} Apply the ``completing the square'' to obtain $\pi^*_c$.

Recall that the dynamics of $\wx$ is given by \eqref{eq:dX_hat} and that of $Y$ in \eqref{eq:BSDE-(Y,V)-comp} from Section \ref{sec:deri}.
By applying It\^o's formula to $\wx^2$ and $Y \wx^2$, we obtain
\begin{align}
\dd {\widehat X}^2 (t) &= \bigg ( 2 r {\widehat X}^2 (t) + 2 {\widehat X} (t) \pi (t)^\top B
+ \pi (t)^\top \sigma \sigma^\top \pi (t)   + \int_{(-1, \infty)^m} \pi (t)^\top \eta (z) \, \mathrm{Diag} [\lam (t) \bullet \nu (\dd z)] \, \eta (z)^\top \pi (t) \bigg ) \dd t  \\
&\quad+ 2 {\widehat X} (t) \pi (t)^\top \sigma \, \dd W (t) + \int_{(-1, \infty)^m} 2 {\widehat X} (t-) \pi (t)^\top \eta (z) {\widetilde N} (\dd t, \dd z) \\
&\quad + \int_{(-1, \infty)^m} \pi (t)^\top \eta (t, z) \, \mathrm{Diag} [{\widetilde N} (\dd t, \dd z)] \, \eta (z)^\top \pi (t) ,
\end{align}
and
\begin{align}
\dd Y (t) {\widehat X}^2 (t)
&= Y (t) \bigg [ \pi (t) + \Gamma (t, \lam (t))^{-1} \wz(t, \lam (t)) {\widehat X} (t) \bigg ]^\top
 \Gamma (t, \lam (t)) \underbrace{\bigg [ \pi (t) + \Gamma (t, \lam (t))^{-1} \wz(t, \lam (t)) {\widehat X} (t) \bigg ]}_{:= \Uc(t,\lam(t))}
 \dd t  \\
&\quad + 2 Y (t) {\widehat X} (t) \pi (t)^\top \sigma \, \dd W (t) + \int_{(-1, \infty)^m} Y (t-) \pi (t)^\top \eta (z) \, \mathrm{Diag} [U (t, \lam (t)) \bullet {\widetilde N} (\dd t, \dd z)] \,
\eta (z)^\top \pi (t)\\
&\quad + \int_{(-1, \infty)^m} Y (t-) {\widehat X} (t-) \big ( 2 \pi (t)^\top \eta (z)
+ {\widehat X} (t-) U (t, \lam (t)) \big ) {\widetilde N} (\dd t, \dd z) . \label{eq:SDE-YX2}
\end{align}

Define a sequence of stopping times $\{\tau_i\}_{i=1,2,\ldots}$ by
\begin{align}
\tau_i := \inf \left\{ t \ge 0 \, \Big| \, |Y(t) \wx(t) |>i \right\}, \qquad i=1,2,\ldots \, .
\end{align}
It is easily seen that $\tau_i \uparrow \infty$ and $(T \wedge \tau_i) \uparrow T$, as $i \to \infty$. We apply the localization technique, i.e., integrating from $0$ to $T \wedge \tau_i$
and taking expectations on both sides of \eqref{eq:SDE-YX2}, and eventually get
\begin{align}\label{eq:expectation-localization}
\frac{1}{2} {\mathbb E} \big [ Y (T \wedge \tau_i) (X (T \wedge \tau_i) - c)^2 \big ] - \frac{1}{2} Y (0) \left(x_0 - c e^{- r T} \right)^2 &=
\frac{1}{2} {\mathbb E} \bigg [ \int^{T \wedge \tau_i}_0 Y (t) \, \Uc(t, \lam (t))^\top
 \Gamma (t, \lam (t)) \, \Uc(t, \lam(t)) \dd t \bigg ],
\end{align}
where we have used the integrability conditions in Definition \ref{def:adm}.
Note that $X (\cdot) \in {\cal S}^2_{\cal F} (0, T; {\mathbb R})$, $Y (\cdot)$ is bounded, and the integrand on
the right-hand of the above equality is positive. Then by the dominated and monotone convergence theorems
to the left-hand side and the right-hand side of \eqref{eq:expectation-localization}, we obtain, when sending $i$ to $\infty$, that
\begin{align}
{\mathbb E} \big [ (X (T) - c)^2 \big ] - \frac{1}{2} Y (0) \left(x_0 - c e^{- r T} \right)^2
= \frac{1}{2} {\mathbb E} \bigg [ \int^T_0  Y (t) \, \Uc(t, \lam (t))^\top \,
 \Gamma (t, \lam (t)) \, \Uc(t, \lam(t)) \dd t  \bigg ] ,
\end{align}
where we have used the terminal value $Y(T)=2$.

Therefore, by setting $\Uc(t, \lam(t)) = 0$, we find the (candidate) optimal control by
\begin{align}\label{eq:optimal-control}
\pi_c^* (t) = - \Gamma (t, \lam (t))^{-1} \underbrace{\bigg ( B
+ \int_{(-1, \infty)^m} \eta (z)\, \mathrm{Diag}[\lam (t) \bullet \nu (\dd z)] \, U (t, \lam (t)) \bigg )}_{=\wz(t, \lam (t))} \underbrace{\left( X^*(t-) - c e^{- r (T - t)} \right)}_{ = \wx^*(t-)}, \quad
\end{align}
%which is identical to \eqref{eq:efficient-strategy}, with the exception that $\theta^*$ is yet to be found.
which coincides with the one in \eqref{eq:pi_new}.

We end this step with an important observation. In Section \ref{sec:deri}, we obtain $\pi^*_c$ in \eqref{eq:pi_new} by the stochastic maximum principle as a necessary condition for optimality.
	Here by the so-called ``completing the square'' technique, we easily see that $\pi^*_c$ given by \eqref{eq:optimal-control} is indeed optimal to the quadratic-loss minimization problem \eqref{pro:qlm} (a sufficient condition for optimality).
	Note that sufficiency is guaranteed by the facts that $Y >0$ and $\Gamma$ is positive definite; see \eqref{eq:Y-ansatz} and \eqref{eq:Gamma(t,b)}.

\vspace{1ex}
\noindent
\textbf{Step 2:} Solve problem \eqref{pro:theta} to get $\theta^*$ and obtain the efficient strategy and
the efficient frontier.

Recall that problem \eqref{pro:theta} is to maximize $\Jc_2(x_0, \lam_0; \xi - \theta) - \theta^2$ over all $\theta \in \Rb$, where $\Jc_2$ is  the value function to problem \eqref{pro:qlm} and is obtained by (see \eqref{eq:V2})
%The optimal objective value is given by
\begin{align}\label{eq:optimal-cost}
\Jc_2(x_0, \lam_0;c)
= {\mathbb E} \big [ ({\widehat X}^* (T))^2 \big ] = {\mathbb E} \big [ (X^* (T) - c)^2 \big ] = \frac{1}{2} Y (0) \left(x_0 - c e^{- r T} \right)^2 .
\end{align}
Now noting $c = \xi - \theta$, we obtain $\theta^*$ by
\begin{align}
\theta^* %&= \argmax_{\theta \in \Rb} \; \big \{ \Jc_2(\theta) - \theta^2 \big \} = \argmax_{\theta \in \Rb} \; J_1 (x_0, \lambda_0; \pi^*, \theta) \\
&= \argmax_{\theta \in \Rb} \; \bigg \{ \frac{1}{2} Y (0) \left(x_0 - (\xi - \theta) e^{- r T} \right)^2 - \theta^2 \bigg \} \\
&= \dfrac{ e^{- 2 r T} Y (0)}{2 -  e^{- 2 r T} Y (0)} \left(x_0 e^{rT} - \xi \right),
\end{align}
which reduces to \eqref{eq:theta_op} by using \eqref{eq:Y-ansatz}.
The result of Lemma \ref{lem:g} guarantees that the first-order condition to problem \eqref{pro:theta} is also sufficient and the above $\theta^*$ is indeed the optimal solution to problem \eqref{pro:theta}.

With $\theta^*$ obtained as in \eqref{eq:theta_op}, we immediately get the efficient strategy $\pi^*$ by using  $\pi^* = \pi^*_{\xi - \theta^*}$ and the general expression of $\pi^*_c$ in \eqref{eq:optimal-control}.
Using the definition of $\wx$ in \eqref{eq:X_hat} again then leads to the result of $\pi^*$ in \eqref{eq:efficient-strategy}.
Finally, by the Lagrangian duality theorem, we obtain the efficient frontier in \eqref{eq:efficient-frontier} via
\begin{align}
	\mathrm{Var}[X^*(T)] = {\mathbb E} \big [ \big(X^* (T) - (\xi - \theta^*) \big)^2 \big ] - (\theta^*)^2.
\end{align}

\vspace{1ex}
\noindent
\textbf{Step 3:} Verify that $\pi^*$ given by \eqref{eq:efficient-strategy} is admissible (i.e., $\pi^* \in \Ac$).

The previous two steps have justified all the results in Theorem \ref{thm:op}, only contingent on $\pi^* \in \Ac$. Hence, in the final step, we verify that the efficient strategy $\pi^*$ in \eqref{eq:efficient-strategy} is admissible.

 To start, by using \eqref{eq:optimal-cost} and $e^{g (0, \lambda_0)}<1$ from Lemma \ref{lem:g}, we have
\begin{align}
{\mathbb E} \big [ (X^* (T) - (\xi - \theta^*))^2 \big ]
&= \mathrm{Var}[X^*(T)] + (\theta^*)^2  = \dfrac{e^{g (0, \lambda_0)}}{(1 - e^{g (0, \lambda_0)})^2} \left(x_0 e^{r T} - \xi \right)^2 < \infty .
\end{align}
Thus, $\widehat{X}^* (T) = X^* (T) - (\xi - \theta^*)$ is square integrable.
This fact implies that
the adjoint equation \eqref{eq:adjoint}, associated with the efficient strategy $\pi^*$, admits a unique solution such that
$(p^*, q^*, u^*) \in {\cal S}^2_{\cal F} (0, T; {\mathbb R}) \times {\cal L}^2_{\cal F} (0, T; {\mathbb R}^n)
\times {\cal L}^{2,N}_{\cal F} (0, T; {\mathbb R}^m)$.

Next we proceed to show that $Y (t)$ is bounded below from 0.
That is, there exists a positive constant $\delta$ such that $Y (t) \geq \delta$.
Since the efficient strategy does not correspond to a risk-free investment,
$e^{g (0, \lam_0)}$ must be positive. Otherwise, it can be seen from \eqref{eq:efficient-frontier}
that the minimum variance is always zero regardless of the expected return, i.e.,
$\mbox{Var} [X^* (T)] = 0$, for any $\xi \in ( x_0 e^{r T}, + \infty )$, which is unreasonable.
Therefore, using \eqref{eq:e^g}, we obtain
\begin{align}\label{eq:Y-lowerbound}
Y (t) = 2 e^{2 r (T - t) + g (t, \lambda (t)) } \geq 2 e^{2 r T} \cdot e^{g (0, \lambda_0) } : = \delta > 0 .
\end{align}
It then follows from \eqref{eq:p-ansatz} and $p^* \in {\cal S}^2_{\cal F} (0, T; {\mathbb R})$ that
\begin{align}
{\widehat X}^*(t) = \frac{p^*(t)}{Y(t)} \leq \frac{p^*(t)}{\delta}  \quad \text{ for all } t \in [0,T] \quad \Rightarrow \quad \wx^* \in {\cal S}^2_{\cal F} (0, T; {\mathbb R})
\end{align}
and from \eqref{eq:e^g} that $V_l$ is bounded, for all $l \in \mathbb{M}$. Note
\begin{align}
\sum^k_{i = 1}  \pi^*_{i} (t) \eta_{il} (z_l) \, Y (t) = \sum^k_{i = 1}  \pi^*_{i} (t) \eta_{il} (z_l) \, \big ( Y (t-)
+ V_{l} (t) \big ) = u_{l} (t, z_l) - {\widehat X}^* (t-) V_{l} (t) .
\end{align}
Thus, we have
\begin{align}
Y \eta^\top \pi^* &= u  - {\widehat X}^* V \in {\cal L}^{2,N}_{\cal F} (0, T; {\mathbb R}^m) \\
\mbox{and} \quad \eta^\top \pi^* &= \frac{u  - {\widehat X}^* V}{Y} \leq \frac{u  - {\widehat X}^* V}{\delta} \in {\cal L}^{2,N}_{\cal F} (0, T; {\mathbb R}^m) .
\end{align}
Combining the above results, the non-degenerate condition in Lemma \ref{lem:Sigma} and $q^* \in {\cal L}^2_{\cal F} (0, T; {\mathbb R}^n)$ leads to
\begin{align}
\epsilon \, \delta^2 \, {\mathbb E} \bigg [ \int^T_0 |\pi^* (t)|^2 \, \dd t \bigg ]
&\leq {\mathbb E} \bigg [ \int^T_0 |Y (t)|^2 \pi^* (t)^\top \Sigma (t) \pi^* (t) \, \dd t \bigg ] \\
&= {\mathbb E} \bigg [ \int^T_0 |q^* (t)|^2 \, \dd t + \int^T_0 |Y (t) \eta (z)^\top \pi^* (t)|^2_N \, \dd t \bigg ] < \infty .
\end{align}
That is, $\pi^* \in {\cal L}^2_{\cal F} (0, T; {\mathbb R}^k)$.
All the conditions in Definition \ref{def:adm} are now confirmed, and the proof is completed.
\end{proof}

For comparison, let us consider a standard \emph{Poisson-jump-diffusion} market model without contagion risk, in which jumps of asset prices are modeled by an $m$-dimensional Poisson process $N_P$ with a \emph{deterministic} intensity $\lambda_P$.
More specifically, the asset prices follow the SDE:
\begin{align}
\label{eq:dS_vec-Poisson}
\dd S(t) = \mathrm{Diag}[S(t-)] \left( \mu \, \dd t + \sig \, \dd W(t) +  \int_{(-1,\infty)^m} \; \eta(z) \, \wn_P(\dd t, \dd z) \right),
\end{align}
where $\wn_P$ is defined similarly as $\wn$ in \eqref{eq:wt_N}, but with the stochastic intensity $\lambda$ replaced by
the deterministic intensity $\lambda_P$. For a given investment strategy $\pi$, the investor's wealth process evolves according to
\begin{align}\label{eq:wealth-Poisson}
\dd X (t) = \big ( r X (t) + \pi (t)^\top B \big ) \, \dd t + \pi (t)^\top \sigma \, \dd W (t)
+ \int_{(-1, \infty)^m} \pi (t)^\top \eta (z) {\widetilde N}_P (\dd t, \dd z) .
\end{align}
The MV problems with Poisson jumps are considered by \cite{framstad2004mp} and \cite{shen2013mv}.
We directly provide solutions to the MV problem \eqref{pro:mv} in the Poisson-jump-diffusion model \eqref{eq:dS_vec-Poisson} as a corollary, which recovers the results obtained in these papers.

\begin{corollary}
\label{cor:Poi}
In the standard Poisson-jump-diffusion model, described in \eqref{eq:dS_vec-Poisson}, we obtain the efficient strategy to problem \eqref{pro:mv} by
\begin{align}\label{eq:efficient-strategy-Poisson}
\pi^*_P (t) =
- \Gamma_P (\lambda_P (t))^{-1} \, B \, \left( X^*(t-) - (\xi - \theta^*_P) e^{- r (T - t)} \right),
\end{align}
where
\begin{align}
	\label{eq:Ga_P}
\Gamma_P (\lambda) :=
\sig \sig^\top + \int_{(-1, \infty)^m} \; \eta(z) \, \mathrm{Diag}[\lam \bullet \nu(\dd z)] \, \eta(z)^\top,
\end{align}
and
\begin{align}
\theta^*_P = \dfrac{e^{- \int^T_0 B^\top \Gamma_P (\lambda_P (t))^{-1} B \, \dd t}}{1 - e^{- \int^T_0 B^\top \Gamma_P (\lambda_P (t))^{-1} B \, \dd t}} \left(x_0 e^{r T} - \xi \right).
\end{align}
The efficient frontier is given by
\begin{align}
\mathrm{Var} [X^*_P (T)] = \dfrac{e^{- \int^T_0 B^\top \Gamma_P (\lambda_P (t))^{-1} B \, \dd t}}
{1 - e^{- \int^T_0 B^\top \Gamma_P (\lambda_P (t))^{-1} B \, \dd t}} \left(x_0 e^{r T} - \xi \right)^2, \qquad \xi \geq  x_0 e^{r T}.
\end{align}
\end{corollary}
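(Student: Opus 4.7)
\medskip
\noindent\textbf{Proof proposal.}
The plan is to treat Corollary \ref{cor:Poi} as a specialization of Theorem \ref{thm:op} in which the mutual-excitation matrix is switched off. Formally, take $\beta = 0$ in \eqref{eq:intensity} (equivalently, no jump of $N$ feeds back into its own intensity), so the intensity process is deterministic; relabel it $\lam_P(\cdot)$. Every step in the derivation of Section \ref{sec:deri} and the three-step proof of Theorem \ref{thm:op} goes through verbatim, with the following simplifications that I would record first.

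First, since $\beta_{(l)} = 0$ for every $l \in \Mb$, the non-local term in \eqref{eq:U(b)} vanishes identically, giving $U(t,\lam) \equiv 0$. Plugging this into \eqref{eq:Gamma(t,b)} and \eqref{eq:Z_hat} collapses $\Gam(t,\lam)$ to $\Gam_P(\lam)$ as defined in \eqref{eq:Ga_P}, and collapses $\wz(t,\lam)$ to the constant risk-premium vector $B$. The non-local PDE \eqref{eq:PDE-g} then reduces to the first-order linear PDE
\begin{align}
g_t(t,\lam) + g_\lam(t,\lam)^\top \alpha(\lam_\infty - \lam) = B^\top \Gam_P(\lam)^{-1} B, \qquad g(T,\cdot) = 0.
\end{align}
Since $\lam_P$ is deterministic and (under $\beta = 0$) satisfies $\dot\lam_P = \alpha(\lam_\infty - \lam_P)$, evaluating along this characteristic and using the chain rule gives $\tfrac{\dd}{\dd t} g(t,\lam_P(t)) = B^\top \Gam_P(\lam_P(t))^{-1} B$, and integrating backward from $T$ with $g(T,\cdot)=0$ yields the explicit formula
\begin{align}
g(0,\lam_0) = -\int_0^T B^\top \Gam_P(\lam_P(t))^{-1} B \, \dd t.
\end{align}

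Second, I would substitute these reductions directly into \eqref{eq:efficient-strategy}, \eqref{eq:theta_op} and \eqref{eq:efficient-frontier}. The efficient strategy \eqref{eq:efficient-strategy} becomes \eqref{eq:efficient-strategy-Poisson}; the Lagrange multiplier \eqref{eq:theta_op} and the efficient frontier \eqref{eq:efficient-frontier} become exactly the expressions in the corollary after replacing $e^{g(0,\lam_0)}$ by $\exp\bigl(-\int_0^T B^\top \Gam_P(\lam_P(t))^{-1} B \, \dd t\bigr)$. Admissibility of $\pi^*_P$ is strictly easier than for the Hawkes case handled in Step 3 of Theorem \ref{thm:op}: the intensity $\lam_P$ is deterministic, so the analogue of the lower bound in Lemma \ref{lem:g} is immediate (the integrand is continuous in $t$ and hence bounded on $[0,T]$), and the square-integrability chain $\wx^* \in \Sc^2_\Fc$, $\eta^\top \pi^* \in \Lc^{2,N}_\Fc$, $\pi^* \in \Lc^2_\Fc$ runs through unchanged.

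The only step that is not completely mechanical is confirming that the Hawkes-derived PDE machinery remains valid in the degenerate limit $\beta = 0$ and that the generic deterministic intensity $\lam_P$ of the corollary can indeed be represented as the solution of the mean-reverting ODE with suitably chosen $(\alpha,\lam_\infty,\lam_0)$. If one prefers not to invoke this embedding, the alternative — and what I view as the main potential obstacle to writing a fully self-contained proof — is to rerun the four-step derivation of Section \ref{sec:deri} \emph{ab initio} with $\lam$ replaced by an arbitrary deterministic $\lam_P$ throughout. In that setting the ansatz \eqref{eq:Y-ansatz} simplifies to $Y(t) = 2 \exp\bigl(2r(T-t)+h(t)\bigr)$ with $h$ depending on $t$ alone, so the BSDE \eqref{eq:BSDE-(Y,V)} reduces to an ODE whose unique solution is $h(t) = -\int_t^T B^\top \Gam_P(\lam_P(s))^{-1} B \, \dd s$. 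Either route reproduces the stated formulas, and the completing-the-square argument of Step 1 in the proof of Theorem \ref{thm:op} delivers optimality with no change.
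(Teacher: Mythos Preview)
Your proposal is correct and follows essentially the same route as the paper. The paper does not give a standalone proof of Corollary \ref{cor:Poi}; instead, the remark immediately following it explains that setting $\beta_{lj}=0$ for all $l,j\in\Mb$ reduces the non-local PDE \eqref{eq:PDE-g} to the ODE \eqref{eq:ODE-g}, whose explicit solution \eqref{eq:g_det} is then substituted into the formulas of Theorem \ref{thm:op} --- exactly your argument, including the observation that $U\equiv 0$, $\Gamma\to\Gamma_P$, and $\wz\to B$. Your additional comments on admissibility and on the possible need to rerun the derivation ab initio for an arbitrary deterministic $\lam_P$ are reasonable but go slightly beyond what the paper spells out.
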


\begin{rmk}
In Corollary \ref{cor:Poi}, the intensity $\lam_P$ of the Poisson process is \emph{deterministic}. In comparison, the intensity $\lam$ of the Hawkes process $N$ is \emph{stochastic}, given by \eqref{eq:dlam_l}. However, if we set $\beta_{lj} = 0$ for all $l, j \in \Mb$, the intensity $\lam$ in \eqref{eq:dlam_l} reduces to the deterministic case,
denoted by $\lam_P$, and the non-local partial differential equation \eqref{eq:PDE-g} becomes an ODE given as follows:
	\begin{align}\label{eq:ODE-g}
		\frac{\dd g(t)}{\dd t}
		%= \wz(t, \lambda)^\top \Gamma (t, \lambda)^{-1} \wz(t, \lambda)
		= B^\top
		\bigg ( \sig \sig^\top + \int_{(-1, \infty)^m} \; \eta(z) \, \mathrm{Diag}[\lam_P (t) \bullet \nu(\dd z)] \, \eta(z)^\top \bigg )^{-1} B = B^\top \, \Gamma_P(\lam_P(t))^{-1} \, B ,
	\end{align}
where $\Gamma_P$ is defined in \eqref{eq:Ga_P}.
Solving the above ODE yields  the following closed-form solution
	\begin{align}
		\label{eq:g_det}
		g (t) = - \int^T_t \, B^\top
		%\bigg ( \sig \sig^\top + \int_{(-1, \infty)^m} \; \eta(z) \, \mathrm{Diag}[\lam (s) \bullet \nu(\dd z)] \, \eta(z)^\top \bigg )^{-1}
		\Gamma_P(\lam_P(s))^{-1} B \, \dd s .
	\end{align}
By comparing \eqref{eq:efficient-strategy} with \eqref{eq:efficient-strategy-Poisson}, we conclude that the efficient strategy $\pi^*_P$ in the Poisson-jump-diffusion model can be seen as a special case of the efficient strategy $\pi^*$ in the Hawkes-jump-diffusion model. In \eqref{eq:efficient-strategy-Poisson}, the term $\Gamma_P (\lambda_P (t))^{-1} \, B$ represents the product of the precision matrix and
the risk premium vector, which is standard in the literature on portfolio selection problems, while in \eqref{eq:efficient-strategy},
$\Gamma(t, \lambda (t))^{-1} \, \wz(t, \lambda (t))$ is the counterpart term, in which both $\Gamma(t, \lambda (t))^{-1}$ and $\wz(t, \lambda (t))$
are related to the precision matrix $\Sigma (t)^{-1}$ and the risk premium vector $B$, but adjusted by the non-local terms in $U (t, \lambda (t))$.
Please refer to equations \eqref{eq:U(b)}-\eqref{eq:Z_hat} for the definitions of $U (\cdot, \cdot)$, $\Gamma (\cdot, \cdot)$, and $\wz (\cdot, \cdot)$.
\end{rmk}

\section{Numerical analysis}
\label{sec:exm}

In this section, we conduct numerical analysis to illustrate the theoretical findings of Theorem \ref{thm:op}.
The main objective is to obtain \emph{qualitative}  results from the rather abstract expressions of the efficient frontier in Theorem \ref{thm:op}.

%\subsection{A univariate example}

In the numerical analysis, we study a univariate example with $m = k =1$ (i.e., there is only one risky asset whose price jumps are modeled by a one-dimensional Hawkes process).
We follow \cite{ait2015modeling} and \cite{liu2021household} to assign the model parameters for this example as in Table \ref{tab:para_1d}.
\begin{table}[H]
	\centering
	\begin{tabular}{ccccccccccc} \hline
		$r$ & $\mu$ & $\sigma$ & $J$ & $\Eb[Z]$ & $\Eb[Z^2]$ & $\alpha$ & $\beta$ & $\lam_\infty$ & $T$ & $x_0$ \\\hline
		2\% & 9\%  & 20\% & 1 & -2\% & 6\% & 5 & 0.1 & 0.48 & 2 & 1\\ \hline
	\end{tabular}
	\caption{Model parameters in the numerical example}
	\label{tab:para_1d}
\end{table}

\subsection{The solution to (\ref{eq:PDE-g})}
\label{sub:solution}

Recall from Theorem \ref{thm:op} that both the efficient strategy \eqref{eq:efficient-strategy} and the efficient frontier \eqref{eq:efficient-frontier} depend on $g(0, \lam_0)$, where $g(\cdot, \cdot)$  solves the non-local partial differential equation \eqref{eq:PDE-g} and $\lam_0 = \lam(0)$ is the initial intensity value of the Hawkes process $N$.
Equation \eqref{eq:PDE-g} does not have an analytic solution and is difficult to solve numerically as well, due to the presence of the non-local component.
In particular, Equation \eqref{eq:PDE-g} dictates that the partial derivatives of $g$ at $(t, \lam)$ relate to not only its value at the same point, $g(t,\lam)$, but also its value at the corresponding post-jump points, $g(t, \lam + \beta_{(l)})$, where $l=1,\ldots,m$. (Here in this example $\beta_{(l)} = \beta$ since $m=1$.)
In order to solve $g$ from \eqref{eq:PDE-g}, we apply an exponential transform and study $\widetilde{g}(t, \lam) = e^{ \, g(t, \lam)}$.
Based on the analysis in Appendix \ref{app:pde}, we obtain $\widetilde{g}(t, \lam)$ in \eqref{eq:wg}, which can be seen as the Feynman-Kac representation of $\widetilde{g}(t, \lam)$.
We then use this result to numerically solve for $\widetilde{g}(t, \lam)$ via Monte Carlo simulation, which is briefly discussed below.

Given $t \in [0,T]$ and $\lam \in [\underline{\lam}, \bar{\lam}]$, we choose time step $\Delta_t$ and space step $\Delta_\lam$, and partition $ [0,T] \times  [\underline{\lam}, \bar{\lam}]$ to obtain  discrete grid points $(t_i, \lam_j)$, where $i=1,2,\ldots, 1 + {\cal N}_T$ (i.e., ${\cal N}_T  = T / \Delta_t$)  and $j=1,2,\ldots, 1+ {\cal N}_\lam^{(i)}$ (e.g., ${\cal N}_\lam^{(1)}  = (\bar{\lam} - \underline{\lam}) /  \Delta_\lam$) with $\Delta_t$ and
$\Delta_\lam$ properly chosen so that ${\cal N}_T$ and ${\cal N}_\lam^{(i)}$ are integers.
Note that the spatial dimension $1 + {\cal N}_\lam^{(i)}$ varies as time moves forward; see Remark \ref{rmk:dim} for details.
%Denote $\widetilde{g}_{ij} := \widetilde{g}(t_i, \lam_j)$.
 The pseudo-algorithm for computing  $\big(\widetilde{g}(t_1, \lam_j) \big)_{j=1,2,\ldots,{\cal N}_\lam^{(1)}}$ is given below:
\begin{enumerate}
	\item Initialize $\widetilde{g}( t_{1 + {\cal N}_T}, \, \lam_j ) = 1$ for all $j = 1, 2,\ldots, {\cal N}_\lam^{(1 + {\cal N}_T)}$, since $\widetilde{g}(T, \lam) = 1$ for all $\lam > 0$.
	
	\item With all $\big\{ \widetilde{g}(t_k, \cdot) \big\}_{k \ge i + 1}$ computed, we calculate $\big\{ \widetilde{g}(t_i, \cdot) \big\}$ at time $t_i$ by the following two steps:
	
	(i) Simulate $M$ different paths for the intensity process between $t_i$ and $T$  via \eqref{eq:intensity} starting with $\lam(t_i) = \lam_j$, denoted by $\{\lam^{(\mathfrak{m})}(t_k): k = i, i+1, \ldots, 1 + {\cal N}_T \text{ with } \lam(t_i) = \lam_j \}$, where $\mathfrak{m}=1,\ldots,M$;

	(ii) For each path, compute the right hand side of \eqref{eq:wg}, denoted by $\widetilde{g}^{(\mathfrak{m})}(t_i, \lam_j)$, and obtain  $\widetilde{g}(t_i, \lam_j)= \frac{1}{M} \sum_{\mathfrak{m}=1}^M \, \widetilde{g}^{(\mathfrak{m})}(t_i, \lam_j)$.
\end{enumerate}

With the above algorithm in hand, we set  $T=2$, $\underline{\lam} = 0.1$ and $\bar{\lam} = 2$ (the rest are given in Table \ref{tab:para_1d}), and plot  $\widetilde{g}$ over $[0,2] \times [0.1, 2]$ in Figure \ref{fig:g_1d}.
It can be seen from Figure \ref{fig:g_1d} that $\widetilde{g}$ (or $g$) is an increasing function of both time argument $t$ and  spatial argument $\lam$.
Such an observation in comparative statics is consistent with the result derived from the Poisson model in Corollary \ref{cor:Poi}.
To be precise, the corresponding $g$ under the Poisson model is obtained in \eqref{eq:g_det}, from which we can easily see that $g$ is increasing with respect to $t$.
By writing $\lam_P(s) = \lam + \int_t^s \, \lam'(u) \, \dd u$ (here we assume the deterministic intensity $\lam_P(s)$ is differentiable over $[0,T]$) and using the definition of $\Gamma_P$ in \eqref{eq:Ga_P}, we obtain that $\Gamma_P$ is an increasing function of $\lam = \lam_P(t)$ and so is $g$.

\begin{figure}[h]
\centering
\includegraphics[trim=1cm 1cm 1cm 1cm, clip=true, width= 0.6 \textwidth]{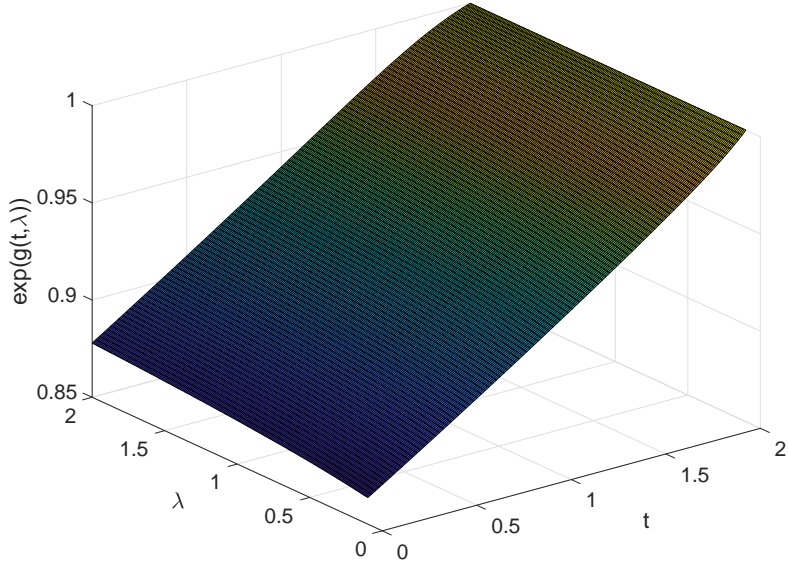}
\\[-2ex]
\caption{Graph of $\widetilde{g}(t, \lam) = e^{ \, g(t, \lam)}$ over $(t, \lam) \in [0,2] \times [0.1, 2]$}
\label{fig:g_1d}
\end{figure}

\begin{rmk}
	\label{rmk:dim}
We have mentioned in the above algorithm that the dimension of spatial points \emph{varies} along the time grid, and here we explain why this is the case for the non-local partial differential equation \eqref{eq:PDE-g}.
%Further assume $\Delta_t = 0.02$ and $\Delta_\lam = 0.01$,
Given $\Delta_\lam$,  the spatial dimension at time $t_1$ ($t_1 = 0$) is easily obtained via $1 + {\cal N}_\lam^{(1)} =  1 + (\bar \lam - \underline \lam) / \Delta_\lam$.
Now we investigate what spatial points must be included at time $t_2$ ($t_2 = \Delta_t$) in order to calculate $\widetilde{g}(t_1, \lam(t_1) )$, where $\lam(t_1) \in [\underline \lam, \bar \lam]$.
To compute $\widetilde{g}(t_1, \lam(t_1))$ via \eqref{eq:wg}, we need $\widetilde{g}(t_2, \lam(t_2))$ and $\widetilde{g}(t_2, \lam(t_2) + \beta)$ for all possible $\lam(t_2)$ at $t_2$ that can be reached from $\lam(t_1)$.
Given any $\lam(t_1) \in [\underline \lam, \bar \lam]$ on the spatial grid, by discretization of \eqref{eq:intensity}, $\lam(t_2)$ takes two possible values: (1)  $\lam(t_2) = \lam(t_1) + \alpha ( \lam_\infty - \lam(t_1)) \Delta_t$ (no jump case); and (2) $\lam(t_2) = \lam(t_1) + \alpha ( \lam_\infty - \lam(t_1)) \Delta_t +  \beta$ (jump case).
In addition, we also need the values $\widetilde{g}(t_2, \lam(t_2) + \beta)$, which are evaluated at $\lam(t_2) + \beta$, i.e., the post-jump point of $\lam(t_2)$.
In summary, not only the dynamics of the intensity process and but also the non-local equation \eqref{eq:PDE-g} (see $U$ in \eqref{eq:U(b)}) involve jumps, both of which may lead to the change of the spatial dimension along time.
\end{rmk}

\subsection{Sensitivity analysis}

In this subsection, we focus on the sensitivity analysis of the efficient frontier  under the Hawkes-jump-diffusion model (hereafter Hawkes model), which is given by  \eqref{eq:efficient-frontier} in Theorem \ref{thm:op}.
We are particularly interested in how the parameters of the Hawkes intensity process affect the efficient frontier.
In all the analysis below, we only allow one parameter to vary at different levels and fix the rest as in Table \ref{tab:para_1d}.

We first investigate how the initial intensity value $\lam_0$ and the mean-reversion level $\lam_\infty$ affect the efficient frontier under the Hawkes model (see \eqref{eq:vec} for their definitions). The results are plotted in Figure \ref{fig:EF_lam}.
In the left panel of Figure \ref{fig:EF_lam}, we consider three levels for the initial intensity $\lam_0$: (1) low level $\lam_0 = 0.1$; (2) mean-reversion level $\lam_0 = \lam_\infty = 0.48$; and (3) high level $\lam_0 = 1.9$.
We observe from the graphs that the efficient frontier moves downward (i.e., it deteriorates) as $\lam_0$ increases.
To see this result, notice when the current intensity level increases, all the subsequent intensity increases as well (although the impact of the current increment decays exponentially), which makes the MV investor worse off.
In the right panel of Figure \ref{fig:EF_lam}, we fix $\lam_0 = \lam_\infty$ and consider three levels for the mean-reversion level $\lam_\infty$: $\lam_\infty= 0.3, 0.48, 0.8$.
The graphs show that the MV investor benefits when $\lam_\infty$ decreases.
To understand this finding, note that the higher the $\lam_\infty$, the higher the intensity $\lam$ on average, as $\lam$ mean reverts to $\lam_\infty$.
A comparison between the two panels indicates that $\lam_\infty$ has a more significant impact on the efficient frontier than $\lam_0$.

\begin{figure}[htb]
	\centering
	\includegraphics[trim=1cm 0cm 1cm 1cm, clip=true, width= 0.48 \textwidth]{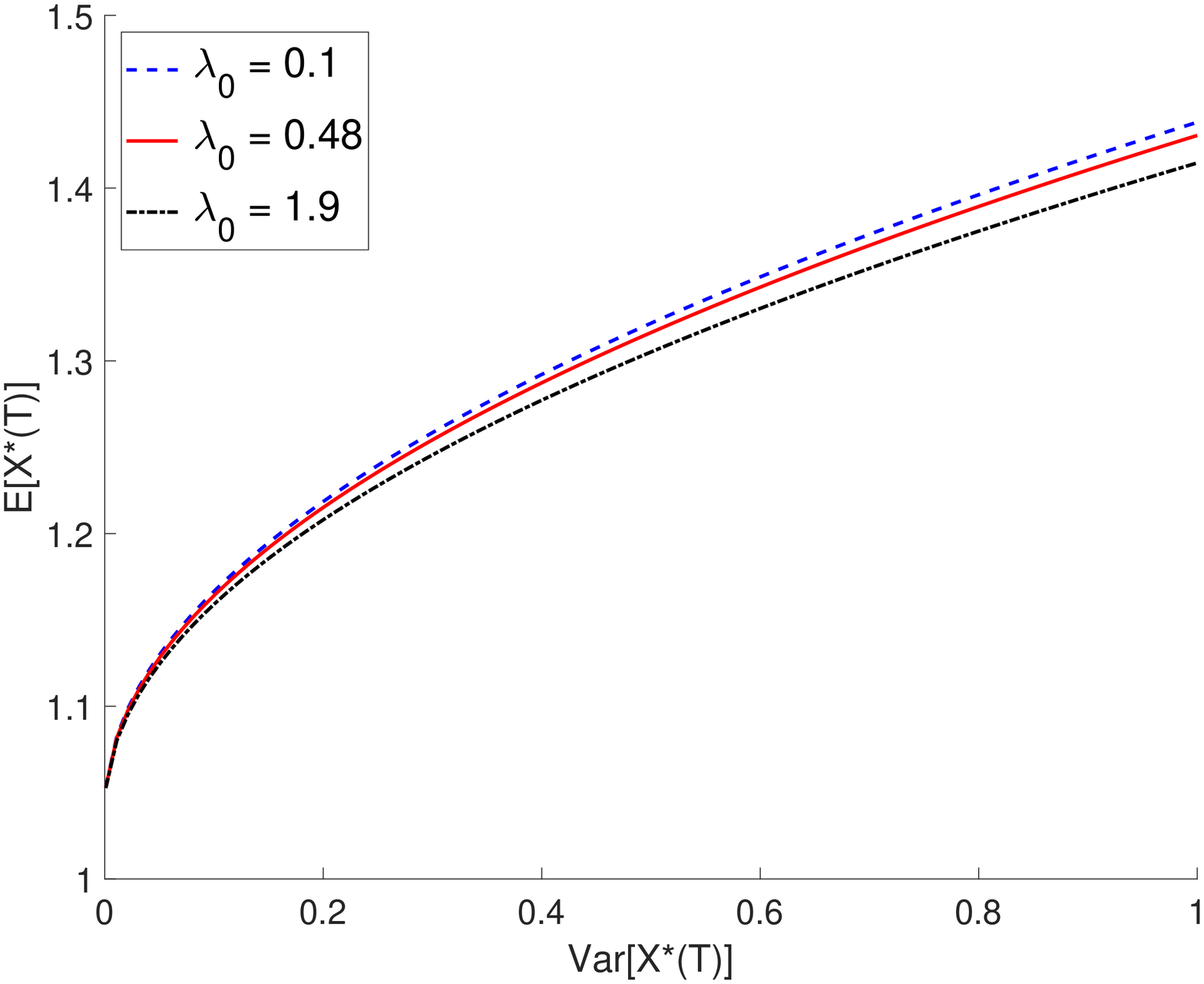}
	\includegraphics[trim=1cm 0cm 1cm 1cm, clip=true, width= 0.48 \textwidth]{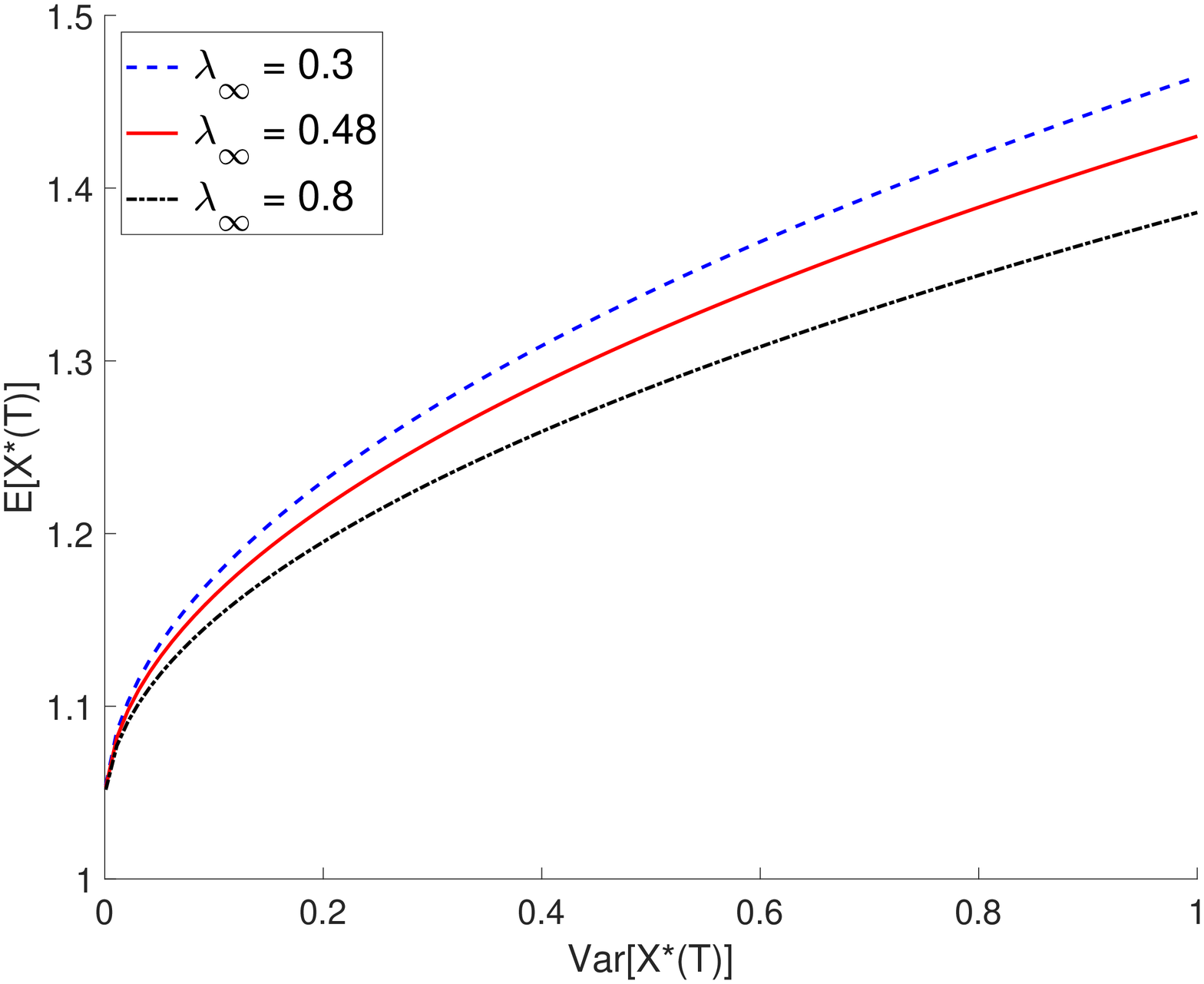}
	\\[-3ex]
	\caption{Impact of $\lam_0$ (Left) and $\lam_\infty$ (Right) on the efficient frontier under the Hawkes model}
	\label{fig:EF_lam}
\end{figure}

We next study the impact of $\alpha$ and $\beta$ on the efficient frontier. Recall from \eqref{eq:intensity} that $\alpha$ is the mean-reversion speed and $\beta$ is the jump size of the intensity process $\lam$.
We plot the results in Figure \ref{fig:EF_1d}.
In the left panel of Figure \ref{fig:EF_1d}, we consider three different levels for $\alpha$: $\alpha = 1, 2, 5$. When $\alpha$ increases, the intensity process mean reverts to  $\lam_\infty$ faster, which improves the efficient frontier.
The reason for such a result is that with a higher $\alpha$, the intensity $\lam$ spends less time at an abnormal level and is more likely to be around $\lam_\infty$, which reduces the uncertainty of the risky asset.
In the right panel of Figure \ref{fig:EF_1d}, we consider three different levels for $\beta$: $\beta = 0.1, 0.5, 2$.
It is readily seen that when $\beta$ increases, the efficient frontier worsens.
Recall $\beta$ is the jump size of the intensity $\lam$ upon a jump of the Hawkes process $N$.
Therefore, given a bigger $\beta$, the increment of $\lam$ upon a jump of $N$ is larger, which further ``excites'' more jumps in the near future and increases the variance in the efficient strategy.
For both panels in Figure \ref{fig:EF_1d}, we set $\lam_0 = 1$, so that the difference among the three efficient frontiers is more visible. We emphasize that all the findings remain the same for a smaller $\lam_0$.

\begin{figure}[htb]
	\centering
	\includegraphics[trim=1cm 0cm 1cm 1cm, clip=true, width= 0.48 \textwidth]{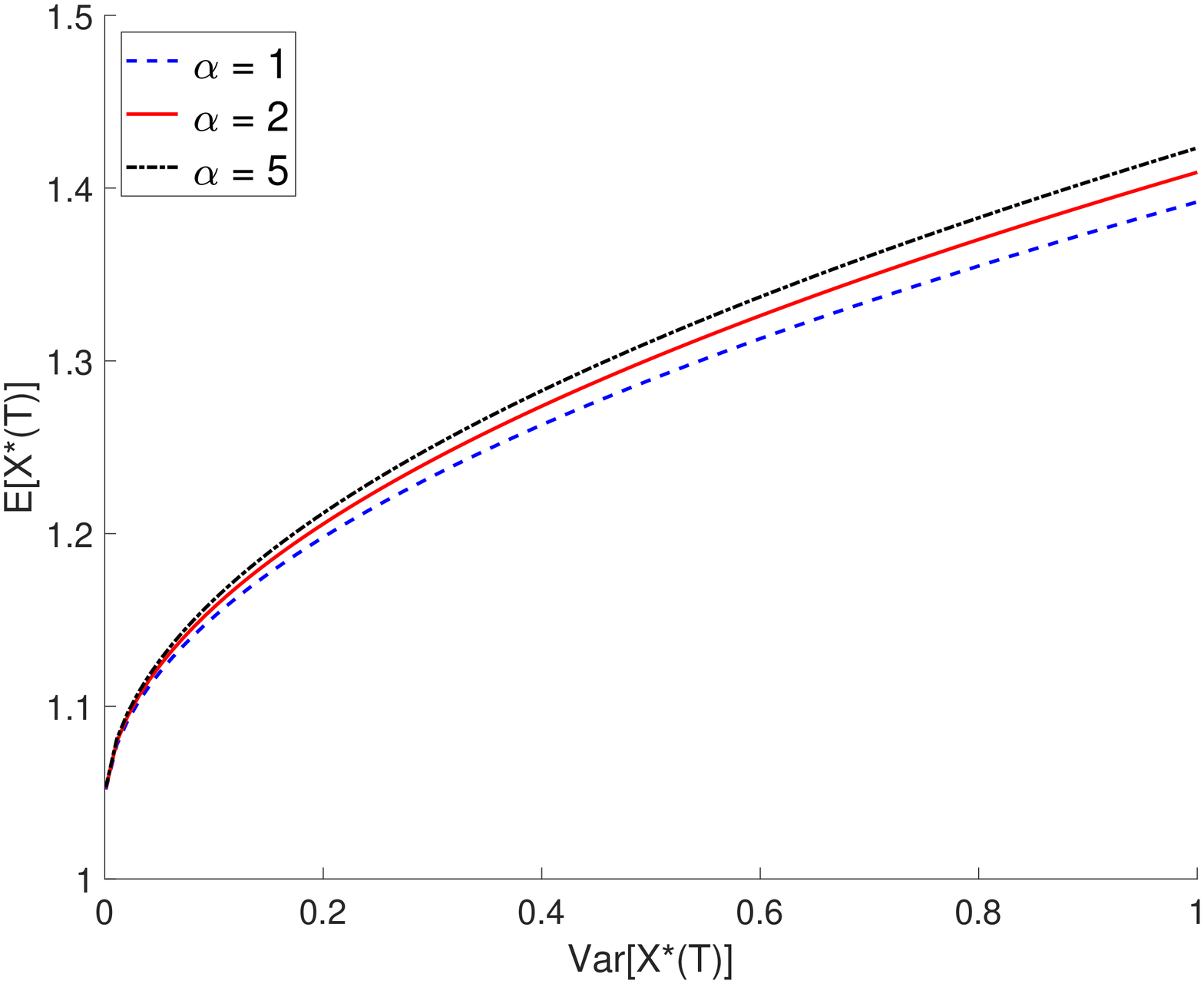}
	\includegraphics[trim=1cm 0cm 1cm 1cm, clip=true, width= 0.48 \textwidth]{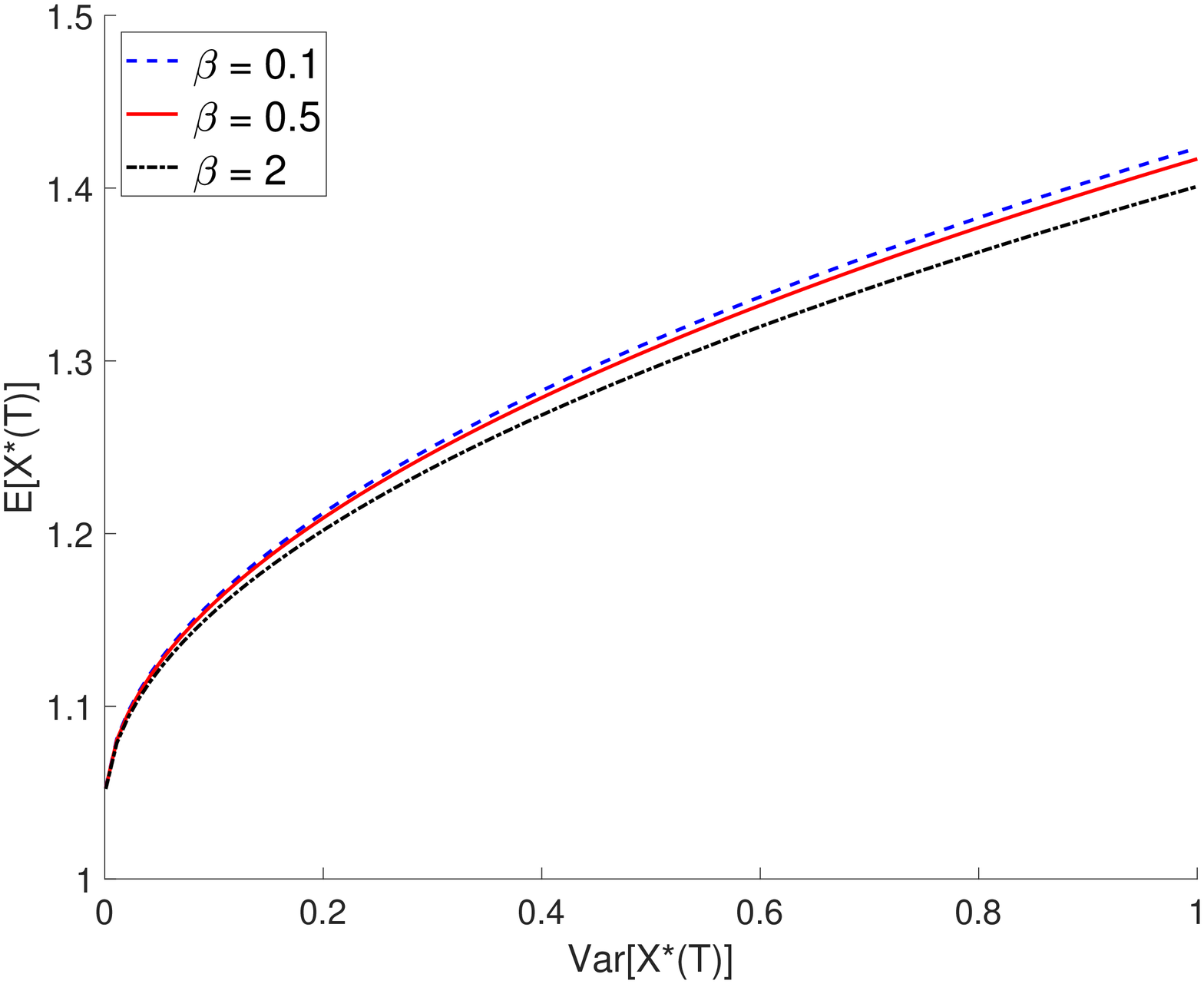}
	\\[-3ex]
	\caption{Impact of $\alpha$ (Left) and $\beta$ (Right) on the efficient frontier under the Hawkes model}
	\label{fig:EF_1d}
\end{figure}

\subsection{Comparisons with the Poisson model}
\label{sub:comp_Poi}

In the last part of the numerical analysis, we conduct a comparison study between the Hawkes model \eqref{eq:dS_vec} and the Poisson-jump-diffusion model (hereafter, Poisson model) \eqref{eq:dS_vec-Poisson}.
In the first study, we consider a relatively small value for the initial intensity ($\lam_0 = 0.3$) in the Hawkes model, and three different levels for the \emph{constant} intensity ($\lam_P = 0.3, 0.4, 0.5$) in the Poisson model.
The corresponding efficient frontiers are plotted in the left panel of Figure \ref{fig:EF_com_1d}.
In the second study, we consider a relatively large value for the initial intensity ($\lam_0 = 0.7$) in the Hawkes model, and three different levels for the \emph{constant} intensity ($\lam_P = 0.4, 0.7, 0.8$) in the Poisson model.
The corresponding efficient frontiers are plotted in the right panel of Figure \ref{fig:EF_com_1d}.
We observe from Figure \ref{fig:EF_com_1d} that if $\lam_0 < \lam_\infty = 0.48$ (resp. $\lam_0 > \lam_\infty = 0.48$),
the Hawkes model with the initial intensity $\lam_0$ yields a worse (resp. better) efficient frontier than the Poisson model with the same constant intensity $\lam_P = \lam_0$.
To understand this finding, let us recall from Table \ref{tab:para_1d} that $\lam_\infty = 0.48$ (mean-reversion level of the Hawkes intensity $\lam$), $\alpha = 5$ (mean-reversion speed), $\beta = 0.1$ (jump size) and $T=2$ (investment horizon).
In consequence, under the given parameters,  the mean-reversion effect dominates the jump effect in the dynamics of the Hawkes intensity process $\lam$ given by \eqref{eq:dlam_l}.
Therefore, if $\lam_0 < \lam_\infty$ (resp. $\lam_0 > \lam_\infty$), setting $\lam_P = \lam_0$ underestimates (resp. overestimates) the overall jump intensity, while the increase of the (stochastic or deterministic) jump intensity deteriorates the efficient frontier (see both Figures \ref{fig:EF_1d} and \ref{fig:EF_com_1d}), which together explain the comparative finding of Figure \ref{fig:EF_com_1d}.

\begin{figure}[htb]
	\centering
	\includegraphics[trim=1cm 0cm 1cm 1cm, clip=true, width= 0.48 \textwidth]{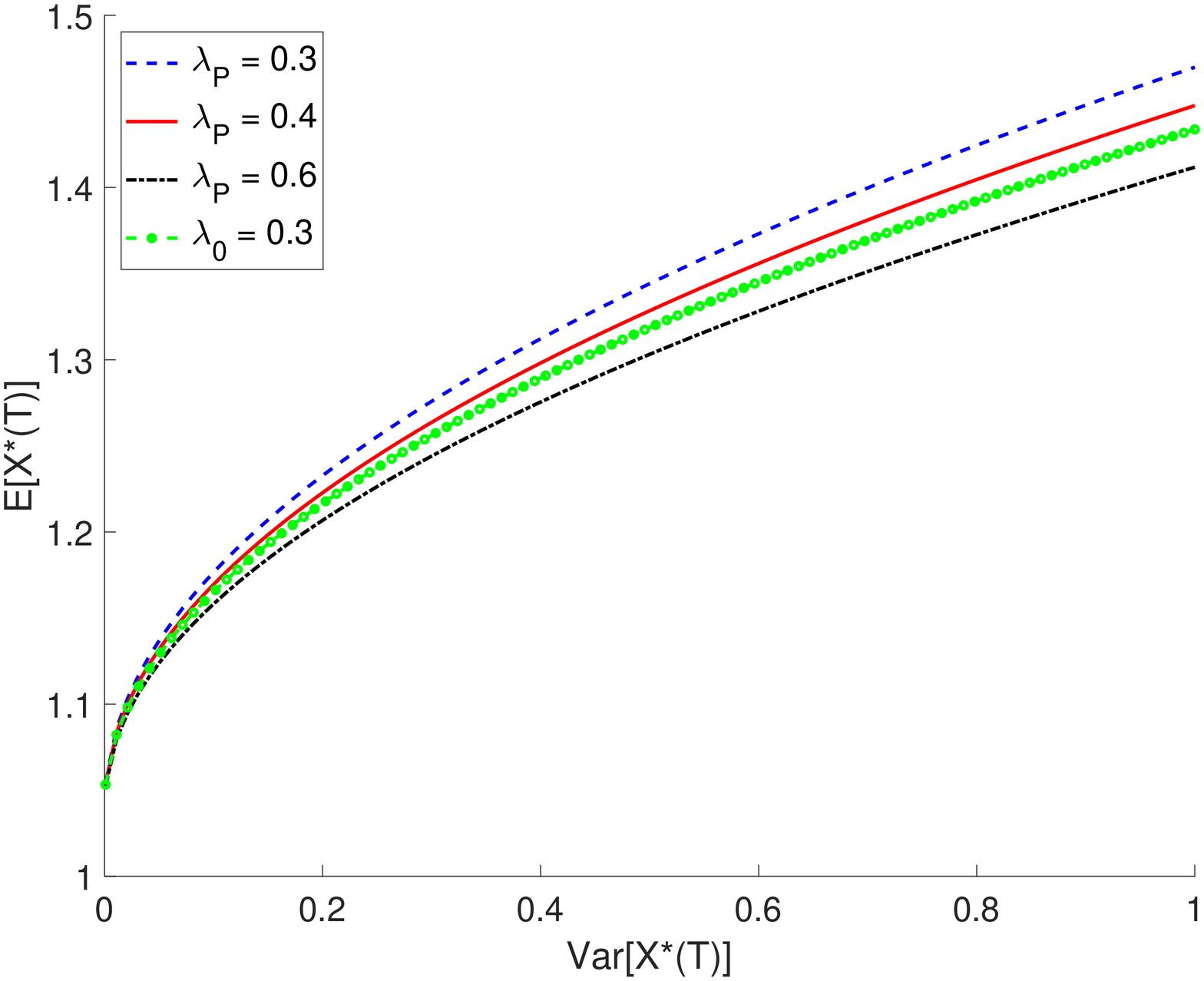}
	\includegraphics[trim=1cm 0cm 1cm 1cm, clip=true, width= 0.48 \textwidth]{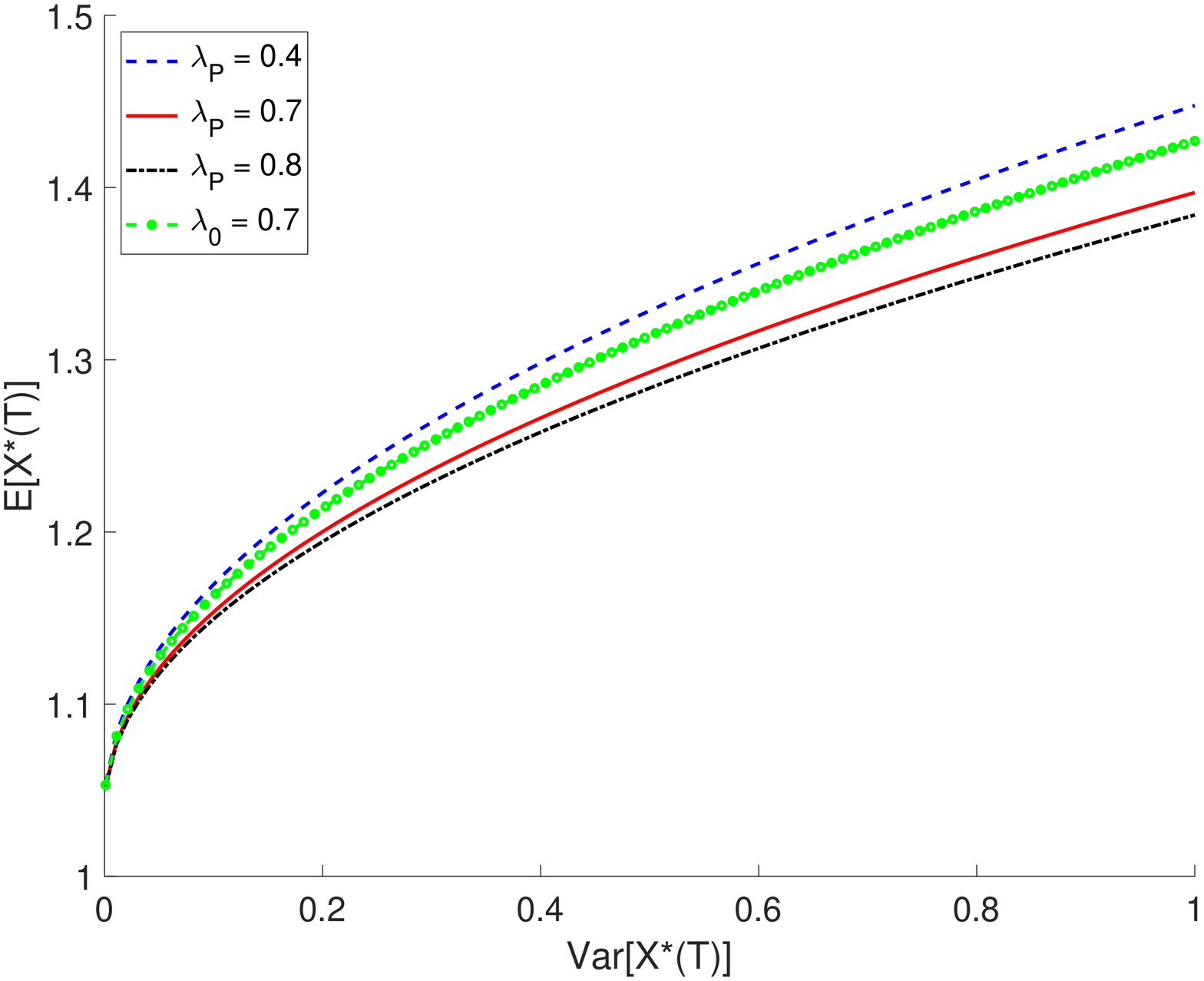}
	\\[-3ex]
	\caption{Comparisons of the efficient frontiers under the Hawkes and Poisson models ($\beta = 0.1$)}
	\label{fig:EF_com_1d}
\end{figure}

When a more subtle analysis of Figure \ref{fig:EF_com_1d} is performed, we find that the efficient frontier of the Hawkes model with $\lam_0 = 0.3$ (resp. 0.7) is very close to that of the Poisson model with $\lam_P = 0.47$ (resp. 0.51).
Again recall from Table \ref{tab:para_1d} that $\lam_\infty = 0.48$.
It then seems necessary to investigate the case of $\lam_0 = \lam_\infty = 0.48$ to further compare the Hawkes model with the Poisson model.
To see how the self-excitation effect of the Hawkes process affects the efficient frontier, we consider a larger jump size with $\beta = 2$. (Each jump of the Hawkes process $\dd N_t = 1$ ``excites'' its own intensity by $\beta$.) We plot the results in Figure \ref{fig:EF_com_1d_inf}.
A clear message from Figure \ref{fig:EF_com_1d_inf} is that the self-excitation feature of Hawkes processes increases the %model riskiness
variance of the terminal wealth under the efficient strategy (for any chosen expectation target $\xi \ge x_0 e^{rT}$)
and thus leads to a worsened efficient frontier, comparing to the Poisson model with intensity $\lambda_P = \lambda_0 = \lambda_\infty$, where $\lambda_0$ (resp. $\lambda_\infty$) is the initial value (resp. long-term value) of the Hawkes intensity process.
(We comment that the finding of Figure \ref{fig:EF_com_1d_inf} remains the same when $\beta = 0.1$ from Table \ref{tab:para_1d} is used, though one needs to zoom in to observe the difference between the two efficient frontiers.)

\begin{figure}[htb]
\begin{center}
	\includegraphics[trim=1cm 0cm 1cm 1cm, clip=true, width= 0.48 \textwidth]{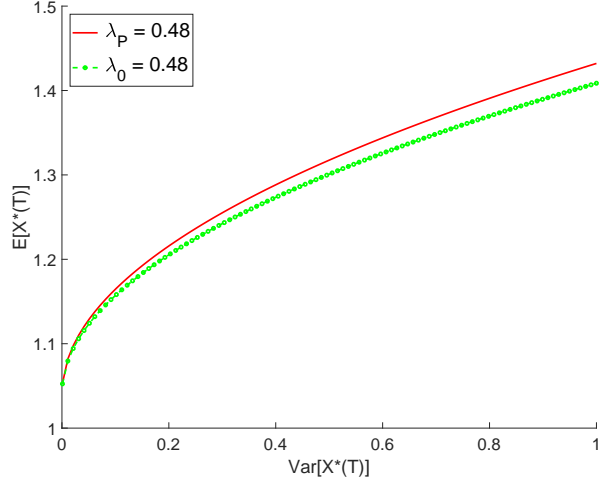}
	\\[-3ex]
	\caption{Comparisons of the efficient frontiers under the Hawkes and Poisson models ($\beta = 2$)}
	\label{fig:EF_com_1d_inf}
\end{center}
\end{figure}

\begin{rmk}
	In this remark, we offer some discussions on the difficulty of solving \eqref{eq:PDE-g} numerically when the spatial dimension is greater than 1 (often known as the ``curse of dimensionality'').
	For this purpose, we first review the computational complexity of the one-dimensional case ($m=k=1$) in Subsections \ref{sub:solution}-\ref{sub:comp_Poi}. To solve \eqref{eq:PDE-g} numerically in the one-dimensional case,  we choose $\Delta_t = 0.02$ and $T=2$, implying the time dimension is 101, and 	$\Delta_\lam = 0.01$ and $\lam_0 \in [0.1, 2]$, implying the spatial dimension at time 0 is 191. Recall from Remark \ref{rmk:dim} that the spatial dimension is varying as time evolves, and the maximum spatial dimension number is 239 over the total 101 time grid points, corresponding to $\max \lam = 2.48$, under the given parameters. The computation algorithm moves backward in time with the terminal values $\widetilde{g}(T=t_{101}, \lam_j) = 1$, and, at each grid point $(t_i, \lam_j)$, simulates $M$ (chosen to be 5,000) paths to compute $\widetilde{g}(t_i, \lam_j)$, yielding a complexity of approximately $100 \times 235 \times 5000 = 1.175 \times 10^8$ (the mean of the spatial dimensions is about 235). We also mention that computing the integrand in \eqref{eq:wg} is also time consuming when time $t$ is getting close to 0.
	
	Now we are ready to discuss the simplest high dimensional case with $m=2$ and $k=2$ (i.e., there are two risky assets whose jumps are derived from a bivariate Hawkes process). Assume the self-excitation effect parameters $\beta_{11} = \beta_{22} = 0.1$ (i.e., a jump of $N_i(t)$ at time $t$  ``excites'' its own intensity $\lam_i(t)$ by 0.1, where $i=1,2$) and the cross-excitation effect parameters $\beta_{12} = \beta_{21} = 0.08$ (i.e., a jump of $N_i(t)$ at time $t$ ``excites'' the other's intensity $\lam_j(t)$ by 0.08). Due to the additional cross-excitation effect, the mean of the spatial dimensions in the bivariate case is about 300 (recall the number is 235 in the univariate case). As such, the total number of simulations quickly grows to $100 \times 300^2 \times 5000 = 4.5 \times 10^{10}$.
	In addition, computing the integrand in \eqref{eq:wg} now involves finding the inverse of a matrix within each time step $\Delta_t$.
	In fact, the computation is so complex that we cannot even obtain $\widetilde{g}(0, \lam_0)$ for a single set of parameter configuration in three days on a personal laptop (MacBook Pro 2017 with 3.1G Dual Core i5 CPU, 8GB memory, and 512G SSD).
	This is not entirely surprising though, as solving a multi-dimensional partial differential equation with a non-local component remains an open question to our knowledge, and there are no universally efficient numerical methods to tackle such a problem.
\end{rmk}

\section{Conclusion}
\label{sec:con}

In this paper, we consider a mean-variance portfolio selection problem in a contagious financial market, where the prices of risky assets are subject to mutually exciting jumps of a multivariate Hawkes process.
The contagion risk is then captured by the fact that a price jump of an asset will increase the jump intensities of both the same asset and all the other assets in the market.
\cite{ait2016portfolio} study portfolio selection problems under the utility maximization criterion in a similar market model with contagion risk, and apply the standard HJB equation method to obtain optimal investment strategies.
Here we obtain semi-explicit solutions to the efficient strategy and the efficient frontier via the stochastic maximum principle, BSDE theory, and LQ control technique. Our paper is among one of the few papers concerning optimal control of dynamical systems with Hawkes-type jumps in the literature of mathematical finance, and we hope that our work could motivate further research in this area. An interesting future research direction is to consider non-Markovian Hawkes processes with non-exponential decay functions. In that case, the HJB equation approach will not be applicable,
while the stochastic maximum principle together with BSDEs still works.

\section*{Acknowledgments}
We would like to thank anonymous associate editor and referees for their careful reading and many insightful comments that help us improve the quality of an early version of this paper. Bin Zou is partially supported by a start-up grant from the University of Connecticut. Yang Shen is partially supported by the Discovery
Early Career Researcher Award (No. DE200101266) from the Australian Research Council, the National Natural Science Foundation of China under Grant (No. 71771220), and the Major Program of the National Social Science Foundation of China  (No. 18ZDA092).

\appendix
\section{Proof to Lemma \ref{lem:g}}
\label{app:lem}

\begin{proof}[Proof to Lemma \ref{lem:g}]
We start by rewriting the dynamics of $Y$ in \eqref{eq:BSDE-(Y,V)-comp} by
\begin{align}
	\dd Y (t) &= Y(t) \left( - 2 r  + \wz(t, \lam (t))^\top \Gamma (t, \lam  (t))^{-1}\wz(t, \lam (t)) \right) \dd t
	+ Y (t-) U (t, \lambda (t))^\top \dd {\widetilde N} (t),
\end{align}
which implies that $\{e^{2 r t} Y(t) \}_{t \in [0,T]}$ is a sub-martingale, and thus
\begin{align}
	e^{2 r t} Y (t)  \leq {\mathbb E}_t \bigg [ e^{2 r T} Y (T)  \bigg ],\qquad \forall \, t \in [0,T] .
\end{align}
Recall $Y(T)=2$ and \eqref{eq:Y-ansatz}, we have
\begin{align}
	2 e^{2 r (T-t)} \cdot e^{g (t, \lam (t))} = Y (t) \leq 2 e^{2 r (T-t)} , \quad \forall t \in [0, T]
\end{align}
and thereby
\begin{align}\label{eq:e^g-bound}
	0 \leq e^{g (t, \lam (t))} \leq 1 , \quad \forall t \in [0, T] .
\end{align}
Moreover, the lower bound of $Y$ established in \eqref{eq:Y-lowerbound} leads to a strictly positive lower bound for $e^{g (t, \lambda (t))}$, that is,
\begin{align}\label{eq:e^g-lowerbound}
e^{g (t, \lambda (t))} \geq \frac{\delta}{2} e^{-2 r T} := \epsilon , \quad \forall t \in [0, T] .
\end{align}
This completes the proof of $\epsilon \leq e^{g (t, \lambda (t))} \leq 1$, for any $t \in [0, T]$.

Next we focus on the case $t=0$. Note that \eqref{eq:e^g-lowerbound} holds for any $t \in [0, T]$. Hence, it remains to show that $e^{g (0, \lam_0)} < 1$. To that end, we recall \eqref{eq:PDE-g} and apply It\^o's formula to $e^{g (t, \lam (t))}$:
\begin{align}
	\dd e^{g (t, \lam (t))} &= e^{g (t, \lam (t))} \big [ g_t(t, \lam (t))
	+ g_\lambda^\top(t,\lam (t)) \, \alpha (\lambda_{\infty} - \lambda (t) )
	+ U (t, \lambda (t))^\top \lambda (t) \big ] \dd t
	+ e^{g (t, \lam (t))} U (t, \lambda (t))^\top \dd {\widetilde N} (t) \\
	&= e^{g (t, \lam (t))} \wz(t, \lambda (t))^\top \Gamma (t, \lambda (t))^{-1} \wz(t, \lambda (t)) \dd t
	+ e^{g (t, \lam (t))} U (t, \lambda (t))^\top \dd {\widetilde N} (t) .
\end{align}
Since $e^{g (t, \lam (t))} U (t, \lambda (t))^\top$ is bounded, the jump component in the above equation is a martingale. Thus,
integrating from $t_1$ to $t_2$ and conditioning on $t_1$, where $0 \leq t_1 \leq t_2 \leq T$, we have
\begin{align}\label{eq:e^g}
	0 \leq e^{g (t_2, \lam (t_2))} - e^{g (t_1, \lam (t_1))}
	= {\mathbb E}_{t_1} \bigg [ \int^{t_2}_{t_1} e^{g (s, \lam (s))} \wz(s, \lambda (s))^\top \Gamma (s, \lambda (s))^{-1}
	\wz(s, \lambda (s)) \dd s \bigg ] \leq 1 .
\end{align}
Hence, we have $e^{g (t, \lam (t))} \geq e^{g (0, \lam_0)}$, for any $t \in [0, T]$. If $e^{g (0, \lam_0)} = 1$,
it then follows from \eqref{eq:e^g-bound} that $e^{g (t, \lam (t))} \equiv 1$, for any $t \in [0, T]$. In that case, from \eqref{eq:e^g},
we obtain
\begin{align}
	{\mathbb E} \bigg [ \int^{T}_{0} \wz(s, \lambda (s))^\top \Gamma (s, \lambda (s))^{-1}
	\wz(s, \lambda (s)) \dd s \bigg ] = 0 .
\end{align}
On the other hand, if $e^{g (t, \lam (t))} \equiv 1$, then
\begin{align}
	U (t, \lam (t)) \equiv - {\bf 1}_m =(-1, \ldots, -1)^\top \in \Rb^m , \quad \Gamma (t, \lam (t)) \equiv \sigma \sigma^\top , \qquad \forall t \in[0, T] .
\end{align}
Recall that the non-degeneracy of the precision matrix $(\sigma \sigma^\top)^{-1}$ is postulated. The above results lead to
\begin{align}
	\wz(t, \lambda (t)) = B
	- \int_{(-1, \infty)^m} \eta (z)\, \mathrm{Diag}[\lam (t) \bullet \nu (\dd z)] \, {\bf 1}_m = {\bf 0}_k , \qquad \mathbb{P}\mbox{-a.s.}, \quad \forall t \in [0, T] ,
\end{align}
which cannot hold for all $t$ since $\lam$ is a stochastic  process.
Therefore, $e^{g (0, \lam_0)} < 1$ must hold true.
\end{proof}

\section{Proof of Lemma \ref{lem:g-solvability}}
\label{app:pde}

\begin{proof}[Proof to Lemma \ref{lem:g-solvability}]
%In this section, we show that there exists a unique solution to the non-linear differential equation \eqref{eq:PDE-g}.

%For convenience, we copy \eqref{eq:PDE-g} of $g: [0,T] \times \Rb_+^m \to \Rb$ below:
%\begin{align}
%	\begin{cases}
%		g_t(t, \lam) + g_\lambda(t,\lam)^\top \, \alpha (\lambda_{\infty} - \lambda )
%		+ U (t, \lambda)^\top \lambda = \wz(t, \lambda)^\top \Gamma (t, \lambda)^{-1} \wz(t, \lambda), & t < T, \\
%		g (t, \cdot) = 0, & t=T .	
%	\end{cases}
%\end{align}
To analyze \eqref{eq:PDE-g}, we apply exponential transformation to study ${\widetilde g} (t, \lambda) : = e^{g (t, \lambda)}$.
Recalling \eqref{eq:e^g}, we have
\begin{align}
{\widetilde g} (t, \lambda) = \ e^{g (t, \lam)}
	=& \ 1- {\mathbb E}_{t,\lam} \bigg [ \int^{T}_{t} e^{g (s, \lam (s))} \wz(s, \lambda (s))^\top \Gamma (s, \lambda (s))^{-1}
	\wz(s, \lambda (s)) \, \dd s \bigg ] \\
=& \ 1- {\mathbb E}_{t,\lam} \bigg [ \int^{T}_{t} {\widetilde g} (s, \lam (s)) \wz^{\widetilde g}(s, \lambda (s))^\top \Gamma^{\widetilde g} (s, \lambda (s))^{-1}
	\wz^{\widetilde g} (s, \lambda (s)) \, \dd s \bigg ] , \label{eq:wg}
\end{align}
where we use $\Eb_{t, \lam}$, instead of $\Eb_t$, to emphasize that the expectation is take under the condition $\lam(t) = \lam$, and the functions in the integrand are defined by
\begin{align}
 \Gamma^{\widetilde g} (t, \lambda) :=& \ \sig \sig^\top + \int_{(-1, \infty)^m} \; \eta(z) \, \mathrm{Diag}[(U^{\widetilde g} (t, \lambda) + \bm{1}_m) \bullet \lam \bullet \nu(\dd z)] \, \eta(z)^\top,
\label{eq:Gamma(t,b)_wg} \\
\wz^{\widetilde g} (t, \lambda) :=& \ B + \int_{(-1, \infty)^m} \; \eta(z) \, \mathrm{Diag}[\lam \bullet \nu(\dd z)] \, U^{\widetilde g} (t, \lambda),
\label{eq:Z_hat_wg} \\
U^{\widetilde g} (t, \lambda) :=& \bigg ( \frac{{\widetilde g} (t, \, \lambda +\beta_{(1)})}{{\widetilde g} (t, \, \lambda)} - 1, \cdots, \frac{{\widetilde g} (t, \, \lambda +\beta_{(m)})}{{\widetilde g} (t, \, \lambda)} - 1 \bigg )^\top. \label{eq:U(b)_wg}
\end{align}

Next we apply Schauder's fixed point theorem to prove the existence of a solution to \eqref{eq:wg}.
To that end, we consider a Banach space ${\cal X} := {\mathcal S}_{{\cal F}, L} (0, T; \mathbb R)$
equipped with the sup norm $\| \cdot \|_{\cal X}$
\begin{align}
\| {\widetilde g} \|_{\cal X} : = {\mathbb E} \bigg [ \sup_{t \in [0, T]} e^{- L (T-t)} \big| {\widetilde g} (t, \lam (t)) \big| \bigg ] .
\end{align}
where $L$ is a positive constant.

Define a map ${\cal T}$ from ${\cal M} := {\mathcal S}_{{\cal F}, L} (0, T; [\epsilon,1])$ onto itself as follows:
\begin{align}
({\cal T} {\widetilde g}) (t)  : = 1- {\mathbb E}_{t, \lam} \bigg [ \int^{T}_{t} {\widetilde g} (s, \lam (s)) \wz^{\widetilde g}(s, \lambda (s))^\top \Gamma^{\widetilde g} (s, \lambda (s))^{-1}
	\wz^{\widetilde g} (s, \lambda (s)) \, \dd s \bigg ] .
\end{align}
Obviously, ${\cal M} = {\mathcal S}_{{\cal F}, L} (0, T; [\epsilon,1])$ is a non-empty, convex, and compact space.

In what follows, we show that the map ${\cal T}: {\cal M} \subset {\cal X} \rightarrow {\cal M}$ is continuous.
For such a purpose, let us choose any ${\widetilde g}_1, {\widetilde g}_2 \in {\cal M}$. Then, we have
{\footnotesize
	\begin{align}
		& \ \| ({\cal T} {\widetilde g}_1) - ({\cal T} {\widetilde g}_2) \|_{\cal X} \\
		& = {\mathbb E} \Bigg \{ \sup_{t \in [0, T]} e^{- L (T-t)} \left| {\mathbb E}_{t,\lam} \bigg [ \int^{T}_{t} {\widetilde g}_1 (s, \lam (s)) \wz^{\widetilde g_1}(s, \lambda (s))^\top \Gamma^{\widetilde g_1} (s, \lambda (s))^{-1}
		\wz^{\widetilde g_1} (s, \lambda (s)) \dd s \right. \\
		& \left. \qquad\qquad\qquad\qquad\qquad - \int^{T}_{t} {\widetilde g}_2 (s, \lam (s)) \wz^{\widetilde g_2} (s, \lambda (s))^\top \Gamma^{\widetilde g_2} (s, \lambda (s))^{-1}
		\wz^{\widetilde g_2} (s, \lambda (s))\dd s \bigg ] \right| \Bigg \} \\
		& \leq K_1 {\mathbb E} \Bigg \{ \sup_{t \in [0, T]}  e^{- L (T-t)} {\mathbb E}_{t,\lam} \bigg [ \int^{T}_{t} \big | {\widetilde g}_1 (s, \lam (s)) - {\widetilde g}_2 (s, \lam (s)) \big |  | \lambda (s) |^2 \dd s \bigg ] \Bigg \} \\
		& \leq K_1 {\mathbb E} \Bigg \{ \sup_{t \in [0, T]}  e^{- L (T-t)} {\mathbb E}_{t,\lam} \bigg [
		\sup_{s \in [0, T]} e^{-L (T-s)} \big | {\widetilde g}_1 (s, \lam (s)) - {\widetilde g}_2 (s, \lam (s)) \big |
		\int^{T}_{t} e^{L (T-s)} | \lambda (s) |^2 d s \bigg ] \Bigg \} \\
		& \leq K_1 {\mathbb E} \Bigg \{ \sup_{t \in [0, T]}  e^{- L (T-t)} \bigg \{ {\mathbb E}_{t,\lam} \bigg [
		\sup_{s \in [0, T]} e^{-L (T-s)} \big | {\widetilde g}_1 (s, \lam (s)) - {\widetilde g}_2 (s, \lam (s)) \big |^2 \bigg ]
		\bigg \}^{\frac{1}{2}}
		\bigg \{ {\mathbb E}_{t,\lam} \bigg [ \bigg ( \int^{T}_{t} e^{L (T-s)} | \lambda (s) |^2 d s \bigg )^2 \bigg ] \bigg \}^{\frac{1}{2}} \Bigg \} \\
		& \leq K_1 {\mathbb E} \Bigg \{ \sup_{t \in [0, T]}  e^{- L (T-t)} \bigg \{ {\mathbb E}_{t,\lam} \bigg [
		\sup_{s \in [0, T]} e^{-L (T-s)} \big | {\widetilde g}_1 (s, \lam (s)) - {\widetilde g}_2 (s, \lam (s)) \big |^2 \bigg ]
		\bigg \}^{\frac{1}{2}}
		\bigg \{ {\mathbb E}_{t,\lam} \bigg [ \sup_{s \in [t, T]} | \lambda (s) |^4 \bigg ] \bigg \}^{\frac{1}{2}}
		\bigg ( \int^{T}_{t} e^{L (T-s)} d s \bigg ) \Bigg \} \\
		%& \leq K {\mathbb E} \Bigg \{ \sup_{t \in [0, T]}  e^{- L (T-t)} \bigg \{ {\mathbb E}_{t,\lam} \bigg [
		%\sup_{s \in [0, T]} e^{-L (T-s)} \big | {\widetilde g}_1 (s, \lam (s)) - {\widetilde g}_2 (s, \lam (s)) \big |^2 \bigg ]
		%\bigg \}^{\frac{1}{2}}
		%\bigg \{ {\mathbb E}_{t,\lam} \bigg [ \sup_{s \in [t, T]} | \lambda (s) |^4 \bigg ] \bigg \}^{\frac{1}{2}}
		%\bigg ( \int^{T}_{t} e^{L (T-s)} d s \bigg ) \\
		%& \leq K {\mathbb E} \Bigg [ \sup_{t \in [0, T]} \frac{1}{L} (1- e^{- L (T-t)}) \bigg \{ {\mathbb E}_{t,\lam} \bigg [
		%\sup_{s \in [0, T]} e^{-L (T-s)} \big | {\widetilde g}_1 (s, \lam (s)) - {\widetilde g}_2 (s, \lam (s)) \big |^2 \bigg ]
		%\bigg \}^{\frac{1}{2}} |\lam (t)|^2 \Bigg ] \\
		& \leq \frac{K_1}{L} (1- e^{- L T})  \Bigg \{ {\mathbb E} \bigg [ \sup_{t \in [0, T]} {\mathbb E}_{t,\lam} \bigg [
		\sup_{s \in [0, T]} e^{-L (T-s)} \big | {\widetilde g}_1 (s, \lam (s)) - {\widetilde g}_2 (s, \lam (s)) \big |^2 \bigg ]^\frac{\mathfrak{p}}{2} \bigg ] \Bigg  \}^{\frac{1}{\mathfrak{p}}} \bigg \{ {\mathbb E} \bigg [ \sup_{t \in [0, T]} {\mathbb E}_{t,\lam} \bigg [ \sup_{s \in [t, T]} | \lambda (s) |^4 \bigg ]^{\frac{\mathfrak{q}}{2}} \bigg ] \bigg \}^{\frac{1}{\mathfrak{q}}} ,
		%& \leq  K \| {\widetilde g}_1 - {\widetilde g}_2 \|_1 {\mathbb E} \bigg \{ \sup_{t \in [0, T]}  e^{- L (T-t)} \int^{T}_{t} e^{L(T-s)} {\mathbb E}_{t, \lam} \big [|\lambda (s)|^2 \big ]  d s \bigg \} \\
		%& \leq  K \| {\widetilde g_1} - {\widetilde g_2} \|_1 {\mathbb E} \Bigg \{ \sup_{t \in [0, T]} |\lambda (t)|^2   e^{- L (T-t)} \int^{T}_{t} e^{L(T-s)} d s \bigg \} \\
		%&\leq \frac{K}{L} (1 - e^{-LT}) \| {\widetilde g_1} - {\widetilde g_2} \|_1 {\mathbb E} \bigg \{ \sup_{t \in [0, T]} |\lam (t)|^2 \bigg \} \\
		%&\leq \frac{1}{2} \| {\widetilde g}_1 - {\widetilde g}_2 \|_1 ,
	\end{align}
}
where $\mathfrak{p}$ and $\mathfrak{q}$ are two positive constants such that $1/\mathfrak{p} + 1/\mathfrak{q} = 1$
and $\mathfrak{p} > 2$.

Next we apply the Doob martingale inequality and derive
\begin{align}
	& \Bigg \{ {\mathbb E} \bigg [ \sup_{t \in [0, T]} {\mathbb E}_{t, \lam} \bigg [
	\sup_{s \in [0, T]} e^{-L (T-s)} \big | {\widetilde g}_1 (s, \lam (s)) - {\widetilde g}_2 (s, \lam (s)) \big |^2 \bigg ]^\frac{\mathfrak{p}}{2} \bigg ] \Bigg  \}^{\frac{1}{\mathfrak{p}}} \\
	& \leq K_2 \Bigg \{ {\mathbb E} \bigg [ {\mathbb E}_{T} \bigg [
	\sup_{s \in [0, T]} e^{-L (T-s)} \big | {\widetilde g}_1 (s, \lam (s)) - {\widetilde g}_2 (s, \lam (s)) \big |^2 \bigg ]^\frac{\mathfrak{p}}{2} \bigg ] \Bigg  \}^{\frac{1}{\mathfrak{p}}} \\
%	& = K_2 \Bigg \{ {\mathbb E} \bigg [
%	\sup_{s \in [0, T]} e^{- \frac{L \mathfrak{p}}{2} (T-s)} \big | {\widetilde g}_1 (s, \lam (s)) - {\widetilde g}_2 (s, \lam (s)) \big |^\mathfrak{p} \bigg ] \Bigg  \}^{\frac{1}{\mathfrak{p}}} \\
	& \leq K_2 \Bigg \{ {\mathbb E} \bigg [
	\sup_{s \in [0, T]} e^{- L (T-s)} \big | {\widetilde g}_1 (s, \lam (s)) - {\widetilde g}_2 (s, \lam (s)) \big |^\mathfrak{p} \bigg ] \Bigg  \}^{\frac{1}{\mathfrak{p}}} \\
	& \leq K_2 \Bigg \{ {\mathbb E} \bigg [
	\sup_{s \in [0, T]} e^{- L (T-s)} \big | {\widetilde g}_1 (s, \lam (s)) - {\widetilde g}_2 (s, \lam (s)) \big | \bigg ] \Bigg  \}^{\frac{1}{\mathfrak{p}}} .
\end{align}
On the other hand, since the SDE \eqref{eq:intensity} for the Hawkes process satisfies the Lipschitz and linear growth conditions, from the moment estimates in the standard theory of SDEs (see, e.g., Theorem 1.9.4 in \citet{platen2010SDE}), we can derive
\begin{align}
	\bigg \{ {\mathbb E} \bigg [ \sup_{t \in [0, T]} {\mathbb E}_{t,\lam} \bigg [ \sup_{s \in [t, T]} | \lambda (s) |^4 \bigg ]^{\frac{\mathfrak{q}}{2}} \bigg ] \bigg \}^{\frac{1}{\mathfrak{q}}} &\leq K_3 \bigg \{ {\mathbb E} \bigg [ \sup_{t \in [0, T]}
	(1 + |\lam (t)|^4)^{\frac{\mathfrak{q}}{2}} \bigg ] \bigg \}^{\frac{1}{\mathfrak{q}}} \\
	&\leq K_3 \bigg \{ 1+ {\mathbb E} \bigg [ \sup_{t \in [0, T]}
	|\lam (t)|^4 \bigg ] \bigg \}^{\frac{1}{2}} \\
	&\leq K_3 \big \{ 1+ |\lam_0|^4 \big \}^{\frac{1}{2}} .
\end{align}
Note that in all the above derivations, $K_1$, $K_2$, and $K_3$ are generic positive constants and their values may vary from line to line.

Therefore, combining the above derivations, we have
\begin{align}
	\| ({\cal T} {\widetilde g}_1) - ({\cal T} {\widetilde g}_2) \|_{\cal X} \leq C \{ \| {\widetilde g}_1 - {\widetilde g}_2 \|_{\cal X} \}^{\frac{1}{\mathfrak{p}}} ,
\end{align}
where $C$ is a positive constant depending on $K_1$, $K_2$, $K_3$, $L$, and $\lam_0$.
This indicates that ${\cal T}$ is a continuous map. Using Schauder's fixed point theorem,
we can conclude that \eqref{eq:wg} admits a solution ${\widetilde g} \in \mathcal{S}_{{\cal F}, L}(0,T; [\epsilon, 1])$. From the relationship between
$g$ and ${\widetilde g}$, we can confirm that \eqref{eq:PDE-g} admits a solution $g : = \log [{\widetilde g}]$.

%Noting that $\| {\widetilde g}_1 - {\widetilde g}_2 \| \leq 1$ and choosing a sufficient large $L$ guarantees that there exists $C < 1$ such that
%\begin{align}
%\| ({\cal T} {\widetilde g}_1) - ({\cal T} {\widetilde g}_2) \|
%\leq C \{ \| {\widetilde g}_1 - {\widetilde g}_2 \| \wedge 1 \}^{\frac{1}{p}}
%: = \psi \left( \| {\widetilde g}_1 - {\widetilde g}_2 \| \right) .
%\end{align}
%The function $\psi (x) = C (x \wedge 1)^{\frac{1}{p}}$ satisfies the following two conditions:
%\begin{enumerate}
%\item $\psi (x) < x$, for any $x > 0$; [This is wrong]
%\item $\psi (x)$ is upper semi-continuous (since it is continuous).
%\end{enumerate}

Next we show the uniqueness of the solution $\widetilde{g}$. For that purpose, recall that the quadratic-loss minimization problem
\eqref{pro:qlm} belongs to the class of stochastic LQ control problems, which admits a unique optimal control/strategy (see \cite{zhang2020lq}). Specifically, for \eqref{pro:qlm}, the unique optimal strategy $\pi^*_c$ is given by \eqref{eq:optimal-control}. As in Step 3 of the proof for Theorem \ref{thm:op}, we can show that (i) $\pi^*_c \in {\cal L}^2_{\cal F} (0, T; {\mathbb R}^k)$; (ii) the corresponding wealth process, denoted by ${\widehat X}^*$, is square-integrable
at the terminal time, i.e., ${\mathbb E} [({\widehat X}^* (T))^2] < \infty$.

Combining Assertion (i) and the uniqueness of $\pi^*_c$, we have that the wealth equation \eqref{eq:dX_hat} associated with $\pi^*_c$ admits a unique solution ${\widehat X}^* (\cdot) \in {\cal S}^2_{\cal F} (0, T; {\mathbb R})$.
On the other hand, combining Assertion (ii) and the uniqueness of ${\widehat X}^*$ implies that the adjoint equation \eqref{eq:adjoint} admits a unique solution such that
$(p^*, q^*, u^*) \in {\cal S}^2_{\cal F} (0, T; {\mathbb R}) \times {\cal L}^2_{\cal F} (0, T; {\mathbb R}^n)
\times {\cal L}^{2,N}_{\cal F} (0, T; {\mathbb R}^m)$.

Therefore, \eqref{eq:PDE-g} must have a unique solution. Otherwise, suppose that there exist two solutions $g_1$ and $g_2$ to \eqref{eq:PDE-g}. Denote by $Y_i (t) : = 2 e^{2 r (T - t) + g_i (t, \, \lambda (t)) }$, for $i =1, 2$, as
defined in \eqref{eq:Y-ansatz}. Following the derivations in Section 3, we can show that both $p^*_1 := Y_1
{\widehat X}^*$ and $p^*_2 := Y_2 {\widehat X}^*$ are the solutions to the adjoint equation \eqref{eq:adjoint}.
This is a contradiction and violates the uniqueness of $p^*$. This completes the proof of the uniqueness.
\end{proof}

\section{Connection with the HJB equation approach}
\label{app:hjb}

In this appendix, we briefly discuss the connection between the stochastic maximum approach and the HJB equation approach, since both approaches can be applied to solve the quadratic-loss minimization problem \eqref{pro:qlm}.
We refer interested readers to \cite{ait2016portfolio}, \cite{cao2019optimal}, and \cite{liu2021household} for the 
applications of the HJB approach to optimal investment problems under the Hawkes jump models.

Let us start by considering the dynamic version of the quadratic-loss minimization problem \eqref{pro:qlm} at $(t, x, \lam)$, defined as follows:
\begin{align}
	\label{pro:dynamic_qlm}
\Vc(t, x, \lam) := \min_{\pi \in \Ac} \; \Eb \big[ \wx^2(T) \big| \wx(t) = x, \, \lam(t) = \lam \big],
\end{align}
where the dynamics of $\wx$ is given by \eqref{eq:dX_hat}.

Firstly, we write down the HJB equation for problem \eqref{pro:qlm} (or problem \eqref{pro:dynamic_qlm} to be precise):
\begin{align}\label{eq:HJB}
\inf_{\pi \in \Rb} \,  {\cal L}^\pi [\Vc (t, x, \lambda)]  = 0 ,
\end{align}
with the terminal condition $\Vc (T, x, \lambda) = x^2$.
The partial differential operator ${\cal L}^\pi$ acting on any smooth function $\varphi$ and admissible $\pi$
is defined by
\begin{align}
{\cal L}^\pi [\varphi (t, x, \lambda)] : =& \ \varphi_t (t, x, \lambda)
+ \big ( r x + \pi^\top B \big ) \varphi_x (t, x, \lambda)
+ \big ( \alpha \lambda_\infty + (\beta - \alpha) \lambda \big )^\top \varphi_\lambda (t, x, \lambda) \\
& + \frac{1}{2} \pi^\top \sigma \sigma^\top \pi \varphi_{xx} (t, x, \lambda) + \sum^m_{j = 1} \int_{(-1,\infty)}
\big [ \varphi (t, x + (\pi^\top \eta (z))_{(j)}, \lambda + \beta_{(j)}) \\
& - \varphi (t, x, \lambda) - \varphi_x (t, x, \lambda) (\pi^\top \eta (z))_{(j)}
- \beta_{(j)}^\top \varphi_\lam (t, x, \lambda) \big ] \lam_j \nu_j (\dd z_j)
\end{align}
with $(\pi^\top \eta (z))_{(j)}$ being the $j^{\text{th}}$ component of the vector $\pi^\top \eta (z)$ and dependent on
only the $j^{\text{th}}$ coordinate of $z$ (i.e., $z_j$), and $\beta_{(j)}$ being the $j^{\text{th}}$ column vector of the matrix $\beta$.

We try the following {\it ansatz} for the value function $\Vc$ in \eqref{pro:dynamic_qlm}:
\begin{align}
\Vc (t, x, \lambda) = e^{2 r (T-t) + \kappa (t, \lambda)} x^2 ,
\end{align}
with the terminal condition $\kappa (T, \lambda) = 0$.
Substituting this into \eqref{eq:HJB} and after some algebraic manipulation, we obtain
%\begin{align}
%& x^2 f_t + 2 x \pi^\top B +
%x^2 f_\lambda^\top \big ( \alpha \lambda_\infty + (\beta - \alpha) \lambda \big ) + \pi^\top \sigma \sigma^\top \pi \\
%& + \sum^m_{j = 1} \int_{(-1,\infty)}
%\big [ (x + (\pi^\top \eta (z))_{(j)})^2 e^{f (t, \lambda + \beta_{(j)})-f} - x^2 - 2 x (\pi^\top \eta (z))_{(j)}
%- x^2 \beta_{(j)}^\top f_\lam  \big ] \lam_j \nu_j (\dd z)
%\end{align}
\begin{align}
& \inf_{\pi\in\mathbb R} \bigg \{ 2 x \pi^\top B + \pi^\top \sigma \sigma^\top \pi + \sum^m_{j = 1} \int_{(-1,\infty)}
\big [ (\pi^\top \eta (z))_{(j)}^2 e^{\kappa (t, \lambda + \beta_{(j)})- \kappa} + 2 x (\pi^\top \eta (z))_{(j)} (e^{\kappa (t, \lambda + \beta_{(j)})- \kappa} - 1) \big ] \lam_j \nu_j (\dd z_j) \bigg \} \\
& \qquad + x^2 \bigg\{ \kappa_t +
\kappa_\lambda^\top \big ( \alpha \lambda_\infty + (\beta - \alpha) \lambda \big ) + \sum^m_{j = 1} \int_{(-1,\infty)}
\big [ e^{\kappa (t, \lambda + \beta_{(j)})- \kappa} - 1
- \beta_{(j)}^\top \kappa_\lam  \big ] \lam_j \nu_j (\dd z_j) \bigg \} = 0 ,
\end{align}
where we have suppressed the arguments $(t, \lam)$ for $\kappa$ and its partial derivatives, except the post-jump ones.

The first-order condition to the above minimization problem yields
\begin{align}
2 x \, \wz^\kappa (t, \lambda)  + 2 \Gamma^\kappa (t, \lambda) \pi = {\bf 0_m} ,
\end{align}
where
\begin{align}
 \Gamma^\kappa (t, \lambda) :=& \ \sig \sig^\top + \int_{(-1, \infty)^m} \; \eta(z) \, \mathrm{Diag}[(U^\kappa (t, \lambda) + \bm{1}_m) \bullet \lam \bullet \nu(\dd z)] \, \eta(z)^\top,
\label{eq:Gamma(t,b)_f} \\
\wz^\kappa (t, \lambda) :=& \ B + \int_{(-1, \infty)^m} \; \eta(z) \, \mathrm{Diag}[\lam \bullet \nu(\dd z)] \, U^\kappa (t, \lambda),
\label{eq:Z_hat_f}
\end{align}
are defined similarly as \eqref{eq:Gamma(t,b)} and \eqref{eq:Z_hat} but with $U (t, \lambda)$ in \eqref{eq:U(b)} replaced by
\begin{align}
U^\kappa (t, \lambda) :=& \bigg ( e^{\kappa (t, \, \lambda +\beta_{(1)}) - \kappa (t, \, \lambda)} - 1,
\cdots, e^{\kappa (t, \, \lambda +\beta_{(m)}) - \kappa (t, \, \lambda)} - 1 \bigg )^\top. \label{eq:U(b)_f}
\end{align}
Thus, the optimal strategy can be represented in the following feedback form:
\begin{align}
\pi^*_c (t, x, \lam) = - \Gamma^\kappa (t, \lam)^{-1} \, \wz^\kappa (t, \lam) \; x .
\end{align}

Plugging $\pi^*_c (t, x, \lam)$ into \eqref{eq:HJB} gives
\begin{align}\label{eq:PDE-f}
\kappa_t(t, \lam) + \kappa_\lambda(t,\lam)^\top \, \alpha (\lambda_{\infty} - \lambda )
+ U^\kappa (t, \lambda)^\top \lambda = \wz^\kappa (t, \lambda)^\top \Gamma^\kappa (t, \lambda)^{-1} \wz^\kappa (t, \lambda) .
\end{align}
Indeed, the above equation \eqref{eq:PDE-f} is exactly the same as \eqref{eq:PDE-g}.
It then follows from the existence and uniqueness result in Appendix \ref{app:pde}
that $\kappa (t, \lam) = g (t, \lam)$, for any $(t, \lam) \in [0, T] \times \mathbb R^m_+$. Therefore, the stochastic maximum principle approach and the HJB equation approach lead to the same solution for problem \eqref{pro:qlm}.

Indeed, such a conclusion is guaranteed by the relationship between the two approaches. That is,
the value function and the adjoint process are related as follows:
\begin{align}
 p^* (t) &= 2 e^{2 r (T-t) + g (t, \lambda (t))} {\widehat X}^* (t)
= \Vc_x (t, {\widehat X}^* (t), \lam (t)) , \\
 q^* (t) &= \sigma^\top \pi^*_c (t) Y (t) = \sigma^\top \pi^*_c (t) \Vc_{xx} (t, {\widehat X}^* (t), \lam (t)) , \\
 u^*_l (t, z_l) &= {\widehat X}^* (t-) V_{l} (t)
+ \sum^k_{i = 1}  \pi^*_{ci} (t) \eta_{il} (z_l) \, \big ( Y (t-)
+ V_{l} (t) \big ) \\
&= \Vc_x (t, {\widehat X}^* (t-) + (\pi^*_c (t)^\top \eta (z))_{(l)}, \lam (t) + \beta_{(l)})
- \Vc_x (t, {\widehat X}^* (t-), \lam (t)) .
\end{align}
Moreover, substituting $p^* (t) = \Vc_x (t, {\widehat X}^* (t), \lam (t))$ into the adjoint equation \eqref{eq:adjoint} and matching the drift gives the relationship between the Hamiltonian and the value function as below:
\begin{align}
{\cal L}^{\pi^*_c} [\Vc_x (t, {\widehat X}^* (t), \lam (t))] = - {\cal H}_x (t, {\widehat X}^* (t), \pi^*_c (t), p^* (t), q^* (t), u^* (t)) .
\end{align}
One can refer to Theorem 3.1 in \cite{framstad2004mp} and Theorem 5.6 in \cite{oksendal2005applied} for the relationship between the stochastic maximum principle and dynamic programming principle in general jump-diffusion control systems.

%\clearpage


\begin{thebibliography}{}

\bibitem[A{\"\i}t-Sahalia et~al., 2015]{ait2015modeling}
A{\"\i}t-Sahalia, Y., Cacho-Diaz, J., and Laeven, R.~J. (2015).
\newblock Modeling financial contagion using mutually exciting jump processes.
\newblock {\em Journal of Financial Economics}, 117(3):585--606.

\bibitem[A{\"\i}t-Sahalia and Hurd, 2016]{ait2016portfolio}
A{\"\i}t-Sahalia, Y. and Hurd, T.~R. (2016).
\newblock Portfolio choice in markets with contagion.
\newblock {\em Journal of Financial Econometrics}, 14(1):1--28.

\bibitem[A\"it-Sahalia et~al., 2014]{ait2014mutual}
A\"it-Sahalia, Y., Laeven, R.~J., and Pelizzon, L. (2014).
\newblock Mutual excitation in Eurozone sovereign CDS.
\newblock {\em Journal of Econometrics}, 183(2):151--167.

\bibitem[Azizpour et~al., 2018]{azizpour2018exploring}
Azizpour, S., Giesecke, K., and Schwenkler, G. (2018).
\newblock Exploring the sources of default clustering.
\newblock {\em Journal of Financial Economics}, 129(1):154--183.

\bibitem[Bacry et~al., 2015]{bacry2015hawkes}
Bacry, E., Mastromatteo, I., and Muzy, J.-F. (2015).
\newblock Hawkes processes in finance.
\newblock {\em Market Microstructure and Liquidity}, 1(01):1550005.

\bibitem[Bajeux-Besnainou and Portait, 1998]{bajeux1998dynamic}
Bajeux-Besnainou, I. and Portait, R. (1998).
\newblock Dynamic asset allocation in a mean-variance framework.
\newblock {\em Management Science}, 44(11-part-2):S79--S95.

\bibitem[Basak and Chabakauri, 2010]{basak2010dynamic}
Basak, S. and Chabakauri, G. (2010).
\newblock Dynamic mean-variance asset allocation.
\newblock {\em Review of Financial Studies}, 23(8):2970--3016.

\bibitem[Bj{\"o}rk et~al., 2014]{bjork2014mean}
Bj{\"o}rk, T., Murgoci, A., and Zhou, X.~Y. (2014).
\newblock Mean--variance portfolio optimization with state-dependent risk
  aversion.
\newblock {\em Mathematical Finance}, 24(1):1--24.

\bibitem[Bo et~al., 2019]{bo2019credit}
Bo, L., Capponi, A., and Chen, P.-C. (2019).
\newblock Credit portfolio selection with decaying contagion intensities.
\newblock {\em Mathematical Finance}, 29(1):137--173.

\bibitem[Bowsher, 2007]{bowsher2007modelling}
Bowsher, C.~G. (2007).
\newblock Modelling security market events in continuous time: Intensity based,
  multivariate point process models.
\newblock {\em Journal of Econometrics}, 141(2):876--912.

\bibitem[Cao et~al., 2020]{cao2019optimal}
Cao, J., Landriault, D., and Li, B. (2020).
\newblock Optimal reinsurance-investment strategies for dynamic contagion
  claims.
\newblock {\em Insurance: Mathematics and Economics}, 93:206--215.

\bibitem[Chavez-Demoulin et~al., 2005]{chavez2005estimating}
Chavez-Demoulin, V., Davison, A.~C., and McNeil, A.~J. (2005).
\newblock Estimating value-at-risk: A point process approach.
\newblock {\em Quantitative Finance}, 5(2):227--234.

\bibitem[Chavez-Demoulin and McGill, 2012]{chavez2012high}
Chavez-Demoulin, V. and McGill, J. (2012).
\newblock High-frequency financial data modeling using Hawkes processes.
\newblock {\em Journal of Banking \& Finance}, 36(12):3415--3426.

%\bibitem[Cohen and Elliott, 2015]{cohen2015stochastic}
%Cohen, S. N., Elliott, R. J. (2015).
%\newblock {\em Stochastic Calculus and Applications (Vol. 2)}.
%\newblock New York: Birkh\"auser.

\bibitem[Cont and Tankov, 2003]{tankov2003financial}
Cont, R. and Tankov, P. (2003).
\newblock {\em Financial Modelling with Jump Processes}.
\newblock Chapman and Hall/CRC.

\bibitem[Daley and Vere-Jones, 2007]{daley2007introduction}
Daley, D.~J. and Vere-Jones, D. (2007).
\newblock {\em An Introduction to the Theory of Point Processes, Volume II:
  General Theory and Structure}.
\newblock Springer Science \& Business Media.

\bibitem[Dassios and Zhao, 2011]{dassios2011dynamic}
Dassios, A. and Zhao, H. (2011).
\newblock A dynamic contagion process.
\newblock {\em Advances in Applied Probability}, 43(3):814--846.

\bibitem[Dassios and Zhao, 2012]{dassios2012ruin}
Dassios, A. and Zhao, H. (2012).
\newblock Ruin by dynamic contagion claims.
\newblock {\em Insurance: Mathematics and Economics}, 51(1):93--106.

\bibitem[Embrechts et~al., 2011]{embrechts2011multivariate}
Embrechts, P., Liniger, T., and Lin, L. (2011).
\newblock Multivariate Hawkes processes: An application to financial data.
\newblock {\em Journal of Applied Probability}, 48(A):367--378.

\bibitem[Errais et~al., 2010]{errais2010affine}
Errais, E., Giesecke, K., and Goldberg, L.~R. (2010).
\newblock Affine point processes and portfolio credit risk.
\newblock {\em SIAM Journal on Financial Mathematics}, 1(1):642--665.

\bibitem[Framstad et~al., 2004] {framstad2004mp}
Framstad, N. C., {\O}ksendal, B., and Sulem, A. (2004).
\newblock Sufficient stochastic maximum principle for the optimal control
of jump diffusions and applications to finance.
\newblock {\em Journal of Optimization Theory and Applications}, 121(1):77--98.

\bibitem[Gao et~al., 2018] {gao2018dp}
Gao, X., Zhou, X., and Zhu, L. (2018).
\newblock Transform analysis for Hawkes processes with applications in dark pool trading.
\newblock {\em Quantitative Finance}, 18(2): 265--282.

\bibitem[Hawkes, 1971]{hawkes1971spectra}
Hawkes, A.~G. (1971).
\newblock Spectra of some self-exciting and mutually exciting point processes.
\newblock {\em Biometrika}, 58(1):83--90.

\bibitem[Hawkes, 2018]{hawkes2018hawkes}
Hawkes, A.~G. (2018).
\newblock Hawkes processes and their applications to finance: A review.
\newblock {\em Quantitative Finance}, 18(2):193--198.


\bibitem[Karatzas and Shreve, 1998]{karatzas1998methods}
Karatzas, I. and Shreve, S.~E. (1998).
\newblock {\em Methods of Mathematical Finance}, volume~39.
\newblock Springer.

\bibitem[Kou, 2007]{kou2007jump}
Kou, S.~G. (2007).
\newblock Jump-diffusion models for asset pricing in financial engineering.
\newblock {\em Handbooks in Operations Research and Management Science},
  15:73--116.

\bibitem[Li and Ng, 2000]{li2000optimal}
Li, D. and Ng, W.-L. (2000).
\newblock Optimal dynamic portfolio selection: Multiperiod mean-variance
  formulation.
\newblock {\em Mathematical Finance}, 10(3):387--406.

\bibitem[Lim and Zhou, 2002]{lim2002mean}
Lim, A.~E. and Zhou, X.~Y. (2002).
\newblock Mean-variance portfolio selection with random parameters in a
  complete market.
\newblock {\em Mathematics of Operations Research}, 27(1):101--120.

\bibitem[Liu et~al., 2021]{liu2021household}
Liu, G., Jin, Z., and Li, S. (2021). Household lifetime strategies under a self-contagious market. 
\newblock {\em European Journal of Operational Research}, 288(3):935--952.

\bibitem[Markowitz, 1952]{markowitz1952portfolio}
Markowitz, H. (1952).
\newblock Portfolio selection.
\newblock {\em Journal of Finance}, 7(1):77--91.

\bibitem[{\O}ksendal and Sulem, 2019]{oksendal2005applied}
{\O}ksendal, B.~K. and Sulem, A. (2019).
\newblock {\em Applied Stochastic Control of Jump Diffusions}, 3rd edition.
\newblock Springer.

\bibitem[Platen and Bruti-Liberati, 2010]{platen2010SDE}
Platen, E., and Bruti-Liberati, N. (2010).
\newblock {\em Numerical Solution of Stochastic Differential Equations with Jumps in Finance (Vol. 64)}.
\newblock Springer Science \& Business Media.

\bibitem[Shen and Siu, 2013]{shen2013mv}
Shen, Y. and Siu, T. K. (2013).
\newblock The maximum principle for a jump-diffusion mean--field model
and its application to the mean--variance problem.
\newblock {\em Nonlinear Analysis: Theory, Methods \& Applications}, 86:58--73.

\bibitem[Shen and Zeng, 2014]{shen2014optimal}
Shen, Y. and Zeng, Y. (2014).
\newblock Optimal investment--reinsurance with delay for mean--variance
  insurers: A maximum principle approach.
\newblock {\em Insurance: Mathematics and Economics}, 57:1--12.

\bibitem[Shen and Zou, 2021]{shen2021mean}
Shen, Y. and Zou, B. (2021). Mean-variance investment and risk control strategies -- A time-consistent approach via a forward auxiliary process.
\newblock {\em Insurance: Mathematics and Economics}, 97:68--80.

\bibitem[Yin and Zhou, 2004]{yin2004markowitz}
Yin, G. and Zhou, X.~Y. (2004).
\newblock Markowitz's mean-variance portfolio selection with regime switching:
  From discrete-time models to their continuous-time limits.
\newblock {\em IEEE Transactions on Automatic Control}, 49(3):349--360.

\bibitem[Yong and Zhou, 1999]{yong1999stochastic}
Yong, J. and Zhou, X.~Y. (1999).
\newblock {\em Stochastic Controls: Hamiltonian Systems and HJB Equations},
  volume~43.
\newblock Springer Science \& Business Media.

\bibitem[Zhang et~al., 2020]{zhang2020lq}
Zhang, F., Dong, Y. and Meng, Q. (2020).
\newblock Backward stochastic Riccati equation with jumps associated with stochastic linear quadratic optimal control with jumps and random coefficients.
\newblock {\em SIAM Journal on Control and Optimization}, 58(1):393--424.

\bibitem[Zhang et~al., 2018]{zhang2018bond}
Zhang, X., Xiong, J. and Shen, Y. (2018).
\newblock Bond and option pricing for interest rate model with clustering effects.
\newblock {\em Quantitative Finance}, 18(6):969--981.

\bibitem[Zhou and Li, 2000]{zhou2000continuous}
Zhou, X.~Y. and Li, D. (2000).
\newblock Continuous-time mean-variance portfolio selection: A stochastic LQ
  framework.
\newblock {\em Applied Mathematics and Optimization}, 42(1):19--33.

\bibitem[Zhou and Yin, 2003]{zhou2003markowitz}
Zhou, X.~Y. and Yin, G. (2003).
\newblock Markowitz's mean-variance portfolio selection with regime switching:
  A continuous-time model.
\newblock {\em SIAM Journal on Control and Optimization}, 42(4):1466--1482.

\bibitem[Zhu, 2013]{zhu2013ruin}
Zhu, L. (2013).
\newblock Ruin probabilities for risk processes with non-stationary arrivals
  and subexponential claims.
\newblock {\em Insurance: Mathematics and Economics}, 53(3):544--550.

\end{thebibliography}
\end{document}